\documentclass[11pt, a4paper]{article}

\usepackage[margin=3cm]{geometry}
\usepackage{amsmath,amsthm,amssymb}
\usepackage{xcolor}
\usepackage{tikz}
\usetikzlibrary{arrows, arrows.meta, positioning, calc, fit, shapes.geometric}
\usepackage{adjustbox}
\usepackage{enumitem}
\usepackage{xspace}

\usepackage{thmtools}
\makeatletter
\@ifundefined{newcounteralias}{}{%
  \renewcommand\thmt@autorefsetup{%
    \@xa\def\csname\thmt@envname autorefname\@xa\endcsname\@xa{\thmt@thmname}}%
}
\makeatother

\usepackage{hyperref} 
\usepackage[capitalise]{cleveref}

\newcommand{\mygreen}{green!70!black}
\newcommand{\myred}{red!80!black}
\def\redb{\textcolor{\myred}{\circ}}
\def\greenb{\textcolor{\mygreen}{\bullet}}

\definecolor{myblue}{HTML}{0072B2}
\definecolor{myorange}{HTML}{E69F00}

\definecolor{defblue}{rgb}{0, 0.4, 0.796}
\newcommand{\defi}[1]{\textcolor{defblue}{\emph{#1}}}

\newtheorem{theorem}{Theorem}
\newtheorem{lemma}[theorem]{Lemma}
\newtheorem{observation}[theorem]{Observation}
\newtheorem{corollary}[theorem]{Corollary} 

\newtheorem{definition}[theorem]{Definition}
\newtheorem{remark}[theorem]{Remark}

\crefname{observation}{Observation}{Observations}
\Crefname{observation}{Observation}{Observations}

\newcommand{\opt}{\text{opt}\xspace}
\newcommand{\kopt}{\text{$k$-\opt}\xspace}

\tikzset{
  red ribbon/.style={draw=red,fill=red!7,line width=.9pt,
                     line join=round},
  blue ribbon/.style={draw=blue,fill=blue!7,line width=.9pt,
                      line join=round},
  tour edge/.style={draw=black!85,line width=1pt},
  endpoint/.style={font=\normalsize,inner sep=1pt,anchor=south},
  transition/.style={-{Stealth[length=2.6mm,width=1.6mm]},
                      draw=black!80,line width=.85pt},
}
\newcommand{\pathblock}[4]{%
  \path[#4]
    (#1,{#3+.16})
    .. controls ({#1+.25},{#3+.21}) and ({#2-.34},{#3+.12}) .. (#2,{#3+.16})
    -- (#2,{#3-.16})
    .. controls ({#2-.3},{#3-.20}) and ({#1+.26},{#3-.12}) .. (#1,{#3-.16})
    -- cycle;
}
\newcommand{\leftstub}[2]{%
  \path[red ribbon]
    (.12,{#2+.16})
    .. controls (.33,{#2+.13}) and ({#1-.28},{#2+.19}) .. (#1,{#2+.16})
    -- (#1,{#2-.16})
    .. controls ({#1-.25},{#2-.13}) and (.33,{#2-.19}) .. (.12,{#2-.16})
    .. controls (-.10,{#2-.105}) and (.30,{#2-.02}) .. (.12,{#2+.005})
    .. controls (-.07,{#2+.065}) and (.30,{#2+.115}) .. (.12,{#2+.16})
    -- cycle;
}
\newcommand{\rightstub}[2]{%
  \path[red ribbon]
    (#1,{#2+.16})
    .. controls ({#1+.27},{#2+.19}) and (6.00,{#2+.13}) .. (6.18,{#2+.16})
    .. controls (6.40,{#2+.105}) and (6.00,{#2+.04}) .. (6.18,{#2+.005})
    .. controls (6.40,{#2-.07}) and (6.00,{#2-.115}) .. (6.18,{#2-.16})
    .. controls (6.00,{#2-.13}) and ({#1+.27},{#2-.19}) .. (#1,{#2-.16})
    -- cycle;
}
\newcommand{\directionmarks}[1]{%
  \foreach \x in {1.48,1.92,2.36,2.80,3.24,3.68,4.12,4.56,5.00}{%
    \ifnum#1=1
      \draw[draw=blue,-{Stealth[length=1.7mm,width=1.25mm]},
            line width=.65pt] (\x-.12,0)--(\x+.12,0);
    \else
      \draw[draw=blue,-{Stealth[length=1.7mm,width=1.25mm]},
            line width=.65pt] (\x+.12,0)--(\x-.12,0);
    \fi
  }%
}

\newcommand{\email}[1]{\href{mailto:#1}{\texttt{#1}}}

\title{Closing All-Exp Gaps in Local Search for the Traveling Salesman Problem and Maximum Constraint Assignment}
\author{Sophia Heimann\thanks{Hertz Chair for Algorithms and Optimization, University of Bonn, Germany (\email{sheimann@uni-bonn.de})} 
\and Hung P. Hoang\thanks{Algorithms and Complexity Group, Faculty of Informatics, TU Wien, Austria, (\email{phoang@ac.tuwien.ac.at}) funded by the Austrian Science Fund (FWF, project 10.55776/ESP1136425)}
    \and Stefan Hougardy\thanks{Research Institute for Discrete Mathematics and Hausdorff Center for Mathematics, University of Bonn, Germany (\email{hougardy@or.uni-bonn.de}) funded by the Deutsche Forschungsgemeinschaft (DFG, German Research Foundation) under Germany's Excellence Strategy -- EXC-2047/1 -- 390685813}}
\date{November 2, 2026}

\begin{document}

\maketitle

\begin{abstract}
We prove the all-exp property for the Traveling Salesman Problem with the 2-\opt neighborhood, and for $(3,2,2)$- and $(2,3,2)$-Maximum Constraint Assignment with the flip neighborhood. The all-exp property states that infinitely many instances have an initial solution from which \emph{every} sequence of improving moves to a local optimum, and hence every run of local search, has exponential length. For the Traveling Salesman Problem, the all-exp property was previously known for every \kopt neighborhood with $k \ge 3$, so our result for 2-\opt completes the picture for all \kopt neighborhoods. In $(p,q,r)$-Maximum Constraint Assignment, each constraint involves at most $p$ variables, each variable occurs in at most $q$ constraints, and each domain has size $r$. Decreasing any of the parameters of $(3,2,2)$ or $(2,3,2)$ yields a problem in which every sequence of improving flips has polynomial length, so both Maximum Constraint Assignment results are tight. All three results settle long-standing open problems. Our proofs build on a common construction idea, which reveals a close connection between the local-search dynamics of the three problems.
\end{abstract}

\newpage
\tableofcontents

\section{Introduction}

Local search is a fundamental algorithmic approach for solving combinatorial optimization problems. Starting from an initial solution, a local-search algorithm repeatedly replaces the current solution with a better solution from a well-defined neighborhood, until a local optimum is reached. Despite the simplicity of this approach and its excellent practical performance for many problems, the worst-case behavior of local search can be surprisingly complicated. Johnson, Papadimitriou, and Yannakakis introduced the complexity class PLS to capture the difficulty of finding locally optimal solutions~\cite{JPY1988}. Schäffer and Yannakakis later defined tight PLS-reductions and showed that problems proven PLS-complete via such reductions exhibit particularly strong forms of hardness~\cite{SY1991}.

One such form of hardness concerns not the complexity of finding a local optimum but the number of iterations performed by local search. A first indication of such hardness is the \emph{is-exp property}: there are infinitely many instances, each with an initial solution, from which at least one sequence of improving moves to a local optimum has exponential length. This alone, however, does not imply that local search necessarily takes exponentially many iterations. Whenever several improving neighbors are available, a \emph{pivot rule} determines which one is chosen, and a suitable pivot rule may avoid the long sequence and reach a local optimum much faster. The strongest possible negative result is that a problem has the \emph{all-exp property}: there are infinitely many instances, each with an initial solution, such that \emph{every} sequence of improving moves from this initial solution to a local optimum has exponential length. Equivalently, even under an optimal pivot rule, local search started from these initial solutions requires exponentially many iterations. Every problem shown to be PLS-complete via a tight PLS-reduction has the all-exp property~\cite{SY1991}. However, the all-exp property can also be established by direct constructions, which work even for problems not known to be PLS-complete.

One of the most prominent local-search heuristics is \kopt for the Traveling Salesman Problem (TSP). 
The case $k=2$, called 2-\opt, was already studied by Croes in~1958~\cite{Cro1958} and is the oldest and simplest local-search method for the TSP. In each iteration, 2-\opt replaces two edges of the current tour with two other edges such that the resulting tour is shorter. The \kopt heuristic generalizes 2-\opt by replacing up to $k$ edges in each iteration. 
Both 2-\opt and \kopt for general $k$ have been studied extensively, theoretically as well as experimentally; see, for example,~\cite{Ben1992,JM1997,CKT1999,JM2002,HZZ2020,Zho2020}.

The TSP with the 2-\opt neighborhood has long been known to have the is-exp property; the first construction establishing this property is due to Lueker~\cite{Lue1975} (see also~\cite[page 1999]{CKT1999}). However, the is-exp property does not rule out a pivot rule that always reaches a local optimum after polynomially many iterations, and whether the TSP with the 2-\opt neighborhood also has the all-exp property has remained open.
The first all-exp result for the TSP with the \kopt neighborhood is due to Krentel~\cite{Kre1989} and holds for all sufficiently large constants $k$.
Hoang and Hougardy~\cite{HH2023} later showed that Krentel's construction requires $k\ge 14{,}208$ and, 
by refining the construction, reduced this bound to $k\ge 704$. In~2024, Heimann, Hoang, and Hougardy~\cite{HHH2024} substantially improved the bound and proved that the TSP with the \kopt neighborhood has the all-exp property for all $k\ge 5$. This result 
is based on the fact that Max-Cut with the flip neighborhood, in which a single vertex moves to the other side of the cut, has the all-exp property on graphs of bounded degree. 
For general graphs, Schäffer and Yannakakis~\cite{SY1991} already showed in~1991 that Max-Cut with this neighborhood is PLS-complete
via a tight reduction and hence has the all-exp property. In~2010, Monien and Tscheuschner~\cite{MT2010} established the all-exp property for Max-Cut with the flip neighborhood even on graphs of maximum degree four; a much simpler proof of this result was later given by Michel and Scott~\cite{MS2023}.

Very recently, Heimann, Hoang, and Hougardy~\cite{HHH2026} proved the all-exp property for the TSP with the 3-\opt and 4-\opt neighborhoods, and also with the 2.5-\opt neighborhood, which lies strictly between the 2-\opt and 3-\opt neighborhoods and was introduced in~\cite{Ben1992}. Consequently, the TSP with the \kopt neighborhood was known to have the all-exp property for every $k\ge 3$, 
and only the most basic and practically important case, 2-\opt, remained open. 
In this paper, we finally settle this case.

\begin{restatable}{theorem}{thmtwoopt}
\label{thm:2opt}
   The Traveling Salesman Problem with the 2-\opt neighborhood has the all-exp property. 
\end{restatable}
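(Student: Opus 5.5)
We prove \cref{thm:2opt} by a direct construction rather than a tight PLS-reduction: we simulate a counter-like exponential process inside a family of Euclidean-free (general, metric not required) TSP instances, so that from a specific initial tour every improving 2-\opt sequence is forced to emulate the process. The abstract source for the exponential behaviour will be a chain of ``binary counter'' gadgets $G_1,\dots,G_n$, in the spirit of the bounded-degree Max-Cut and Maximum Constraint Assignment constructions used for \kopt with larger $k$ in~\cite{HHH2024,HHH2026}. Each gadget $G_i$ has a small number of local states, and gadget $G_{i+1}$ can only complete one ``cycle'' after $G_i$ has been reset twice. This yields $\Omega(2^n)$ forced steps. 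The whole point is to realise each gadget state change as exactly one improving 2-exchange, and to make every other 2-exchange non-improving.

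\textbf{Step 1: gadget states as tour segments.} A 2-\opt move reverses a contiguous segment of the tour. We therefore encode the state of gadget $G_i$ by the \emph{orientation} of a designated tour segment $P_i$ (the ribbons suggested by the figures: a path block traversed left-to-right or right-to-left). Flipping $G_i$ means reversing $P_i$, which removes the two tour edges at its endpoints and inserts the two crossing ones. The gain of this move depends only on the four endpoint cities. Hence we let the endpoints of $P_i$ be shared with, or adjacent in the tour to, endpoints of $P_{i-1}$ and $P_{i+1}$. In this way the orientation of the neighbouring segments determines whether reversing $P_i$ is improving. This is exactly the locality of a constraint of arity at most $3$ with each variable in at most $2$ constraints, matching the $(3,2,2)$/$(2,3,2)$ dynamics mentioned in the abstract.

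\textbf{Step 2: edge weights forcing the counter.} We assign weights, with geometrically scaled magnitudes $W_i$ so that gadget $i$'s preferences dominate those of gadget $i+1$ where needed, such that the following hold. Reversing $P_i$ is improving exactly when the counter rules permit. The gain is at most polynomially small, so potential arguments cannot shortcut. Most importantly, \emph{every other} 2-exchange (two edges taken from different segments, or from inside a segment) is strictly worsening in every reachable tour. Inner edges of each $P_i$ are made very cheap (weight $0$) and all non-tour edges between interior vertices very expensive, so any exchange touching an interior edge loses a large amount. We also need to control reversals of unions of consecutive segments, e.g.\ $P_i P_{i+1}$ reversed as one block. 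For these we choose the endpoint weights so that such a move is never improving, or else is equivalent in cost change to a forced single-gadget flip.

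\textbf{Step 3: invariant and counting.} We define the set $\mathcal R$ of ``reachable'' tours: all segments intact, each in one of its two orientations, with the orientation vector satisfying the counter invariant. We show by induction that from any tour in $\mathcal R$ every improving 2-move stays in $\mathcal R$. We then show that local optima in $\mathcal R$ are reached only after $G_n$ completes its cycle. A standard doubling argument, namely that $G_{i+1}$ flips at most once between any two resets of $G_i$ and requires two of them, gives at least $2^{n}$ moves for every improving sequence. The instance size is polynomial in $n$, and $n$ ranges over all integers, giving infinitely many instances.

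\textbf{Main obstacle.} The hardest step is the second part of Step 2: ruling out \emph{all} unintended 2-exchanges, especially reversals of long blocks spanning several gadgets. Such a reversal changes only two edges yet flips many orientations at once, and in some reachable tours its gain could coincide with a legal flip's. My first attempt would be to place the segments so that the orientation of a union block is inconsistent with the invariant. I would then show that its two new edges always connect endpoints of non-adjacent gadgets, and give those pairs prohibitive weight. If that fails, I would insert ``buffer'' segments with cheap fixed internal edges between gadgets, so that any long reversal must break a buffer boundary at cost exceeding $\sum_i W_i$.
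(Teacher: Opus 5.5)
There is a genuine gap, and it is more basic than the obstacle you single out: the encoding of Step~1 cannot produce exponential behaviour for any choice of weights. Under your invariant $\mathcal R$, the tour is a cyclic concatenation $P_1P_2\cdots P_N$ of intact segments with orientations $o_i\in\{0,1\}$, and the intended moves are single-segment reversals. A reversal leaves internal edges untouched, so
\[
c(\tau)=\mathrm{const}+\sum_{i=1}^{N} w_i(o_i,o_{i+1}) \qquad (\text{indices modulo } N),
\]
where $w_i(o_i,o_{i+1})$ is the weight of the edge joining the exit endpoint of $P_i$ to the entry endpoint of $P_{i+1}$. Reversing $P_i$ flips exactly $o_i$.

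This is a binary constraint problem on a cycle in which every variable lies in two constraints of arity two, that is, $(2,2,2)$-MCA. A gain depending on $o_{i-1},o_i,o_{i+1}$ is the locality of degree-two binary constraints, not of $(3,2,2)$ or $(2,3,2)$ as you claim. By the result of Kaznatcheev, Cohen, and Jeavons~\cite{KCJ2020} quoted in the introduction, every improving flip sequence of such an instance has length at most quadratic in the instance size. So the better Step~2 succeeds in suppressing union reversals and all other unintended exchanges, the more certainly every improving sequence is short, and the doubling claim of Step~3 fails for every choice of the $W_i$. If you instead let some union reversals be improving, the cyclic order of the segments changes and the orientation vector no longer describes the tour, so Steps~1 and~3 would have to be redone from scratch.

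The missing idea is to encode states by \emph{which edges} are present, so that a 2-swap can re-pair gadget endpoints rather than merely reorient a segment, and to use Hamiltonicity itself as part of the logic.

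\begin{itemize}
\item The paper works in a sparse graph $G_n$ in which all other edges are prohibitively heavy, so every possibly improving 2-swap is a 4-cycle of $G_n$. Every tour of $G_n$ runs through the layers along two strands (\cref{def:2opt_layer_path,lem:2opt_tour_char}).
\item The interface states $A_i,B_i$ are orientations of 3-vertex paths. Each interior layer also has an upper gadget with three pairings $D_i$ and a lower gadget with five pairings $E_i$. Changing these can reroute the strands, e.g.\ joining $[X_i]$ to $[B_i]$ instead of $[A_i]$.
\item Because states are defined by edge sets, a 2-swap changes exactly one state however much of the tour it reverses (\cref{lem:2opt_valid_states}). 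This dissolves the long-reversal worry you raise.
\item Conversely, the validity conditions forbid certain combinations of $D_i,E_i,A_i,B_{i-1},B_i$, and some single-state changes (such as $E_i\colon 3\to0$) are not 2-swaps at all.
\end{itemize}

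These feasibility restrictions, together with the weight-based flip rules of \cref{lem:flip_rules}, are what block shortcuts. \Cref{lem:reachable_prefix} then shows by induction that every improving sequence is a rearrangement of the canonical $\mathbf P_n$ that only moves the $E_i^4$ swaps. Your proposal specifies no concrete gadget or weights, so none of this verification has a counterpart in it.
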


\Cref{thm:2opt} also holds for the metric TSP: adding a sufficiently large constant to all edge weights enforces the triangle inequality, while the gain of every 2-\opt move, and hence every improving sequence, remains unchanged. This is a standard observation; see, for example, Krentel~\cite{Kre1989}.

The results of~\cite{HHH2026} for 2.5-\opt, 3-\opt, and 4-\opt are based on reductions from Max-Cut with the flip neighborhood on graphs of bounded degree. 
However, extending these reductions to 2-\opt remains challenging. 
One of the main difficulties lies in the restricted structure of 2-\opt moves, as illustrated in \cref{fig:compare}.
For $k\geq 3$, the previous constructions based on reductions from Max-Cut with the flip neighborhood exploit moves that relocate an entire subpath while preserving its internal order; see \cref{fig:compare} (top).
A 2-\opt move is more constrained: it can only reverse the order of vertices along a subpath; see  \cref{fig:compare} (bottom).
The additional flexibility of 3-\opt makes it possible to encode states through the presence or absence of particular edges (e.g., the edge $ab$ in \cref{fig:compare} (top)) and to change these states by relatively independent local modifications. 
For 2-\opt, however, changing a particular adjacency may require reversing an entire subpath, potentially affecting the representation of other states. 
This makes it more difficult to implement the desired state transitions independently.

This structural restriction prevents a straightforward adaptation of the previous constructions, and it is a priori unclear whether a suitable choice of improving 2-\opt moves could always lead to a local optimum in polynomially many iterations.
Instead of extending the previous reductions, we therefore give a direct construction that exploits the restrictions imposed by 2-\opt to enforce exponentially long improving sequences under every pivot rule.

\begin{figure}
\centering
\begin{tikzpicture}[x=1cm,y=1cm]

\begin{scope}[shift={(-0.5,2.05)}]
    \leftstub{.78}{0}
    \pathblock{1.38}{2.62}{0}{blue ribbon}
    \pathblock{3.22}{4.86}{0}{red ribbon}
    \rightstub{5.46}{0}
    \draw[tour edge] (.78,0)--(1.38,0);
    \draw[tour edge] (2.62,0)--(3.22,0);
    \draw[tour edge] (4.86,0)--(5.46,0);
    \node[endpoint] at (.78,.21) {$a$};
    \node[endpoint] at (3.22,.21) {$b$};
  \end{scope}
  \begin{scope}[shift={(8.05,2.05)}]
    \leftstub{.78}{0}
    \pathblock{1.38}{3.02}{0}{red ribbon}
    \pathblock{3.62}{4.86}{0}{blue ribbon}
    \rightstub{5.46}{0}
    \draw[tour edge] (.78,0)--(1.38,0);
    \draw[tour edge] (3.02,0)--(3.62,0);
    \draw[tour edge] (4.86,0)--(5.46,0);
    \node[endpoint] at (.78,.21) {$a$};
    \node[endpoint] at (1.38,.21) {$b$};
  \end{scope}

  \draw[transition] (6.54,2.05)--(7.23,2.05);
  \node[font=\small,anchor=south] at (6.885,2.21) {3-\opt};

  \begin{scope}[shift={(-0.5,0)}]
    \leftstub{.78}{0}
    \pathblock{1.16}{5.19}{0}{blue ribbon}
    \rightstub{5.57}{0}
    \draw[tour edge] (.78,0)--(1.16,0);
    \draw[tour edge] (5.19,0)--(5.57,0);
    \directionmarks{1}
    \node[endpoint] at (.78,.21) {$a$};
    \node[endpoint] at (1.16,.21) {$c$};
    \node[endpoint] at (5.19,.21) {$d$};
    \node[endpoint] at (5.57,.21) {$b$};
  \end{scope}
  \begin{scope}[shift={(8.05,0)}]
    \leftstub{.78}{0}
    \pathblock{1.16}{5.19}{0}{blue ribbon}
    \rightstub{5.57}{0}
    \draw[tour edge] (.78,0)--(1.16,0);
    \draw[tour edge] (5.19,0)--(5.57,0);
    \directionmarks{-1}
    \node[endpoint] at (.78,.21) {$a$};
    \node[endpoint] at (1.16,.21) {$d$};
    \node[endpoint] at (5.19,.21) {$c$};
    \node[endpoint] at (5.57,.21) {$b$};
  \end{scope}
  \draw[transition] (6.54,0)--(7.23,0);
  \node[font=\small,anchor=south] at (6.885,.16) {2-\opt};
\end{tikzpicture}
    \caption{Structural difference between 3-\opt and 2-\opt moves.
    \textbf{Top}: a 3-\opt move can remove the three black edges and reconnect the subpaths, effectively relocating the blue subpath and introducing the edge $ab$.
    \textbf{Bottom}: a 2-\opt move replaces the edges $ac$ and $db$ with $ad$ and $cb$, effectively reversing the blue subpath between $c$ and $d$. 
    The blue arrowheads indicate this reversal.}
    \label{fig:compare}
    \end{figure}

The recursive idea behind this construction is easiest to develop for the Maximum Constraint Assignment (MCA) problem with the flip neighborhood, a generalization of Max-Cut in which the local structure of the instances is controlled by three parameters. Moreover, MCA is of independent interest, since for several small parameter values the all-exp question has been open for a long time.

An instance of $(p,q,r)$-MCA consists of variables, each with a domain of size~$r$, and constraints, 
each of which depends on at most~$p$ variables and assigns a weight to every combination of values of these variables; every variable occurs in at most~$q$ constraints. The goal is to find an assignment of values to the variables that maximizes the sum of the resulting constraint weights. A local-search step in the flip neighborhood changes the value of a single variable; for binary variables, that is, $r=2$, this is simply a flip. Throughout this paper, we consider MCA only with the flip neighborhood.

In the terminology of constraint satisfaction, MCA is a valued constraint satisfaction problem (VCSP) with finite weights and a maximization objective. In particular, MCA naturally generalizes weighted satisfiability problems. Dumrauf and Monien~\cite{DM2013} studied the problem systematically and, building on techniques introduced by Krentel~\cite{Kre1989}, proved that $(3,2,3)$-MCA, $(2,3,6)$-MCA, and the binary problem $(6,2,2)$-MCA are PLS-complete via tight reductions and hence have the all-exp property. Moreover, Max-Cut on graphs of maximum degree four is a special case of $(2,4,2)$-MCA, so the result of Monien and Tscheuschner~\cite{MT2010} implies the all-exp property for $(2,4,2)$-MCA.
More recently, Heimann, Hoang, and Hougardy~\cite{HHH2026b} proved the all-exp property for $(2,2,5)$-MCA. 

On the positive side, Kaznatcheev, Cohen, and Jeavons~\cite{KCJ2020,Kaz2020} proved that for $(2,2,2)$-MCA every sequence of improving flips has length at most quadratic in the instance size. 
In contrast, the same authors established the is-exp property for $(2,2,3)$- and $(2,3,2)$-MCA~\cite{KCJ2020}.
More recently, Kaznatcheev and Vazquez Alferez~\cite{MR4967954} strengthened the is-exp result for $(2,3,2)$-MCA by showing that even the steepest-ascent pivot rule can require exponentially many flips, although their instances admit polynomial-length improving sequences under other pivot rules.
Whether these problems also have the all-exp property remained open. 
Taking into account the known all-exp results and the fact that the all-exp property carries over to larger parameter values, exactly eleven cases were open: $(2,2,3)$-, $(2,2,4)$-, $(2,3,2)$-, $(2,3,3)$-, $(2,3,4)$-, $(3,2,2)$-, \mbox{$(4,2,2)$-,} $(5,2,2)$-, $(3,3,2)$-, $(4,3,2)$-, and $(5,3,2)$-MCA.

In this paper, we prove the following two theorems.

\begin{restatable}{theorem}{thmthreetwotwo}
\label{thm:322}
    $(3,2,2)$-MCA with the flip neighborhood has the all-exp property.
\end{restatable}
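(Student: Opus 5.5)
We prove \cref{thm:322} by a direct recursive construction rather than a tight PLS-reduction. For every $n$ we build an instance $I_n$ of $(3,2,2)$-MCA of size polynomial in $n$, together with an initial assignment, such that every improving flip sequence to a local optimum has length at least $2^{n}$. The instance is a chain of gadgets $G_1,\dots,G_n$, each with a constant number of binary variables. Every variable occurs in at most two constraints, and every constraint has arity at most three.

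The key design principle is to use weights that decrease geometrically along the chain. Constraints internal to $G_i$ and the coupling constraints between $G_i$ and $G_{i+1}$ get weights of order $M^{n-i}$ for a large constant $M$. Then any flip inside $G_i$ is improving or not depending essentially only on $G_i$ and on $G_{i-1}$, which controls it. The degree bound $q=2$ is the main constraint on the design. Each variable has only two constraint slots, so a variable cannot simultaneously be read by its controller and read by its own controlled gadget and participate in an internal gadget constraint. Ternary constraints are what make this possible: one constraint can encode a transition rule of the form ``$x$ wants to flip iff $y\ne z$''. Variables are therefore arranged in paths of ternary constraints, each variable shared by exactly two constraints, that propagate a ``reset'' signal.

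The argument has three steps.

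\textbf{Step 1: specify the local dynamics of one gadget.} Gadget $G_i$ has a control input, which is a variable shared with $G_{i-1}$, and an output shared with $G_{i+1}$. Whenever the control toggles, $G_i$ is forced to toggle its output twice before it can settle. One toggle is caused by a ``request'' and the other by an ``acknowledge'' phase. This gives the usual doubling: if $T_i$ is the number of forced toggles of $G_i$'s output, then $T_{i+1}\ge 2T_i$, and $T_n\ge 2^n$.

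\textbf{Step 2: prove the invariant that holds under every pivot rule.} The state space of $G_i$, restricted to reachable configurations, must form essentially a single directed path of improving flips. Then the pivot rule cannot shortcut. Concretely, we would define a potential or phase function on gadget states and check two facts:
\begin{itemize}
\item for each value of the control and the neighbouring variables, at most one variable in the gadget has an improving flip, or the improving flips commute and lead to the same phase;
\item the gadget cannot reach a locally optimal configuration, compatible with the new control value, without passing through the output toggles.
\end{itemize}
This is a finite check per gadget, done with the weight choices. The geometric scaling ensures that lower-order influences from $G_{i+1}$ never override the decisions in $G_i$.

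\textbf{Step 3: induction over the chain.} We prove that in any improving sequence from the initial assignment, the output of $G_i$ changes at least $2^{i}$ times. The base gadget $G_1$ is started in an unstable state that forces one full cycle.

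The main obstacle is Step 2 under the adversarial pivot rule. In classical is-exp constructions, a clever pivot rule can let $G_{i+1}$ lag behind and absorb several toggles of $G_i$ at once. For example, if $G_i$ flips its output twice before $G_{i+1}$ reacts, the two requests cancel. To prevent this, the gadget must be handshake-based: $G_i$ may not proceed to its second toggle until $G_{i+1}$ has acknowledged the first, by flipping a variable that $G_i$ reads. This needs a backward dependency, so $G_{i+1}$ influences $G_i$ despite its smaller weight. That conflicts with the naive geometric scaling. My first attempt would be to let the acknowledge variable live in a constraint of weight order $M^{n-i}$ that belongs to $G_i$, but whose value is flipped only when $G_{i+1}$ has reached a specific phase. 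This is enforced through a ternary constraint with one low-weight tie-breaking term. Fitting this within the two-occurrences-per-variable budget is where I expect most of the work, and where ternary constraints are indispensable. This is consistent with the tightness remarks: $(2,2,2)$-MCA has only polynomial-length improving sequences.
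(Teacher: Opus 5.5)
Your proposal is a plan rather than a proof, and the missing part is exactly what carries the theorem: you never specify the instance. There are no gadgets, constraint scopes, weights, or initial assignment. So the ``finite check per gadget'' of Step~2 has nothing to check, and the doubling $T_{i+1}\ge 2T_i$ of Steps~1 and~3 is asserted rather than derived. The issue you leave open at the end is how to make the controlling part wait until the controlled part has finished, and how to stop the controlled part from lagging or running ahead, with every variable in only two constraints. That is not a detail to fill in later; it is the whole difficulty. Since $(2,2,2)$-MCA has only polynomial improving sequences, no argument at the level of ``geometric scaling plus a handshake'' settles it without an explicit, verified construction. Your stated design principle, that flips in $G_i$ depend essentially only on $G_i$ and its controller, is also incompatible with the handshake you need, as you partly observe.

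The paper resolves this tension differently from your ``first attempt'':
\begin{itemize}
\item Layer $i$, with weights of order $P_i=2^{i+3}$, controls layers $0,\dots,i-1$. Its weights are tuned so that each of its own flips along the intended trajectory has gain exactly $1$. During the backward traversal this holds once the boundary term from layer $i+1$ is included.
\item Because the controller's moves are near-ties, the constant-weight constraint $C_{i,2}(v_{i-1,6},v_{i,2},v_{i,4})$ can gate them on the last variable $y=v_{i-1,6}$ of the controlled chain. Thus $v_{i,2}$ becomes improving only after $y$ has flipped to $1$, and $v_{i,4}$ only after $y$ is back at $0$.
\item In the other direction, up to additive constants, the controller influences the chain only through two terms. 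The first is $P_i(1-v_{i-1,1})$, present exactly when $v_{i,1}=v_{i,3}$; this is your ``may move iff $y\neq z$'' idea, realized by $C_{i,1}$. The second is $12v_{i-1,6}$, present exactly when $v_{i,2}\neq v_{i,4}$.
\end{itemize}
So the acknowledgment is simply the chain's own final flip, read through a small weight. The large weights serve only the controller's influence on the chain, and they cancel on the controller's own moves.

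The verification is also not gadget-local. The induction carries two statements:
\begin{itemize}
\item a forward traversal $S_{i-1}$ under $F_{i-1}$ from $\mathbf{G}_{i-1}$;
\item a backward traversal, in the same flip order, under $\widehat F_{i-1}=F_{i-1}+P_i(1-v_{i-1,1})+12v_{i-1,6}$ from $\mathbf{R}_{i-1}$.
\end{itemize}
Each asserts a unique improving flip at every step and a local optimum at the end. Combined with the fact that $v_{i-1,1}$ is the first and $v_{i-1,6}$ the last flip of $S_{i-1}$, this shows two things. No flip of layer $i$ is improving while the chain is mid-traversal, and the whole chain is locally optimal while layer $i$ moves. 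That is what rules out every shortcut under every pivot rule.
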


\begin{restatable}{theorem}{thmtwothreetwo}
\label{thm:232}
    $(2,3,2)$-MCA with the flip neighborhood has the all-exp property.
\end{restatable}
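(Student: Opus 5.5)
Our plan for \cref{thm:232} is a recursive construction in the spirit of the $(3,2,2)$ case, with every ternary constraint replaced by binary constraints. Since each variable may now occur in three constraints, we can introduce auxiliary variables. We build a family of instances $I_n$ together with initial assignments $\sigma_n$, in which a distinguished \emph{control} variable $x_n$ must flip back and forth a number of times $f(n)\ge 2f(n-1)$ before any local optimum can be reached. The instance $I_n$ contains a copy of $I_{n-1}$ whose output variable $x_{n-1}$ is attached by a single binary constraint to a small gadget of constantly many new binary variables. Every new variable has degree at most $3$, and the constraint attaching $x_{n-1}$ uses up that variable's one free slot, which remains unused inside $I_{n-1}$.

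\textbf{Gadget design.} The gadget should act as a ``doubler''. Each time the gadget is activated from outside, it resets $x_{n-1}$ twice: once to raise the gadget's state and once to lower it. The new output $x_n$ flips only after the inner copy has completed a full cycle in each phase. Weights are chosen as in the standard Monien--Tscheuschner/Michel--Scott style, geometrically decreasing with depth: the constraints of level $n$ have weights a constant factor larger than those of level $n-1$. This ensures that flips inside the gadget of level $n$ are improving or non-improving regardless of the states deeper in the hierarchy, except through the single interface edge.

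\textbf{Proving ``all-exp''.} The crucial part is to show that \emph{every} improving sequence is long, not just one. We will prove an invariant by induction on $n$. For the initial assignment $\sigma_n$ and any improving sequence reaching a local optimum, consider the projection of the sequence onto the variables of $I_{n-1}$. We want to show that this projection contains two disjoint subsequences, each of which is itself (after accounting for the frozen interface value) an improving sequence of $I_{n-1}$ from a state equivalent to $\sigma_{n-1}$ to a local optimum. To make this work, we need a stronger inductive statement: from every \emph{reset} configuration, every improving sequence forces $x_{n-1}$ through its full cycle before the gadget can progress. In addition, gadget variables can move only in a prescribed order, because each gadget flip becomes improving only after the interface variable has reached its final value. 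Flips that happen out of order inside $I_{n-1}$ must be shown to be harmless, meaning they can be reordered or only add length. We would formalize this with a potential or ``blocking'' argument: a gadget variable $g$ is locked as long as its neighbor is in the wrong state, and the binary constraints encode these locks directly through weight dominance.

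\textbf{Main obstacle.} The main obstacle is the degree bound $q=3$ with only binary constraints, since $(2,2,2)$ is polynomial. Each gadget variable has just three binary interactions, so it is hard simultaneously to receive the trigger, to reset the inner copy, and to gate the output while keeping every pivot choice forced. If a variable could respond early, a clever pivot rule might shortcut the recursion. Our first attempt would be to take the $(3,2,2)$ gadget and replace each ternary constraint $C(a,b,c)$ by a star through a fresh binary variable $h$ with constraints $ah$, $bh$, $ch$, where $h$ has degree exactly $3$. The weights would be tuned so that $h$ always flips to its unique best response before $a,b,c$ can exploit it. We would then check that this simulation preserves the ``every sequence'' property, in the sense that any interleaving of $h$-flips can be mapped to a valid $(3,2,2)$ improving sequence of no greater length. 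If the simulation leaks, that is if $h$ lags and allows shortcuts, we would fall back to designing a bespoke doubler with a case analysis of all improving flips.
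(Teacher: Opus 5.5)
Your proposal is a plan rather than a proof. The gadget, its weights and the inductive invariant are never specified, and the decisive step (that out-of-order flips are harmless under every pivot rule) is deferred to ``a bespoke doubler with a case analysis.'' The parts you do commit to do not work as stated.

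\textbf{The single interface.} Suppose $x_{n-1}$ is the only variable of the inner copy adjacent to the gadget. Then a flip of its gadget neighbor changes the gain only at $x_{n-1}$ among the inner variables. So if the inner copy is at rest, the first inner flip after an activation must be $x_{n-1}$ itself. One level up, this contradicts ``$x_n$ flips only after the inner copy has completed a full cycle'': activation from outside forces $x_n$ to flip first. Completion would then have to be reported by a later flip of the same variable. The outer gadget sees only that variable's current value, so it would also have to be prevented from acting on the intermediate value, for example by letting the two inner neighbors of $x_{n-1}$ lock it. Nothing in your plan provides this. (Your count is also reversed: in a doubler the innermost control variable flips most often, not $x_n$.)

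This is why the paper, as in Section~\ref{sec:322} and in Michel--Scott, uses two interfaces per layer boundary. The variable $v_{i-1,1}$ starts the lower traversal; via an XOR gadget of AND-NOT/OR ``following'' gates with weights shrinking by factors of $4$, it comes to prefer $v_{i,2}\oplus v_{i,4}\oplus v_{i,6}$. The variable $v_{i-1,8}$ flips last, and its change is broadcast through the chain $u_{i,0},\dots,u_{i,3},q_i$. The tie-breaking feedback from $q_i$ to $u_{i,0}$ keeps one update from overtaking the previous one.

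\textbf{The star replacement.} Your first attempt fails already at equilibrium. For fixed $h$ the star is separable in $(a,b,c)$, so after optimizing $h$ it realizes $g=S_0+\max\{0,M\}$, with $S_0$ separable and $M=\mu+s_aa+s_bb+s_cc$. By convexity of $t\mapsto\max\{0,t\}$, the mixed difference $g(a,1,1)-g(a,1,0)-g(a,0,1)+g(a,0,0)$ is $\ge0$ if $s_bs_c\ge0$ and $\le0$ if $s_bs_c\le0$, for both values of $a$. Unary terms do not change this. Now check the constraints of Section~\ref{sec:322}:
\begin{itemize}
\item $C_{i,1}$ has mixed difference in $(v_{i,1},v_{i,3})$ equal to $2^{i+4}-2>0$ at $v_{i-1,1}=0$ but $-2$ at $v_{i-1,1}=1$.
\item $C_{i,2}$ similarly gives $16$ and $-8$.
\item $C_{i,3}$ has mixed differences of signs $+$, $-$, $+$ in the pairs $(v_{i,3},v_{i,5})$, $(v_{i,2},v_{i,5})$, $(v_{i,2},v_{i,3})$. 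This is impossible, since these must be the signs of $s_3s_5$, $s_2s_5$, $s_2s_3$, whose product is a square.
\end{itemize}
So no choice of weights makes the star exact. Even for representable constraints, a stale $h$ lets a pivot rule flip $b$ in response to $S_h$ rather than $g$. Projections are then not guaranteed to be improving, let alone maximal, $(3,2,2)$ sequences.

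The paper attempts no such simulation and does not insist on forced moves. It admits a few independent branches and proves that the maximal improving sequences from $\sigma_0$ are exactly the linear extensions of a dependency DAG $D_n$, all of length $(39\cdot3^n-35)/2$. That is precisely the step your proposal leaves open.
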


Since every open case other than $(2,2,3)$- and $(2,2,4)$-MCA has parameters that are componentwise at least $(3,2,2)$ or at least $(2,3,2)$, \cref{thm:322,thm:232} leave only these two cases open. Both theorems are tight: decreasing any of the three parameters yields a problem in which every sequence of improving flips has polynomial length. For $(2,2,2)$-MCA, this is the result of Kaznatcheev, Cohen, and Jeavons mentioned above; if one of the parameters equals~$1$, every sequence of improving flips even has at most linear length, since then each variable can change its value at most $r-1$ times ($p=1$), each constraint can increase its weight at most $r^p-1$ times ($q=1$), or no move is possible at all ($r=1$).

Since 2-\opt is the smallest \kopt neighborhood, \cref{thm:2opt} is best possible as well. The three theorems thus show that local search can require exponentially many iterations under every pivot rule, even in remarkably restricted settings. These results settle long-standing open questions at two of the most basic boundaries in discrete local search: the last open \kopt neighborhood for the TSP and the dividing line between polynomial and exponential behavior for binary MCA. In particular, binary MCA is now completely classified: $(p,q,2)$-MCA has the all-exp property if and only if $p,q \geq 2$ and $\max\{p,q\} \geq 3$.

\paragraph{Overview of the constructions.}
Our three proofs are based on a common construction idea, which reveals a close connection between the local-search dynamics of the TSP with the 2-\opt neighborhood and of MCA. This idea is similar to the recursive approach used in previous all-exp constructions, such as that of Michel and Scott~\cite{MS2023}.
Each instance consists of a linear number of layers, with interactions between consecutive layers restricted to two interfaces. 
One interface controls when an improving sequence in the preceding layer can start, while the other ensures that this sequence must finish before further progress in the current layer is possible. 
The key idea behind the exponential behavior is that a traversal of one layer requires at least two traversals of the preceding layer. Applied recursively, this ensures that, from a suitably chosen initial solution, every maximal improving sequence, that is, every improving sequence ending in a local optimum, has exponential length. The main challenge is to enforce these dependencies so that no sequence of improving moves can bypass the recursive traversals.

Our construction for $(3,2,2)$-MCA implements this idea very directly. 
The structure and constraint weights are chosen such that, starting from a suitable initial assignment, there is exactly one improving flip at every step, and each flip increases the objective value by exactly one. Moreover, our construction is $3$-partite, meaning that we can partition the variables into three parts such that no constraint contains two variables from the same part. 
Despite these restrictions, the unique maximal improving sequence has exponential length.

The construction for 2-\opt follows a very similar recursive pattern but requires an additional idea. 
We can assign states to different parts of a tour, but not every assignment of states represents a valid tour, and not every change between valid state assignments corresponds to a 2-\opt move. 
We deliberately exploit these restrictions to exclude shortcuts that would otherwise be improving. 
Unlike in the previous constructions for $k\geq3$~\cite{HHH2024,HHH2026}, where unwanted improving moves could be excluded by the choice of edge weights alone, the restricted nature of 2-\opt makes these feasibility restrictions particularly important. 
They also allow us to choose the weights of two parts of each layer, which we call the upper and lower gadgets, largely independently. 
Further, unlike the previous constructions~\cite{HHH2024, HHH2026}, which rely on a reduction from Max-Cut with the flip neighborhood, we present a ``direct'' construction here.
While we cannot enforce a unique improving move at every step, we show that all maximal improving sequences differ only in the order of a few moves that can be performed independently.

For $(2,3,2)$-MCA, we build on the recursive construction of Michel and Scott~\cite{MS2023} for Max-Cut with the flip neighborhood. 
However, their construction contains vertices of degree four, whereas every variable in $(2,3,2)$-MCA may occur in at most three constraints. 
Reducing the degree while preserving the behavior of all improving sequences is a challenge: introducing auxiliary variables creates additional improving flips, which may allow the recursive dependencies to be bypassed. 
We overcome this difficulty by designing specific gadgets that reproduce the required dependencies using only variables that occur in at most three constraints. 
In particular, one gadget implements the dependence on three variables through Boolean operations, while another propagates the change of one variable to the next layer and ensures that the next recursive traversal can begin only after this propagation has been adequately completed.
Although several improving flips may now be available simultaneously, we show that their dependencies are captured by a directed acyclic graph. 
Consequently, all maximal improving sequences correspond to linear extensions of this graph and have the same exponential length.

\paragraph{Outline.}
\cref{sec:prelims} establishes the necessary background, including key terms and notations. 
We then prove the results for $(3,2,2)$-MCA, the TSP with the 2-\opt neighborhood, and $(2,3,2)$-MCA in Sections~\ref{sec:322}, \ref{sec:2opt}, and \ref{sec:232}, respectively.
The three proofs are self-contained and can be read independently. 
We occasionally reuse notation, with its meaning defined separately in each section.

\section{Preliminaries}
\label{sec:prelims}

Denote by $[n]$ the set $\{1, \dots, n\}$.
Let $\mathbb{N}$ denote the set of all natural numbers (including zero).
For $a\in\{0,1\}$, let $\bar a=1-a$, and write $\oplus$ for addition modulo~$2$ (i.e., the XOR operation).
For a statement $A$, we write $[A]$ for its indicator (i.e., it is 1 if $A$ is true and 0 otherwise).
For an assignment $\sigma\colon X\to\{0,1\}$, a variable $x\in X$, and $a\in\{0,1\}$, we write $\sigma[x\mapsto a]$ for the assignment that agrees with $\sigma$ on $X\setminus\{x\}$ and assigns the value $a$ to $x$.

For a directed acyclic graph $D$, a \emph{linear extension} of $D$ is an ordering $(x_1,\dots,x_{|V(D)|})$ of its vertices such that every edge $x_i x_j\in E(D)$ satisfies $i<j$.
Equivalently, a linear extension is a topological ordering of $D$.

\subsection{Local-search problems}
A \defi{local-search problem} $P$ is an optimization problem that consists of a set of instances $D_{P}$, a finite set of (feasible) solutions $F_{P}(I)$ for each instance $I\in D_{P}$, an objective function $f_{P}$ that assigns an integer value to each instance $I\in D_{P}$ and solution $s\in F_P(I)$, and a neighborhood $N_{P}(s,I)\subseteq F_{P}(I)$ for each solution $s\in F_{P}(I)$. 
The size of every solution $s \in F_{P}(I)$ must be bounded by a polynomial in the size of $I$. 
The goal is to find a \defi{locally optimal solution} for a given instance $I$; that is, a solution $s \in F_{P}(I)$, such that no solution $s' \in N_{P}(s,I)$ yields a better objective value than $f_P(s,I)$.
Formally, this means, for all $s'\in N_{P}(s,I)$, $f_{P}(s,I)\leq f_{P}(s',I)$ if $P$ is a minimization problem, and $f_{P}(s,I)\geq f_{P}(s',I)$ if $P$ is a maximization problem.

The \defi{standard local-search algorithm} for an instance $I$ proceeds as follows.
It starts with some initial solution $s \in F_{P}(I)$.
Then it iteratively visits a neighbor with a strictly better objective value until it reaches a local optimum. 
If a solution has more than one better neighbor, the algorithm has to choose one by some prespecified rule, often referred to as a \defi{pivot rule}.

For a move from a feasible solution $s$ to $s'$, we define its \defi{gain} as $f_P(s,I)-f_P(s',I)$ for a minimization problem, and as $f_P(s',I)-f_P(s,I)$ for a maximization problem (with the instance fixed).
A move is \defi{improving} if and only if its gain is positive. 
A sequence of solutions $s_0, s_1, \dots, s_\ell$ is an \defi{improving sequence} of length~$\ell$ if, for each $i \in \{1,\ldots, \ell\}$, we have $s_i \in N_P(s_{i-1}, I)$ and the move from $s_{i-1}$ to $s_i$ is improving. It is \defi{maximal} if $s_\ell$ is a local optimum.

\begin{definition}[All-exp property]
    A local-search problem~$P$ has the \defi{all-exp} property if there are infinitely many pairs of an instance~$I\in D_P$ and an initial solution $s \in F_P(I)$, for which the standard local-search algorithm needs an exponential number of iterations 
    for all possible pivot rules. (By exponential, we mean a growth of at least $\Omega(2^{n^{\epsilon}})$ for some constant $\epsilon>0$ and instances of size $n$.)
\end{definition}

An alternative way to define the all-exp property is to consider the \defi{transition graph} of an instance $I$
of a local-search problem $P$. This graph has as vertices all the solutions $F_{P}(I)$ and there exists a directed edge $(s, s')$ if the solution $s'$ is in the neighborhood of $s$ and has a strictly better 
objective value than $s$. By definition, the transition graph is acyclic, and a directed path of length $\ell$ in the transition graph corresponds to $\ell$ iterations of the standard local-search algorithm. A local optimum corresponds to a sink in the transition graph. 
Hence, a local-search problem has the all-exp property if and only if there exist
infinitely many pairs of instances and initial solutions such that all paths in the transition graph
from the initial solution to a sink have exponential length. 

\subsection{\texorpdfstring{TSP with the \kopt neighborhood}{TSP with the k-opt neighborhood}}
A \defi{Hamiltonian cycle}
of a graph is a cycle 
that contains all vertices of the graph.
In this paper, we also refer to a Hamiltonian cycle as a \defi{tour}.

A TSP instance is a tuple $(G, c)$, where $G$ is a complete undirected graph $(V, E)$, and $c:E\to \mathbb{N}$ is a function that assigns a  weight to each edge of $G$.
The goal is to find a tour of $G$ that minimizes the sum of the edge weights in the tour.
We call this sum the \defi{weight of the tour}.

A \defi{swap} is a tuple $(E_1, E_2)$ of subsets $E_1, E_2 \subseteq E$ with $|E_1| = |E_2|$.
For an integer $k$, a swap $(E_1,E_2)$ with $|E_1| \le k$ is called a \defi{$k$-swap}.
For a subgraph $G'$ of $G$, and two edge sets $E_1, E_2$ with $E_1\subseteq E(G')$, performing a swap $(E_1, E_2)$ on $G'$ refers to the act of removing $E_1$ from $G'$ and adding $E_2$ to $G'$. 
We also call it swapping $E_1$ for $E_2$ in $G'$.
Given a tour~$\tau$, a swap $(E_1, E_2)$ has gain $\sum_{e\in E_1} c(e) - \sum_{e \in E_2} c(e)$, and it is improving for $\tau$ if after swapping $E_1$ for $E_2$ in $\tau$, we obtain a tour of smaller weight.

A \defi{($k$-)swap sequence} is a sequence $L = (S_1, \dots, S_{\ell})$, such that each $S_i$ is a ($k$-)swap. A sequence $L$ is \defi{improving} for a tour~$\tau$ if, for each $i \in \{1,\ldots,\ell\}$, the swap $S_i$ is improving for the tour obtained from $\tau$ by performing $S_1, \dots, S_{i-1}$.

In the local-search problem TSP with the \kopt neighborhood, the neighbors of a tour~$\tau$ are those that can be obtained from~$\tau$ by a $k$-swap.
The \kopt algorithm is then the standard local-search algorithm for this problem, and an execution of the algorithm corresponds to an improving $k$-swap sequence.

\subsection{Maximum Constraint Assignment}
An instance of $(p,q,r)$-\emph{Maximum Constraint Assignment}, abbreviated as $(p,q,r)$-MCA, consists of a set of variables $X=\{x_1,\dots,x_n\}$ and a set of constraints $\mathcal C=\{C_1,\dots,C_m\}$.
Each variable takes a value in $\{0,\dots,r-1\}$.
A constraint $C_i(x_{i_1},\dots,x_{i_{p_i}})$, for $p_i\le p$, assigns an integer weight to every possible assignment of the variables appearing in it.
Moreover, every variable appears in at most $q$ constraints.

For an assignment $\sigma\colon X\to\{0, \dots, r-1\}$, its value is
\[
\Phi(\sigma)
=
\sum_{C_i\in\mathcal C}
C_i\bigl(\sigma(x_{i_1}),\dots,\sigma(x_{i_{p_i}})\bigr).
\]
The objective is to maximize $\Phi$.

In this paper, we only consider the binary case $r=2$.
For an assignment $\sigma$ and a variable $x\in X$, the \emph{flip of $x$} is the assignment obtained from $\sigma$ by replacing $\sigma(x)$ with $1-\sigma(x)$ and leaving all other variables unchanged.
The \defi{flip neighborhood} of an assignment~$\sigma$ consists of all assignments obtained from $\sigma$ by flipping exactly one variable.
The flip of $x$ at $\sigma$ has gain $\Phi\bigl(\sigma[x\mapsto1-\sigma(x)]\bigr)-\Phi(\sigma)$; it is improving if it increases the value of the assignment.
For brevity, we also say that \defi{$x$ is improving} if the flip of $x$ is improving.
An assignment with no improving variable is a \defi{local optimum}.
An \defi{improving flip sequence} is a sequence $(y_1,\dots,y_\ell)$ such that, starting from a given assignment, the flip of $y_j$ is improving after the preceding flips $y_1,\dots,y_{j-1}$ have been performed.

\section{(3,2,2)-MCA has the all-exp property}
\label{sec:322}
In this section, we prove the following.

\thmthreetwotwo*

In this setting, every constraint has at most three variables, each variable occurs in at most two constraints, and all variables are binary.
We show the theorem above via a direct construction. 
For each $n\in\mathbb{N}$, we give an instance $I_n$ with $O(n)$ variables and constraints, together with an initial assignment. 
We then show that there is a unique maximal improving sequence of flips of length $\Omega(2^n)$ starting from the initial assignment. Since all constraint weights are at most $2^{n+4}$, the instance $I_n$ can be encoded with $O(n^2)$ bits, so this length is exponential in the instance size in the sense of our definition.

\paragraph{The construction.}
The instance $I_n$ consists of variables and constraints organized in $n+1$ layers. 
The base layer ($i=0$) consists of two variables $X_0 = \{v_{0,1}, v_{0,6}\}$ and one constraint $C_{0}(v_{0,1},v_{0,6})$, while each layer $i\in[n]$ consists of six variables $X_i = \{v_{i,1}, v_{i,2},v_{i,3}, v_{i,4}, v_{i,5}, v_{i,6}\}$ and four constraints $C_{i,1}(v_{i-1,1},v_{i,1},v_{i,3})$, $C_{i,2}(v_{i-1,6},v_{i,2},v_{i,4})$, $C_{i,3}(v_{i,2},v_{i,3},v_{i,5})$, and $C_{i,4}(v_{i,4},v_{i,5},v_{i,6})$. All variables can assume the values $0$ and $1$, which we also identify with $\greenb$ (green) and $\redb$ (red), respectively. 
See \cref{fig:simplifiedmcaconstruction}.
The weights of $C_0$ and other constraints are given by \cref{tab:const_C_0} and \cref{tab:constraints_C_i}, respectively.

\begin{remark}
The instance $I_n$ is \defi{$3$-partite}, that is, its variables can be partitioned into three parts such that no constraint contains two variables of the same part. Such a partition is given by $\pi\colon\bigcup_{i=0}^n X_i \to \{1,2,3\}$ with $\pi(v_{0,1})=1$, $\pi(v_{0,6})=3$, and, for all $i\in[n]$, $\pi(v_{i,3})=\pi(v_{i,4})=2$,
\begin{equation*}
    \pi(v_{i,1})= \pi(v_{i,5})= \begin{cases}
        3 & \text{if $i$ is odd}\\ 
        1 & \text{if $i$ is even}
    \end{cases} \hspace{2em}
    \pi(v_{i,2})= \pi(v_{i,6})= \begin{cases}
        1 & \text{if $i$ is odd}\\ 
        3 & \text{if $i$ is even}
    \end{cases}
\end{equation*}
\end{remark}


\pgfdeclarelayer{hyperedgefill}
\pgfdeclarelayer{hyperedgecontour}
\pgfdeclarelayer{vertices}
\pgfsetlayers{hyperedgefill,main,hyperedgecontour,vertices}

\newcommand{\hyperedgewidth}{10pt}      
\newcommand{\hyperedgeopacity}{0.15}
\newcommand{\hyperedgelinewidth}{0.3pt}

\newcommand{\AppendToTubePath}[2]{%
  \xdef\TubePath{\TubePath (#1pt,#2pt) -- }%
}

\newcommand{\NormalizeVector}[2]{%
  \pgfmathsetmacro{\scale}{max(abs(#1),abs(#2),0.0001)}%
  \pgfmathsetmacro{\len}{\scale*sqrt((#1/\scale)*(#1/\scale)+(#2/\scale)*(#2/\scale))}%
}

\newcommand{\BuildTubePath}[3]{%
  \path let \p1=(#1), \p2=(#2), \p3=(#3) in
    \pgfextra{%
      \xdef\Ax{\the\dimexpr\x1\relax}%
      \xdef\Ay{\the\dimexpr\y1\relax}%
      \xdef\Bx{\the\dimexpr\x2\relax}%
      \xdef\By{\the\dimexpr\y2\relax}%
      \xdef\Cx{\the\dimexpr\x3\relax}%
      \xdef\Cy{\the\dimexpr\y3\relax}%
    };

  \pgfmathsetmacro{\ax}{\Ax/1pt}
  \pgfmathsetmacro{\ay}{\Ay/1pt}
  \pgfmathsetmacro{\bx}{\Bx/1pt}
  \pgfmathsetmacro{\by}{\By/1pt}
  \pgfmathsetmacro{\cx}{\Cx/1pt}
  \pgfmathsetmacro{\cy}{\Cy/1pt}

  \pgfmathsetmacro{\qx}{2*\bx - 0.5*\ax - 0.5*\cx}
  \pgfmathsetmacro{\qy}{2*\by - 0.5*\ay - 0.5*\cy}

  \gdef\TubePath{}%

  \foreach \i in {0,...,32}{%
    \pgfmathsetmacro{\t}{\i/32}
    \pgfmathsetmacro{\px}{(1-\t)^2*\ax + 2*(1-\t)*\t*\qx + \t^2*\cx}
    \pgfmathsetmacro{\py}{(1-\t)^2*\ay + 2*(1-\t)*\t*\qy + \t^2*\cy}
    \pgfmathsetmacro{\dx}{2*(1-\t)*(\qx-\ax) + 2*\t*(\cx-\qx)}
    \pgfmathsetmacro{\dy}{2*(1-\t)*(\qy-\ay) + 2*\t*(\cy-\qy)}
    \NormalizeVector{\dx}{\dy}
    \pgfmathsetmacro{\nx}{-\dy/\len}
    \pgfmathsetmacro{\ny}{ \dx/\len}
    \pgfmathsetmacro{\xx}{\px+\hyperedgewidth*\nx}
    \pgfmathsetmacro{\yy}{\py+\hyperedgewidth*\ny}
    \AppendToTubePath{\xx}{\yy}
  }

  \pgfmathsetmacro{\dxend}{2*(\cx-\qx)}
  \pgfmathsetmacro{\dyend}{2*(\cy-\qy)}
  \pgfmathsetmacro{\angend}{atan2(\dyend,\dxend)}
  \foreach \i in {1,...,12}{%
    \pgfmathsetmacro{\ang}{\angend+90-180*\i/12}
    \pgfmathsetmacro{\xx}{\cx+\hyperedgewidth*cos(\ang)}
    \pgfmathsetmacro{\yy}{\cy+\hyperedgewidth*sin(\ang)}
    \AppendToTubePath{\xx}{\yy}
  }

  \foreach \i in {32,...,0}{%
    \pgfmathsetmacro{\t}{\i/32}
    \pgfmathsetmacro{\px}{(1-\t)^2*\ax + 2*(1-\t)*\t*\qx + \t^2*\cx}
    \pgfmathsetmacro{\py}{(1-\t)^2*\ay + 2*(1-\t)*\t*\qy + \t^2*\cy}
    \pgfmathsetmacro{\dx}{2*(1-\t)*(\qx-\ax) + 2*\t*(\cx-\qx)}
    \pgfmathsetmacro{\dy}{2*(1-\t)*(\qy-\ay) + 2*\t*(\cy-\qy)}
    \NormalizeVector{\dx}{\dy}
    \pgfmathsetmacro{\nx}{-\dy/\len}
    \pgfmathsetmacro{\ny}{ \dx/\len}
    \pgfmathsetmacro{\xx}{\px-\hyperedgewidth*\nx}
    \pgfmathsetmacro{\yy}{\py-\hyperedgewidth*\ny}
    \AppendToTubePath{\xx}{\yy}
  }

  \pgfmathsetmacro{\dxstart}{2*(\qx-\ax)}
  \pgfmathsetmacro{\dystart}{2*(\qy-\ay)}
  \pgfmathsetmacro{\angstart}{atan2(\dystart,\dxstart)}
  \foreach \i in {1,...,12}{%
    \pgfmathsetmacro{\ang}{\angstart-90-180*\i/12}
    \pgfmathsetmacro{\xx}{\ax+\hyperedgewidth*cos(\ang)}
    \pgfmathsetmacro{\yy}{\ay+\hyperedgewidth*sin(\ang)}
    \AppendToTubePath{\xx}{\yy}
  }

  \xdef\TubePath{\TubePath cycle}%
}

\newcommand{\Hyperedge}[4]{%
  \BuildTubePath{#1}{#2}{#3}%
  \begin{pgfonlayer}{hyperedgefill}
    \fill[#4,opacity=\hyperedgeopacity] \TubePath;
  \end{pgfonlayer}
  \begin{pgfonlayer}{hyperedgecontour}
    \draw[black,line width=\hyperedgelinewidth] \TubePath;
  \end{pgfonlayer}
}

\newcommand{\RoundHyperedge}[4]{%
  \coordinate (Hc) at ($1/3*(#1)+1/3*(#2)+1/3*(#3)$);

  \coordinate (H1) at ($(Hc)!1.2!(#1)$);
  \coordinate (H2) at ($(Hc)!1.2!(#2)$);
  \coordinate (H3) at ($(Hc)!1.2!(#3)$);

  \filldraw[
    fill=#4,
    fill opacity=\hyperedgeopacity,
    draw=black,
    line width=\hyperedgelinewidth
  ]
  plot[smooth cycle,tension=0.75] coordinates {
    (H1) (H2) (H3)
  };
}

\newcommand{\BuildTubePathTwo}[2]{%
  \path let \p1=(#1), \p2=(#2) in
    \pgfextra{%
      \xdef\Ax{\the\dimexpr\x1\relax}%
      \xdef\Ay{\the\dimexpr\y1\relax}%
      \xdef\Bx{\the\dimexpr\x2\relax}%
      \xdef\By{\the\dimexpr\y2\relax}%
    };

  \pgfmathsetmacro{\ax}{\Ax/1pt}
  \pgfmathsetmacro{\ay}{\Ay/1pt}
  \pgfmathsetmacro{\bx}{\Bx/1pt}
  \pgfmathsetmacro{\by}{\By/1pt}

  \pgfmathsetmacro{\dx}{\bx-\ax}
  \pgfmathsetmacro{\dy}{\by-\ay}
  \NormalizeVector{\dx}{\dy}

  \pgfmathsetmacro{\nx}{-\dy/\len}
  \pgfmathsetmacro{\ny}{ \dx/\len}

  \pgfmathsetmacro{\ang}{atan2(\dy,\dx)}

  \gdef\TubePath{}%

  \pgfmathsetmacro{\xx}{\ax+\hyperedgewidth*\nx}
  \pgfmathsetmacro{\yy}{\ay+\hyperedgewidth*\ny}
  \AppendToTubePath{\xx}{\yy}

  \pgfmathsetmacro{\xx}{\bx+\hyperedgewidth*\nx}
  \pgfmathsetmacro{\yy}{\by+\hyperedgewidth*\ny}
  \AppendToTubePath{\xx}{\yy}

  \foreach \i in {1,...,16}{%
    \pgfmathsetmacro{\a}{\ang+90-180*\i/16}
    \pgfmathsetmacro{\xx}{\bx+\hyperedgewidth*cos(\a)}
    \pgfmathsetmacro{\yy}{\by+\hyperedgewidth*sin(\a)}
    \AppendToTubePath{\xx}{\yy}
  }

  \pgfmathsetmacro{\xx}{\ax-\hyperedgewidth*\nx}
  \pgfmathsetmacro{\yy}{\ay-\hyperedgewidth*\ny}
  \AppendToTubePath{\xx}{\yy}

  \foreach \i in {1,...,16}{%
    \pgfmathsetmacro{\a}{\ang-90-180*\i/16}
    \pgfmathsetmacro{\xx}{\ax+\hyperedgewidth*cos(\a)}
    \pgfmathsetmacro{\yy}{\ay+\hyperedgewidth*sin(\a)}
    \AppendToTubePath{\xx}{\yy}
  }

  \xdef\TubePath{\TubePath cycle}%
}

\newcommand{\HyperedgeTwo}[3]{%
  \BuildTubePathTwo{#1}{#2}%
  \begin{pgfonlayer}{hyperedgefill}
    \fill[#3,opacity=\hyperedgeopacity] \TubePath;
  \end{pgfonlayer}
  \begin{pgfonlayer}{hyperedgecontour}
    \draw[black,line width=\hyperedgelinewidth] \TubePath;
  \end{pgfonlayer}
}

\def\defvertex#1#2#3#4{\coordinate (c#1) at (#2,#3);
                       \begin{pgfonlayer}{vertices}\fill[#4] (#2,#3) circle (1.4mm);
                       \end{pgfonlayer}
                       \node[] (#1) at (c#1) {};}

\def\defvertexw#1#2#3#4{\coordinate (c#1) at (#2,#3);
                       \begin{pgfonlayer}{vertices}\draw[#4,very thick] (#2,#3) circle (1.4mm);
                       \end{pgfonlayer}
                       \node[] (#1) at (c#1) {};}

\begin{figure}[t]
\centering

\begin{tikzpicture}[scale=1]

\defvertex{A0}{ 0}{6}{\mygreen}
\defvertex{F0}{ 0}{0}{\mygreen}

\defvertex{A1}{4}{6}{\mygreen}
\defvertex{B1}{1}{2}{\mygreen}
\defvertex{C1}{2}{4}{\mygreen}
\defvertex{D1}{2}{0}{\mygreen}
\defvertex{E1}{3}{2}{\mygreen}
\defvertex{F1}{4}{0}{\mygreen}

\draw[thick, gray, dashed] (0.6, -1.5) -- (0.6, 7); 
\draw[thick, gray, dashed] (4.6, -1.5) -- (4.6, 7);

\HyperedgeTwo{A0}{F0}{lightgray}

\RoundHyperedge{A0}{A1}{C1}{lightgray}
\RoundHyperedge{C1}{B1}{E1}{lightgray}
\RoundHyperedge{F0}{D1}{B1}{lightgray}
\RoundHyperedge{D1}{F1}{E1}{lightgray}

\begin{scope}[xshift=4cm]
\defvertexw{A2}{4}{6}{\myred}
\defvertexw{B2}{1}{2}{\myred}
\defvertexw{C2}{2}{4}{\myred}
\defvertexw{D2}{2}{0}{\myred}
\defvertexw{E2}{3}{2}{\myred}
\defvertexw{F2}{4}{0}{\myred}

\RoundHyperedge{A1}{A2}{C2}{lightgray}
\RoundHyperedge{C2}{B2}{E2}{lightgray}
\RoundHyperedge{F1}{D2}{B2}{lightgray}
\RoundHyperedge{D2}{F2}{E2}{lightgray}
\end{scope}

\node at (2, 5) {\footnotesize$C_{1,1}$};
\node at (1, 0.75) {\footnotesize$C_{1,2}$};
\node at (2, 2.85) {\footnotesize$C_{1,3}$};
\node at (0, 2.85) {\footnotesize$C_0$};
\node at (3, 0.75) {\footnotesize$C_{1,4}$};
\node at (6, 5) {\footnotesize$C_{2,1}$};
\node at (5, 0.75) {\footnotesize$C_{2,2}$};
\node at (6, 2.85) {\footnotesize$C_{2,3}$};
\node at (7, 0.75) {\footnotesize$C_{2,4}$};

\node[above=4mm] at (A0) {\tiny $v_{0,1}$};
\node[above=4mm] at (A1) {\tiny $v_{1,1}$};
\node[above=4mm] at (A2) {\tiny $v_{2,1}$};

\node[above right = 1mm] at (B1) {\tiny $v_{1,2}$};
\node[above right = 1mm] at (B2) {\tiny $v_{2,2}$};

\node[above left = 1mm] at (E1) {\tiny $v_{1,5}$};
\node[above left = 1mm] at (E2) {\tiny $v_{2,5}$};

\node[below=4mm] at (F0) {\tiny $v_{0,6}$};
\node[below=4mm] at (F1) {\tiny $v_{1,6}$};
\node[below=4mm] at (F2) {\tiny $v_{2,6}$};

\node[below=4mm] at (D1) {\tiny $v_{1,4}$};
\node[below=4mm] at (D2) {\tiny $v_{2,4}$};

\node[below=2mm] at (C1) {\tiny $v_{1,3}$};
\node[below=2mm] at (C2) {\tiny $v_{2,3}$};

\node[font=\footnotesize,anchor=north] at (-0.2,-1) {layer 0};
\node[font=\footnotesize,anchor=north] at (2.5,-1) {layer 1};
\node[font=\footnotesize,anchor=north] at (6.5,-1) {layer 2};

\end{tikzpicture}

\caption{The $(3,2,2)$-MCA instance $I_2$, showing the values of the variables in the local optimum reached from the initial all-zero assignment: the variables of layer~$2$ have value~$1$ (red), and all other variables have value~$0$ (green).}
\label{fig:simplifiedmcaconstruction}
\end{figure}

\begin{table}[t!]
    \centering
    \renewcommand{\arraystretch}{1.5}
    \begin{tabular}{l|c|c|c|c}
        & $00/\greenb\greenb$ & $01/\greenb\redb$ & $10/\redb\greenb$ & $11/\redb\redb$ \\\hline
        $C_{0}(v_{0,1},v_{0,6})$
        & $14$ & $1$ & $15$ & $16$ \\
    \end{tabular}
    \caption{The weights of constraint $C_0$.}
    \label{tab:const_C_0}
\end{table}

\begin{table}[t!]
    \centering
    \begin{adjustbox}{width=1\textwidth}
    \renewcommand{\arraystretch}{1.5}
    \begin{tabular}{l|c|c|c|c|c|c|c|c}
          & $000/\greenb\greenb\greenb$ & $001/\greenb\greenb\redb$ & $010/\greenb\redb\greenb$ & $011/\greenb\redb\redb$ & $100/\redb\greenb\greenb$ & $101/\redb\greenb\redb$ & $110/\redb\redb\greenb$ & $111/\redb\redb\redb$ \\\hline
         $C_{i,1}(v_{i-1,1},v_{i,1},v_{i,3})$
         &  $2^{i+3}$ & $1$ & $2^{i+3} + 1$ & $2^{i+4}$ & $0$ & $1$ & $2^{i+3} + 1$ & $2^{i+3}$\\\hline
         $C_{i,2}(v_{i-1,6},v_{i,2},v_{i,4})$
         & $2$ & $0$ & $0$ & $14$ & $2$ & $12$ & $12$ & $14$\\\hline
         $C_{i,3}(v_{i,2},v_{i,3},v_{i,5})$
         & $2^{i+3} + 9$ & $2^{i+3} + 8$ & $0$ & $2^{i+3} + 6$ & $2^{i+3}$ & $0$  & $2^{i+3} + 2$ & $2^{i+3} + 3$\\\hline
         $C_{i,4}(v_{i,4},v_{i,5},v_{i,6})$
         & $13$ & $0$ & $0$ & $0$ & $0$ & $0$ & $0$ & $1$ \\
    \end{tabular}
    \end{adjustbox}
    \caption{The weights of the four constraints $C_{i,1},C_{i,2},C_{i,3}$ and $C_{i,4}$ for all $i\in [n]$.}
    \label{tab:constraints_C_i}
\end{table}

\paragraph{Initial assignment.}
Every variable is initialized as zero ($\greenb$).

\paragraph{Unique maximal improving flip sequence.}
We now characterize the unique maximal improving flip sequence for the construction above.

\begin{lemma}\label{lem:322-mca}
Starting from the all-zero assignment, the unique maximal improving flip sequence $S_n$ for $I_n$ is given recursively by
\begin{align}
S_0  & = (v_{0,1},v_{0,6}),  \label{eq:S0}\\
S_i  &=  (v_{i,1}, S_{i-1}, v_{i,2},v_{i,3}, S_{i-1}, v_{i,4}, v_{i,5}, v_{i,6}) \qquad (i\geq1).    \label{eq:Si}
\end{align}
Every flip in this sequence increases the objective value by exactly $1$.  
The resulting number of flips is $8\cdot2^n-6$.
\end{lemma}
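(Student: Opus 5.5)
Let me check the first few flips with the table to get the invariant right. At all-zero, $C_0(0,0)=14$. Flipping $v_{0,1}$ gives $15$, a gain of 1. Flipping $v_{0,6}$ first gives $1$. Then $C_0(1,1)=16$, so $v_{0,6}$ follows.

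I prove the lemma by induction on $i$, with a strengthened statement about layer $i$ embedded in a larger instance. The interface of layer $i$ to layer $i+1$ consists only of $v_{i,1}$ (in $C_{i+1,1}$) and $v_{i,6}$ (in $C_{i+1,2}$). Layers below $i$ touch layer $i+1$ only through layer $i$. So the state of layer $i+1$ matters to the lower layers only through the values of $v_{i+1,1}$, $v_{i+1,3}$, $v_{i+1,2}$ and $v_{i+1,4}$, and within those, only through the contributions of $C_{i+1,1}$ and $C_{i+1,2}$.

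I will formulate two "contexts" for layer $i$ (all variables of layers $<i$ zero, or all of layer $i$ at its final value). The claim is that in each context the unique maximal improving sequence restricted to layers $0..i$ is $S_i$ run forward, or an analogous reversed traversal $S_i'$, with every flip of gain exactly 1. Looking at the tables, $C_{i,1}$ with $v_{i-1,1}$ is symmetric in the sense that $v_{i-1,1}=1$ swaps the preferred value. The second traversal of $S_{i-1}$ must return layer $i-1$ from its final state back to a state where $v_{i-1,1},v_{i-1,6}$ look like they did initially, or toggle them again. So I expect the statement to be: starting with layer $\le i-1$ all zero or all "completed", a run of $S_{i-1}$ toggles $v_{i-1,1}$ and $v_{i-1,6}$ in the right order. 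The hypothesis will say that in layers $\le i$, while the upper layer's variables $v_{i+1,1},v_{i+1,3}$ (resp.\ $v_{i+1,2},v_{i+1,4}$) are in a fixed configuration, the only improving flips are exactly those of $S_i$ in order, each of gain $1$. It also says that the gain contributed by $C_{i+1,1}$ to flipping $v_{i,1}$ and by $C_{i+1,2}$ to flipping $v_{i,6}$ is $0$ in the "allowing" configurations and strongly negative (at least $-2^{i+3}$-ish, dominating) in the "blocking" ones.

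For the step, I will walk through \eqref{eq:Si} flip by flip, tracking the four constraints of layer $i$. For each prefix I compute the gain of every variable of layer $i$ from the tables and show that it is $\le 0$ except for the designated one, which is $+1$. For the variables of layer $i-1$ and below, I apply the induction hypothesis. This requires checking that $C_{i,1},C_{i,2}$ put layer $i-1$ into an allowing configuration exactly during the two $S_{i-1}$ blocks, and into a blocking one otherwise. The large weights $2^{i+3}$ in $C_{i,1}$ and $C_{i,3}$ are what make $v_{i,1},v_{i,3}$ dominate the gain of $v_{i-1,1}$, whose own local gains are bounded by $2^{i+2}$-scale quantities from layer $i-1$.

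The main obstacle is stating the correct induction hypothesis, namely which configurations of $(v_{i,1},v_{i,3})$ and $(v_{i,2},v_{i,4})$ unblock layer $i-1$ and in which direction. The subtle point is the second traversal, since after the first $S_{i-1}$ layer $i-1$ is in its final state. I expect that $S_{i-1}$ is "reversible": the same sequence, applied to the final state, flips all variables back, because every $C$ table has complementary structure around these transitions. This must be verified from the tables rather than by symmetry. Once the hypothesis is fixed, the remaining checks are finite table lookups, and uniqueness follows because at every point exactly one flip is improving. The count satisfies $|S_0|=2$ and $|S_i|=2|S_{i-1}|+6$, giving $|S_n|+6=8\cdot 2^n$.
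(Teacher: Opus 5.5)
Your proposal leaves its central ingredient, the induction hypothesis, unformulated. The tentative shape you give for it fails at the second recursive block. You suggest that in the ``allowing'' configurations the upper layer contributes $0$ to the gains of $v_{i-1,1}$ and $v_{i-1,6}$. Write $F_{i-1}$ for the sum of $C_0$ and all constraints of layers $1,\dots,i-1$. Your suggestion is right for the first occurrence of $S_{i-1}$: with $X_i=100000$, the lower layers see exactly $F_{i-1}$ up to a constant. But after that block, the lower layers are in the state $\mathbf R_{i-1}$ (layer $i-1$ all $1$, everything below $0$), and this is a local optimum of $F_{i-1}$ --- exactly what your first context asserts. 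So $S_{i-1}$ cannot be ``reversible'' with respect to the lower layers' own objective: from $\mathbf R_{i-1}$, flipping $v_{i-1,1}$ has gain $1-2^{i+3}$ under $F_{i-1}$.

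The second block runs only because, with $X_i=111000$, the constraints $C_{i,1}$ and $C_{i,2}$ add, up to a constant, $2^{i+3}(1-v_{i-1,1})+12\,v_{i-1,6}$ to the objective seen by the lower layers. This is a dominating positive push on $v_{i-1,1}$ plus a small penalty on $v_{i-1,6}$. Relatedly, $C_{i,2}$ never supplies a $2^{i+3}$-scale blocking term; its effect on $v_{i-1,6}$ is at most $12$. The missing idea is to carry a second statement with a modified objective. With respect to $\widehat F_i:=F_i+2^{i+4}(1-v_{i,1})+12\,v_{i,6}$, starting from $\mathbf R_i$, the unique maximal improving sequence is again $S_i$, with all gains $1$, ending in the all-zero state $\mathbf G_i$ on layers $0,\dots,i$. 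It is the same sequence, not a reversed traversal; the lemma itself forces the second block to be $S_{i-1}$.

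Even with this pair of statements, your plan omits two verifications. First, you must prove the $\widehat F_i$-statement at level $i$ as well. This is cheap: compare complementary words of $X_i$ ($000000$ versus $111111$, $100000$ versus $011111$, and so on), the latter taken under $\widehat F_i$. They induce the same objective on the preceding layers up to a constant, since the extra terms of $\widehat F_i$ involve only $v_{i,1},v_{i,6}$. They also yield the same improving layer-$i$ variables, each with gain $1$, so one table check serves both contexts. Second, during each recursive block you must rule out layer-$i$ flips at the intermediate values of $(v_{i-1,1},v_{i-1,6})$. These are $00,10$ in the first block and $11,01$ in the second, because $v_{i-1,1}$ is the first and $v_{i-1,6}$ the last flip of $S_{i-1}$, each flipped exactly once. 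Your length count is correct.
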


\begin{proof}
Recall that we identify green ($\greenb$) with $0$ and red ($\redb$) with $1$.
For $i\geq0$, let $F_i$ denote the objective consisting of the sum of $C_0$
and all constraints $C_{j,k}$ with $1\le j\le i$, $1\leq k\leq4$.
Thus, $F_n$ is the objective of the complete instance.

Let $\mathbf G_i$ denote the assignment in which all variables in
$X_0\cup\cdots\cup X_i$ are green, and let $\mathbf R_i$ denote the
assignment in which all variables of $X_i$ are red while all
variables in $X_0\cup\cdots\cup X_{i-1}$ are green.
For $i\ge0$, define the auxiliary objective
\begin{equation*}
    \widehat F_i := F_i+2^{i+4}(1-v_{i,1})+12v_{i,6}.
\end{equation*}
This auxiliary objective is used only in the proof to capture the contribution of layer $i+1$ to the preceding layers during the second occurrence of $S_i$ in $S_{i+1}$.

We prove simultaneously by induction on $i$ that
\begin{enumerate}
    \item[(A$_i$)]
    with respect to $F_i$, starting from $\mathbf G_i$, the unique maximal
    improving sequence is $S_i$; every flip has gain $1$, and the
    sequence ends in the local optimum $\mathbf R_i$;

    \item[(B$_i$)]
    with respect to $\widehat F_i$, starting from $\mathbf R_i$, the unique
    maximal improving sequence is again $S_i$; every flip has gain
    $1$, and the sequence ends in the local optimum $\mathbf G_i$.
\end{enumerate}

For $i=0$, we have $F_0=C_0$ and 
\[
1 = C_0(01) < C_0(00)=14<C_0(10)=15<C_0(11)=16.
\]
Hence, the unique maximal improving sequence from $00$ is $S_0=(v_{0,1},v_{0,6})$, it ends in $11 = \mathbf{R}_0$, and both flips have gain $1$.
Moreover, $\widehat F_0 = F_0+16(1-v_{0,1})+12v_{0,6}$. Starting from $11$, the objective values along the same sequence of flips are $28, 29, 30$ on the assignments $11, 01, 00 = \mathbf{G}_0$, respectively, while the objective value for the assignment $10$ is 15.
Therefore, (A$_0$) and (B$_0$) hold.

Now let $i\geq1$ and assume (A$_{i-1}$) and (B$_{i-1}$).
Let $P_i:=2^{i+3}$, $x:=v_{i-1,1}$, and $y:=v_{i-1,6}$.
We write the assignment of the variables in $X_i$ as a six-bit word
in the order $v_{i,1},\ldots,v_{i,6}$. For a fixed assignment of $X_i$, we call the objective ($F_i$ in (A$_i$) and $\widehat F_i$ in (B$_i$)), viewed as a function of the variables in $X_0\cup\cdots\cup X_{i-1}$, the \emph{objective on the preceding layers}.

We prove (A$_i$) and (B$_i$) simultaneously. 
Initially, in (A$_i$), the variables in $X_i$ have assignment $000000$, whereas in (B$_i$) they
have assignment $111111$. 
The variables in $X_0\cup\cdots\cup X_{i-1}$ are initially in $\mathbf G_{i-1}$ in both cases.

We claim that the objective on the preceding layers is the same in both cases, up to an additive constant.
Suppose that $X_i=000000$, as in (A$_i$). The constraints $C_{i,3}$ and $C_{i,4}$
contain only variables of $X_i$ and therefore contribute only an
additive constant to the objective on the preceding layers. Moreover,
$C_{i,2}(y,0,0)=2$ for both values of $y$, while
$C_{i,1}(x,0,0)$ equals $P_i$ for $x=0$ and $0$ for $x=1$.
Thus, up to an additive constant, the objective on the preceding
layers is
\[
    F_{i-1}+P_i(1-x).
\]
For the corresponding assignment $X_i=111111$ in (B$_i$),
$C_{i,1}(x,1,1)$ has the same dependence on $x$, up to an additive
constant, and $C_{i,2}(y,1,1)=14$ for both values of $y$.
The additional terms in $\widehat F_i$ depend only on variables in
$X_i$ and hence also contribute only an additive constant. Therefore,
the objective on the preceding layers is again
$F_{i-1}+P_i(1-x)$, up to an additive constant.

By (A$_{i-1}$), at $\mathbf G_{i-1}$ the unique improving variable
with respect to $F_{i-1}$ is $x=v_{i-1,1}$, and flipping it increases
the objective value by $1$. The additional term $P_i(1-x)$ decreases
the change caused by flipping $x$ by $P_i$, while it does not affect
the change caused by flipping any other variable. Hence, no variable
in $X_0\cup\cdots\cup X_{i-1}$ is improving in both (A$_i$) and (B$_i$).

At this point $(x,y)=00$. 
Comparing the values of $F_i$ in (A$_i$) and of $\widehat F_i$ in (B$_i$) 
before and after each possible flip in $X_i$ shows that $v_{i,1}$ is the unique
improving variable in $X_i$ in both (A$_i$) and (B$_i$), and that
flipping it increases the objective value by exactly $1$. Thus,
$v_{i,1}$ is the unique improving variable in both cases.

The remaining steps use the same type of calculation.
\cref{tab:direct-induction} summarizes the information about layer
$i$ that we will use below. In each row, the first and second columns give
the corresponding assignments of $X_i$ in (A$_i$) and (B$_i$),
respectively. The third column gives, for both assignments, the
objective on $X_0\cup\cdots\cup X_{i-1}$ up to an additive constant.
The last column gives all improving variables in $X_i$ for the values
of $(x,y)$ that occur at the corresponding point of the sequence;
``none'' means that no variable in $X_i$ is improving. Every flip
listed in the last column increases the objective value by exactly
$1$.

\begin{table}[htbp]
  \centering
  \caption{Information about layer $i$ used in the induction. The third column gives the objective on $X_0\cup\cdots\cup X_{i-1}$, up to an additive constant, for both assignments in the corresponding row.
  The last column applies to both assignments and gives all improving variables in $X_i$ for the indicated values of $(x,y)$; each listed flip increases the objective value by exactly $1$.}
  \label{tab:direct-induction}
  \renewcommand{\arraystretch}{1.25}
  \begin{tabular}{c | c | c | c}
    $X_i$ under $F_i$
    & $X_i$ under $\widehat F_i$
    & objective on preceding layers
    & improving variables in $X_i$\\
    \hline
    $000000$
    & $111111$
    & $F_{i-1}+P_i(1-x)$
    & $v_{i,1}$ for $(x,y)=00$
    \\
    \hline
    $100000$
    & $011111$
    & $F_{i-1}$
    &
    \begin{tabular}[c]{@{}c@{}}
      none for $(x,y)=00,10$,\\
      $v_{i,2}$ for $(x,y)=11$
    \end{tabular}
    \\
    \hline
    $110000$
    & $001111$
    & $F_{i-1}+12y$
    & $v_{i,3}$ for $(x,y)=11$
    \\
    \hline
    $111000$
    & $000111$
    & $\widehat F_{i-1}$
    &
    \begin{tabular}[c]{@{}c@{}}
      none for $(x,y)=11,01$,\\
      $v_{i,4}$ for $(x,y)=00$
    \end{tabular}
    \\
    \hline
    $111100$
    & $000011$
    & $F_{i-1}+P_i(1-x)$
    & $v_{i,5}$ for $(x,y)=00$
    \\
    \hline
    $111110$
    & $000001$
    & $F_{i-1}+P_i(1-x)$
    & $v_{i,6}$ for $(x,y)=00$
    \\
    \hline
    $111111$
    & $000000$
    & $F_{i-1}+P_i(1-x)$
    & none for $(x,y)=00$
  \end{tabular}
\end{table}

For the fourth row, we use
\[
    F_{i-1}+P_i(1-x)+12y
    =
    F_{i-1}+2^{i+3}(1-v_{i-1,1})+12v_{i-1,6}
    =
    \widehat F_{i-1}.
\]

We now continue the argument simultaneously for (A$_i$) and (B$_i$).
Whenever we refer to a row of \cref{tab:direct-induction}, the first
column gives the assignment of $X_i$ in (A$_i$), while the second
column gives its assignment in (B$_i$). By the third and fourth
columns, however, the objective on the preceding layers and the
improving variables in $X_i$ are the same in both cases. Thus, the
following argument applies to both.

After flipping $v_{i,1}$, we reach the second row of
\cref{tab:direct-induction}, and the objective on the preceding layers
is $F_{i-1}$. Hence, by (A$_{i-1}$), their unique improving sequence
is $S_{i-1}$, taking $\mathbf G_{i-1}$ to $\mathbf R_{i-1}$.

By the recursive definition of $S_{i-1}$, the variable $x=v_{i-1,1}$ is its first flip and $y=v_{i-1,6}$ is its last flip, and these two variables are flipped exactly once.
Consequently, while $S_{i-1}$ is performed, $(x,y)$ changes from
$00$ to $10$ and finally to $11$. By the second row of
\cref{tab:direct-induction}, no variable in $X_i$ is improving while
$(x,y)$ is $00$ or $10$. Thus, the sequence $S_{i-1}$ is forced to
continue. When it is complete, the preceding layers are in
$\mathbf R_{i-1}$, which is a local optimum with respect to
$F_{i-1}$ by (A$_{i-1}$), and $(x,y)=11$. The second row of the table
then shows that $v_{i,2}$ is the unique improving variable.

After flipping $v_{i,2}$, we reach the third row of the table, and
the objective on the preceding layers is $F_{i-1}+12y$. These
variables are still in $\mathbf R_{i-1}$. By (A$_{i-1}$),
$\mathbf R_{i-1}$ is a local optimum with respect to $F_{i-1}$.
Since $y=1$, the additional term $12y$ decreases the change in the
objective caused by flipping $y$, while it does not affect any other
variable. Hence, no variable in the preceding layers is improving.
By the third row of \cref{tab:direct-induction}, $v_{i,3}$ is therefore
the unique improving variable.

After flipping $v_{i,3}$, we reach the fourth row of the table. The
objective on the preceding layers is now $\widehat F_{i-1}$.
By (B$_{i-1}$), their unique improving sequence is $S_{i-1}$, now
taking $\mathbf R_{i-1}$ back to $\mathbf G_{i-1}$.

Again, $x$ is the first variable and $y$ the last variable flipped in
$S_{i-1}$. Hence, during this second occurrence of $S_{i-1}$,
$(x,y)$ changes from $11$ to $01$ and finally to $00$. By the fourth
row of \cref{tab:direct-induction}, no variable in $X_i$ is improving
while $(x,y)$ is $11$ or $01$. Thus, the sequence $S_{i-1}$ is again
forced to continue. When it is complete, the preceding layers are in
$\mathbf G_{i-1}$, which is a local optimum with respect to
$\widehat F_{i-1}$ by (B$_{i-1}$), and $(x,y)=00$. The fourth row of
the table then shows that $v_{i,4}$ is the unique improving variable.

After flipping $v_{i,4}$, we reach the fifth row of the table. The
objective on the preceding layers is again
$F_{i-1}+P_i(1-x)$, and these variables are in $\mathbf G_{i-1}$.
By the same argument as for the initial assignment, no variable in
the preceding layers is improving. The fifth row of the table
therefore shows that $v_{i,5}$ is the unique improving variable.

After flipping $v_{i,5}$, we reach the sixth row. The objective on
the preceding layers is unchanged, and their assignment is still
$\mathbf G_{i-1}$. Hence, no variable in the preceding layers is
improving, and the sixth row of the table shows that $v_{i,6}$ is the
unique improving variable.

We have therefore shown that in both (A$_i$) and (B$_i$) the unique
improving sequence is
\[
    (v_{i,1},S_{i-1},v_{i,2},v_{i,3},
     S_{i-1},v_{i,4},v_{i,5},v_{i,6})
    =S_i.
\]
Every flip in this sequence increases the objective value by exactly
$1$: this follows from the induction hypotheses for the two
occurrences of $S_{i-1}$ and from
\cref{tab:direct-induction} for the flips of variables in $X_i$.

After the final flip, the preceding layers are in
$\mathbf G_{i-1}$, and the last row of the table applies. No variable
in $X_i$ is improving, and the objective on the preceding layers is
$F_{i-1}+P_i(1-x)$. As above, no variable in the preceding layers is
improving either. Hence, the resulting assignment is a local optimum.

In (A$_i$), the variables in $X_i$ have changed from $000000$ to
$111111$, while the preceding layers have returned to
$\mathbf G_{i-1}$. Thus, the sequence ends in $\mathbf R_i$.
In (B$_i$), the variables in $X_i$ have changed from $111111$ to
$000000$, while the preceding layers have again returned to
$\mathbf G_{i-1}$. Thus, the sequence ends in $\mathbf G_i$.
This completes the induction for (A$_i$) and (B$_i$). For $i=n$, statement (A$_n$) is exactly the claim of the lemma, since $F_n$ is the objective of $I_n$ and $\mathbf G_n$ is the all-zero assignment.

It remains to count the flips. Let $L_i:=|S_i|$. From
\eqref{eq:S0} and \eqref{eq:Si}, we have $L_0=2$ and
$L_i=2L_{i-1}+6$. Therefore, $L_i=8\cdot2^i-6$ for all $i$.
\end{proof}

We now prove the main result of this section.

\begin{proof}[Proof of \cref{thm:322}]
    Observe that $I_n$ has $6n+2$ variables and $4n+1$ constraints, and every constraint weight is at most $2^{n+4}$ and can thus be encoded with $O(n)$ bits.
    Combining this with \cref{lem:322-mca}, we obtain the all-exp property for $(3,2,2)$-MCA.
\end{proof}


\section{TSP with the 2-\opt neighborhood has the all-exp property}
\label{sec:2opt}

In this section, we prove the following:

\thmtwoopt*

In particular, we construct an instance $I_n$ for each $n \in \mathbb{N}$ and an initial tour with $O(n)$ vertices.
We then show that all maximal improving swap sequences starting from this tour have length $\Omega(2^{n})$; in fact, they are all identical up to interchanging certain commuting swaps and terminate with the same local optimum.

\begin{figure}[ht]
    \centering
\begin{tikzpicture}[scale=0.9,
  v/.style={circle,fill=black,inner sep=1.5pt},
  edge/.style={black,line width=0.7pt},
  rededge/.style={red,line width=1.5pt},
  lab/.style={font=\footnotesize\itshape}
]

\begin{scope}[xshift=0cm]
  \coordinate (A)  at (0,6);
  \coordinate (Ap) at (1.6,6);
  \coordinate (C)  at (-1,3);
  \coordinate (Cp) at (0.6,3);
  \coordinate (B)  at (0,0);
  \coordinate (Bp) at (1.6,0);

  \draw[rededge] (A)--(Ap) (C)--(Cp) (B)--(Bp);
  \draw[edge] (A)--(C)--(B) (Ap)--(Cp)--(Bp);
  \draw[edge] (A)--(Cp) (C)--(Ap) (C)--(Bp) (B)--(Cp);

  \foreach \p in {A,Ap,C,Cp,B,Bp} \node[v] at (\p) {};
  \node[v] at ($(A)!0.5!(Ap)$) {};
  \node[v] at ($(B)!0.5!(Bp)$) {};
  \node[v] at ($(C)!0.5!(Cp)$) {};

  \node[lab,above left]  at (A)  {$A$};
  \node[lab,above right] at (Ap) {$A'$};
  \node[lab,left]        at (C)  {$C$};
  \node[lab,right]       at (Cp) {$C'$};
  \node[lab,below left]  at (B)  {$B$};
  \node[lab,below right] at (Bp) {$B'$};
\end{scope}

\begin{scope}[xshift=3.6cm]
  \coordinate (X)   at (1.6,6);
  \coordinate (Xp)  at (0,6);
  \coordinate (A)   at (6,6);
  \coordinate (Ap)  at (7.6,6);
  \coordinate (Cp)  at (2.2,4);
  \coordinate (D)   at (5.4,4);
  \coordinate (C)   at (1.6,2.5);
  \coordinate (E)   at (3.8,2.5);
  \coordinate (Dp)  at (6,2.5);
  \coordinate (Yp)  at (0,0);
  \coordinate (Y)   at (1.6,0);
  \coordinate (B)   at (6,0);
  \coordinate (Bp)  at (7.6,0);

    \filldraw[
        rounded corners=5pt,
        fill=blue!7,
        draw=blue!28,
        line width=0.4pt
    ] (-0.35,3.65) rectangle (7.95,7.75);
    
    \filldraw[
        rounded corners=5pt,
        fill=orange!10,
        draw=orange!32,
        line width=0.4pt
    ] (-0.35,-1.80) rectangle (7.95,2.95);

  \draw[rededge]
      (X)--(Xp) (A)--(Ap) (C)--(Cp) (D)--(Dp)
      (Y)--(Yp) (B)--(Bp);

  \draw[edge]
      (X)--(Cp) (X)--(D) (Xp)--(Cp) (Xp)--(D)
      (A)--(Cp) (A)--(D) (Ap)--(D) (Cp)--(D);
  \draw[edge,bend left=45] (Xp) to (Ap);
  \draw[edge,bend left=35] (Xp) to (A);
  \draw[edge,bend left=35] (X) to (Ap);
  \draw[edge,bend left=18] (X) to (A);

  \draw[edge] (B)--(E);
  \draw[edge,bend left=18] (B) to (Y);
  \draw[edge] (Bp)--(C) (Bp)--(Dp) (Bp)--(E);
  \draw[edge,bend left=35] (Bp) to (Y);
  \draw[edge,bend left=45] (Bp) to (Yp);
  \draw[edge] (C)--(E);
  \draw[edge] (C)--(Y) (C)--(Yp);
  \draw[edge] (Dp)--(E);
  \draw[edge] (Dp)--(Y) (Dp)--(Yp);
  \draw[edge] (E)--(Y) (E)--(Yp);

  \foreach \p in {X,Xp,A,Ap,Cp,D,C,E,Dp,Y,Yp,B,Bp}
      \node[v] at (\p) {};
  \node[v] at ($(X)!0.5!(Xp)$) {};
  \node[v] at ($(A)!0.5!(Ap)$) {};
  \node[v] at ($(Y)!0.5!(Yp)$) {};
  \node[v] at ($(B)!0.5!(Bp)$) {};
  \node[v] at ($(C)!0.5!(Cp)$) {};
  \node[v] at ($(D)!0.5!(Dp)$) {};

  \node[lab,above left]  at (X)  {$X$};
  \node[lab,above]  at (Xp) {$X'$};
  \node[lab,above]       at (A)  {$A$};
  \node[lab,above] at (Ap) {$A'$};
  \node[lab,left]        at (Cp) {$C'$};
  \node[lab,right]       at (D)  {$D$};
  \node[lab,left]        at (C)  {$C$};
  \node[lab,above]       at (E)  {$E$};
  \node[lab,right]       at (Dp) {$D'$};
  \node[lab,below left]  at (Y)  {$Y$};
  \node[lab,below]  at (Yp) {$Y'$};
  \node[lab,below]       at (B)  {$B$};
  \node[lab,below] at (Bp) {$B'$};
\end{scope}

\begin{scope}[xshift=13.2cm]
  \coordinate (Xp) at (0,6);
  \coordinate (X)  at (1.6,6);
  \coordinate (C)  at (0.8,3);
  \coordinate (Yp) at (0,0);
  \coordinate (Y)  at (1.6,0);

  \draw[rededge] (X)--(Xp) (Y)--(Yp);
  \draw[edge] (X)--(C)--(Xp) (Y)--(C)--(Yp);

  \foreach \p in {X,Xp,C,Y,Yp} \node[v] at (\p) {};
  \node[v] at ($(X)!0.5!(Xp)$) {};
  \node[v] at ($(Y)!0.5!(Yp)$) {};

  \node[lab,above left]  at (X)  {$X$};
  \node[lab,above right] at (Xp) {$X'$};
  \node[lab,left]        at (C)  {$C$};
  \node[lab,below left]  at (Y)  {$Y$};
  \node[lab,below right] at (Yp) {$Y'$};
\end{scope}

\end{tikzpicture}%

    \caption{\textbf{Left}: Gadget for $F_0$. \textbf{Middle}: Upper gadget and lower gadget for $F_i, i \in [n-1]$. \textbf{Right}: Gadget for $F_n$. All subscripts are omitted.
    }
    \label{fig:circle}
\end{figure}

\begin{table}[ht!]
    \centering
\begin{tabular}{@{}l | *{9}{c}@{}}
Edge
& $AD$ & $XA$ & $XA'$ & $XC'$ & $XD$
& $X'A$ & $X'A'$ & $X'C'$ & $X'D$ \\
Weight
& $2a2^i-3$ & $1$ & $2a2^i-3$ & $1$ & $2a2^i-3$
& $a2^i-1$ & $a2^i$ & $a2^i-1$ & $a2^i$ \\
\hline
Edge
& $BY$ & $B'E$ & $B'Y'$ & $CE$
& $CY'$ & $D'Y$ & $EY$ & & \\
Weight
& $3$ & $2^{j_i}-1$ & $1$ & $1$
& $2$ & $4$ & $2^{j_i}$ & & \\
\end{tabular}
    
    \caption{Weight assignment of the non-zero-weighted edges in the upper (top) and lower gadgets (bottom)  for $F_i$, $i \in [n-1]$. 
    Subscripts $i$ are omitted.
    $a = 2^{n+2}+3$. $j_i = n+1-i$. 
    }
    
    \label{tab:circle_weight}
\end{table}

\subsection{The construction}
The instance $I_n$ consists of $n+1$ layers $F_0, F_1, \dots, F_n$.
Each layer contains one or two gadgets as depicted in \cref{fig:circle} and described in more detail below.
Two consecutive layers interact via \emph{interfaces}.
In particular, for a symbol $T$, we call the path $(T, T'', T')$ with edge weights zero a \defi{$T$-interface}.
We also define $a := 2^{n+2}+3$ and $j_i := n+1-i$ for $i \in [n-1]$.

\paragraph{Gadgets.}
First, $F_0$ has an $A_0$-interface, a $B_0$-interface, together with three vertices $C_0$, $C''_0$, and $C'_0$.
The vertices $C_0$ and $C'_0$ are connected with $C''_0$, $A_0, A'_0, B_0$ and $B'_0$.
Among these edges, $A_0C'_0$, $A'_0C_0$, $C_0B_0$, and $C'_0B'_0$ have weights $a$, $4$, $3$, and $a-1$, respectively, while other edges have weight zero.

Second, for $i \in [n-1]$, $F_i$ consists of two gadgets connected by two paths of length two.
The \defi{upper gadget} of $F_i$ consists of an $X_i$-interface, an $A_i$-interface, and two vertices $C'_i$ and $D_i$.
The \defi{lower gadget} of $F_i$ consists of a $Y_i$-interface, a $B_i$-interface, and three vertices $C_i$, $D'_i$, and $E_i$.
We add two paths of length two $(C'_i,C''_i,C_i)$ and $(D'_i,D''_i,D_i)$ with edge weights zero.
The edges with nonzero weights in this layer are listed in \cref{tab:circle_weight}.
Additionally, we have the following edges with zero weights:
\[
A_iC'_i, A'_iD_i, C'_iD_i, D'_iE_i, C_iY_i, E_iY'_i, B_iE_i, D'_iY'_i, B'_iD'_i, B'_iC_i, B'_iY_i.
\]

Last, $F_n$ has an $X_n$-interface, a $Y_n$-interface, and a vertex $C_n$. 
We then add the edge $C_nX_n$ with weight $a 2 ^n$ (recall that $a = 2^{n+2}+3$), and edges $C_nX'_n, C_nY_n$, and $C_nY'_n$ with weight zero. 

\paragraph{Interface identification.}
For two consecutive layers $F_i$ and $F_{i+1}$, we identify the $A_i$-interface with the reverse of the $X_{i+1}$-interface; more precisely, $A_i \equiv X'_{i+1}, A''_i \equiv X''_{i+1}$, and $A'_i\equiv X_{i+1}$.
Similarly, we identify the $B_i$-interface with the reverse of the $Y_{i+1}$-interface; that is $B_i \equiv Y'_{i+1}, B''_i \equiv Y''_{i+1}$, and $B'_i\equiv Y_{i+1}$.
    
\paragraph{Initial tour.} 
For a symbol $T$, we denote by $T \to T'$ the path $(T, T'', T')$ and $T' \to T$ the reverse.
Then consider $\tau_0$ to be the union of the edges in the following paths:
\begin{itemize}
    \item In $F_0$, the path $(A'_0 \to A_0, C_0 \to C'_0, B_0 \to B'_0)$; 
    \item In $F_i$ for $i \in [n-1]$, the two paths 
    \[
    (A'_i \to A_i, X_i \to X'_i) \qquad \text{and} \qquad (Y'_i \to Y_i, C_i \to C'_i, D_i \to D'_i, E_i, B_i \to B'_i);
    \]
    \item In $F_n$, the path $(Y'_n \to Y_n, C_n, X_n \to X'_n)$. 
\end{itemize}

By construction, $\tau_0$ is then a Hamiltonian cycle.
Note that descriptions of paths in consecutive layers above overlap precisely on their identified interfaces.
See \cref{fig:initialtour} for an illustration of $I_4$ together with the prescribed initial tour.

Let $G_n$ denote the graph obtained from the construction above.
Note that formally a TSP instance requires a complete graph.
We can simply obtain this complete graph $G'_n$ from $G_n$ by adding the missing edges, each with a very large weight (say $2M + 1$, where $M$ is the largest edge weight in $G_n$).
The initial tour contains only edges of $G_n$.
Moreover, any $2$-swap that introduces an edge of $E(G'_n)\setminus E(G_n)$ adds weight at least $2M+1$, while the two removed edges have total weight at most $2M$.
Hence, such a swap cannot be improving.
Consequently, no improving 2-\opt sequence starting from the initial tour ever uses an edge of $E(G'_n)\setminus E(G_n)$, and we may restrict our analysis from now on to $G_n$.

\begin{figure}[ht!]
    \centering
\begin{tikzpicture}[scale=0.7,
  v/.style={circle,fill=black,inner sep=1.5pt},
  edge/.style={gray,line width=0.25pt},
  rededge/.style={red,line width=1.5pt},
  lab/.style={font=\footnotesize\itshape}
]

\begin{scope}[xshift=0cm]
  \coordinate (A)  at (0,6);
  \coordinate (Ap) at (1.6,6);
  \coordinate (C)  at (-1,3);
  \coordinate (Cp) at (0.6,3);
  \coordinate (B)  at (0,0);
  \coordinate (Bp) at (1.6,0);

  \draw[edge] (C)--(B) (Ap)--(Cp)--(Bp);
  \draw[edge] (A)--(Cp) (C)--(Ap) (C)--(Bp);
  \draw[rededge] (Ap)--(A)--(C)--(Cp)--(B)--(Bp);
  
  \foreach \p in {A,Ap,C,Cp,B,Bp} \node[v] at (\p) {};
  \node[v] at ($(A)!0.5!(Ap)$) {};
  \node[v] at ($(B)!0.5!(Bp)$) {};
  \node[v] at ($(C)!0.5!(Cp)$) {};
\end{scope}

\begin{scope}[xshift=0cm]
  \coordinate (X)   at (1.6,6);
  \coordinate (Xp)  at (0,6);
  \coordinate (A)   at (6,6);
  \coordinate (Ap)  at (7.6,6);
  \coordinate (Cp)  at (2.2,4);
  \coordinate (D)   at (5.4,4);
  \coordinate (C)   at (1.6,2.5);
  \coordinate (E)   at (3.8,2.5);
  \coordinate (Dp)  at (6,2.5);
  \coordinate (Yp)  at (0,0);
  \coordinate (Y)   at (1.6,0);
  \coordinate (B)   at (6,0);
  \coordinate (Bp)  at (7.6,0);

  \draw[rededge] (A)--(Ap) (C)--(Cp) (D)--(Dp) (B)--(Bp);

  \draw[edge] (X)--(Cp) (X)--(D) (Xp)--(Cp) (Xp)--(D) (A)--(Cp) (A)--(D) (Ap)--(D); 
  \draw[rededge] (Cp)--(D);
  \draw[edge,bend left=45] (Xp) to (Ap);
  \draw[edge,bend left=35] (Xp) to (A);
  \draw[edge,bend left=35] (X) to (Ap);
  \draw[rededge,bend left=18] (X) to (A);

  \draw[edge,bend left=18] (B) to (Y);
  \draw[edge] (Bp)--(C) (Bp)--(Dp)  (Bp)--(E);
  \draw[edge,bend left=35] (Bp) to (Y);
  \draw[edge,bend left=45] (Bp) to (Yp);
  \draw[edge] (C)--(E);
  \draw[edge]  (C)--(Yp);
  \draw[edge] (Dp)--(Y) (Dp)--(Yp);
  \draw[edge] (E)--(Y)  (E)--(Yp);
  \draw[rededge] (C)--(Y) (Dp)--(E) (B)--(E);

  \foreach \p in {X,Xp,Cp,D,C,E,Dp,Y,Yp}
      \node[v] at (\p) {};
  \node[v] at ($(A)!0.5!(Ap)$) {};
  \node[v] at ($(B)!0.5!(Bp)$) {};
  \node[v] at ($(C)!0.5!(Cp)$) {};
  \node[v] at ($(D)!0.5!(Dp)$) {};
\end{scope}

\begin{scope}[xshift=6cm]
  \coordinate (X)   at (1.6,6);
  \coordinate (Xp)  at (0,6);
  \coordinate (A)   at (6,6);
  \coordinate (Ap)  at (7.6,6);
  \coordinate (Cp)  at (2.2,4);
  \coordinate (D)   at (5.4,4);
  \coordinate (C)   at (1.6,2.5);
  \coordinate (E)   at (3.8,2.5);
  \coordinate (Dp)  at (6,2.5);
  \coordinate (Yp)  at (0,0);
  \coordinate (Y)   at (1.6,0);
  \coordinate (B)   at (6,0);
  \coordinate (Bp)  at (7.6,0);

  \draw[rededge] (A)--(Ap) (C)--(Cp) (D)--(Dp) (B)--(Bp);

  \draw[edge] (X)--(Cp) (X)--(D) (Xp)--(Cp) (Xp)--(D) (A)--(Cp) (A)--(D) (Ap)--(D); 
  \draw[rededge] (Cp)--(D);
  \draw[edge,bend left=45] (Xp) to (Ap);
  \draw[edge,bend left=35] (Xp) to (A);
  \draw[edge,bend left=35] (X) to (Ap);
  \draw[rededge,bend left=18] (X) to (A);

  \draw[edge,bend left=18] (B) to (Y);
  \draw[edge] (Bp)--(C) (Bp)--(Dp)  (Bp)--(E);
  \draw[edge,bend left=35] (Bp) to (Y);
  \draw[edge,bend left=45] (Bp) to (Yp);
  \draw[edge] (C)--(E);
  \draw[edge]  (C)--(Yp);
  \draw[edge] (Dp)--(Y) (Dp)--(Yp);
  \draw[edge] (E)--(Y)  (E)--(Yp);
  \draw[rededge] (C)--(Y) (Dp)--(E) (B)--(E);

  \foreach \p in {X,Xp,Cp,D,C,E,Dp,Y,Yp}
      \node[v] at (\p) {};
  \node[v] at ($(A)!0.5!(Ap)$) {};
  \node[v] at ($(B)!0.5!(Bp)$) {};
  \node[v] at ($(C)!0.5!(Cp)$) {};
  \node[v] at ($(D)!0.5!(Dp)$) {};

\end{scope}

\begin{scope}[xshift=12cm]
  \coordinate (X)   at (1.6,6);
  \coordinate (Xp)  at (0,6);
  \coordinate (A)   at (6,6);
  \coordinate (Ap)  at (7.6,6);
  \coordinate (Cp)  at (2.2,4);
  \coordinate (D)   at (5.4,4);
  \coordinate (C)   at (1.6,2.5);
  \coordinate (E)   at (3.8,2.5);
  \coordinate (Dp)  at (6,2.5);
  \coordinate (Yp)  at (0,0);
  \coordinate (Y)   at (1.6,0);
  \coordinate (B)   at (6,0);
  \coordinate (Bp)  at (7.6,0);

  \draw[rededge] (A)--(Ap) (C)--(Cp) (D)--(Dp) (B)--(Bp);

  \draw[edge] (X)--(Cp) (X)--(D) (Xp)--(Cp) (Xp)--(D) (A)--(Cp) (A)--(D) (Ap)--(D); 
  \draw[rededge] (Cp)--(D);
  \draw[edge,bend left=45] (Xp) to (Ap);
  \draw[edge,bend left=35] (Xp) to (A);
  \draw[edge,bend left=35] (X) to (Ap);
  \draw[rededge,bend left=18] (X) to (A);

  \draw[edge,bend left=18] (B) to (Y);
  \draw[edge] (Bp)--(C) (Bp)--(Dp)  (Bp)--(E);
  \draw[edge,bend left=35] (Bp) to (Y);
  \draw[edge,bend left=45] (Bp) to (Yp);
  \draw[edge] (C)--(E);
  \draw[edge]  (C)--(Yp);
  \draw[edge] (Dp)--(Y) (Dp)--(Yp);
  \draw[edge] (E)--(Y)  (E)--(Yp);
  \draw[rededge] (C)--(Y) (Dp)--(E) (B)--(E);

  \foreach \p in {X,Xp,Cp,D,C,E,Dp,Y,Yp}
      \node[v] at (\p) {};
  \node[v] at ($(A)!0.5!(Ap)$) {};
  \node[v] at ($(B)!0.5!(Bp)$) {};
  \node[v] at ($(C)!0.5!(Cp)$) {};
  \node[v] at ($(D)!0.5!(Dp)$) {};

\end{scope}

\begin{scope}[xshift=18cm]
  \coordinate (Xp) at (0,6);
  \coordinate (X)  at (1.6,6);
  \coordinate (C)  at (0.8,3);
  \coordinate (Yp) at (0,0);
  \coordinate (Y)  at (1.6,0);

  \draw[rededge] (X)--(C)--(Y);
  \draw[edge] (Xp)--(C)--(Yp);

  \foreach \p in {X,Xp,Y,Yp,C}
      \node[v] at (\p) {};
\end{scope}

\end{tikzpicture}%

    \caption{Instance for $n=4$ with the initial tour shown in red.}
    \label{fig:initialtour}
\end{figure}

\subsection{Tour characterization}
In this subsection, we characterize the tours of $G_n$ via the notion of the layer-path invariant defined below.

\begin{definition}
\label{def:2opt_layer_path}
An edge set $H$ of $G_n$ satisfies the \defi{layer-path invariant} if the following holds:
\begin{enumerate}
\item The restriction $H_0$ of $H$ to $F_0$ is a spanning path of $F_0$ whose two endpoints lie on the $A_0$- and $B_0$-interfaces, respectively.

\item The restriction $H_n$ of $H$ to $F_n$ is a spanning path of $F_n$ whose two endpoints lie on the $X_n$- and $Y_n$-interfaces, respectively.

\item For every $i\in[n-1]$, the restriction $H_i$ of $H$ to $F_i$ is the disjoint union of two paths such that this union spans $F_i$. 
Further, each of these paths has one endpoint on one of the $X_i$- and $Y_i$-interfaces and one endpoint on one of the $A_i$- and $B_i$-interfaces. Moreover, the two paths have their endpoints on distinct interfaces; thus each of the four interfaces $X_i,Y_i,A_i,B_i$ contains exactly one endpoint.
\end{enumerate}
\end{definition}

We have the following simple observation.

\begin{observation}
\label{lem:2opt_edges}
    Every tour of $G_n$ contains the edges incident to degree-two vertices (i.e., $A''_i$, $B''_i$, and $C''_i$ for $i \in \{0, \dots, n-1\}$ and $D''_i$ for $i \in [n-1]$).
\end{observation}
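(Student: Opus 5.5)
The plan is to argue by degree counting in the graph $G_n$. As noted just before the observation, we may restrict attention to $G_n$, since every tour we consider uses only edges of $G_n$. A Hamiltonian cycle visits every vertex exactly once and has exactly two of its edges incident to each vertex. So if a vertex $v$ has exactly two incident edges in $G_n$, both of them must lie in every tour. The whole proof therefore reduces to checking that each listed vertex has degree exactly two in $G_n$.

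\textbf{Interface middles.} For each symbol $T$, the $T$-interface is the path $(T,T'',T')$. The construction adds edges only between named vertices ($A_i, A'_i, C'_i, D_i$, and so on), never at a double-primed vertex. Hence $T''$ is adjacent only to $T$ and $T'$ within its own layer.

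Interface identification needs a short remark. The $A_i$-interface of $F_i$ is identified with the reversed $X_{i+1}$-interface of $F_{i+1}$, so $A''_i \equiv X''_{i+1}$. Thus $A''_i$ is one vertex whose only neighbors are $A_i \equiv X'_{i+1}$ and $A'_i \equiv X_{i+1}$, and its degree is two. The same holds for $B''_i \equiv Y''_{i+1}$. This covers $i\in\{0,\dots,n-1\}$, including the case $i=0$, where $F_1$ provides the $X_1$- and $Y_1$-interfaces.

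\textbf{Connecting paths.} The vertex $C''_0$ is adjacent only to $C_0$ and $C'_0$. For $i\in[n-1]$, $C''_i$ and $D''_i$ are the middle vertices of the zero-weight paths $(C'_i,C''_i,C_i)$ and $(D'_i,D''_i,D_i)$, and no other edges were added at them. So each of these vertices also has degree exactly two.

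\textbf{Conclusion and the one point needing care.} Applying the degree-two principle to each listed vertex gives the observation. The only step that needs attention is confirming from the edge lists and \cref{tab:circle_weight} that no double-primed vertex receives any extra edge. The tables list only non-double-primed endpoints, so this holds.
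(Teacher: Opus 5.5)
Your proof is correct and is exactly the reasoning the paper leaves implicit when it calls this a simple observation: each listed vertex (after the interface identifications $A''_i\equiv X''_{i+1}$ and $B''_i\equiv Y''_{i+1}$) has degree exactly two in $G_n$, so every Hamiltonian cycle of $G_n$ must use both of its incident edges. Your opening remark about restricting from $G'_n$ to $G_n$ is harmless but unnecessary, since the statement already concerns tours of $G_n$ itself.
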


Then we can characterize all tours of $G_n$ as follows.

\begin{lemma}
\label{lem:2opt_tour_char}
    An edge set $H$ forms a tour of $G_n$, if and only if it is a 2-regular subgraph of $G_n$ satisfying the layer-path invariant.
\end{lemma}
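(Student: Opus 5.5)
We prove the two directions separately. The forward direction is the substantial one. The reverse direction amounts to checking that the pieces glue into a single cycle.

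\emph{Tour $\Rightarrow$ invariant.} Let $H$ be a tour; it is 2-regular. By \cref{lem:2opt_edges}, $H$ contains both edges of every interface path $T\to T'$ and of the connecting paths $(C'_i,C''_i,C_i)$ and $(D'_i,D''_i,D_i)$. The key structural fact is that the only vertices shared by two layers are the identified interfaces. Since $A''_i$ and $B''_i$ have degree two, a tour crosses from $F_i$ to $F_{i+1}$ only by passing through the interface paths $A_i\equiv X_{i+1}$ (reversed) and $B_i\equiv Y_{i+1}$ (reversed). Hence $\{A_i,A''_i,A'_i,B_i,B''_i,B'_i\}$ is a vertex cut separating $F_0\cup\dots\cup F_i$ from $F_{i+1}\cup\dots\cup F_n$, and $H$ traverses each of the two interface paths exactly once.

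Next, restrict $H$ to $F_i$. We will treat each interface path as belonging to both adjacent layers, and consider the restriction on the "interior" side. Removing the edges not in $F_i$ cuts the cycle into paths. Each such path starts and ends at an interface vertex where the cycle leaves $F_i$. Because each interface path is traversed exactly once, each interface contributes exactly one such exit. This gives one endpoint per interface: two interfaces for $F_0$ and $F_n$, and four for $F_i$ with $i\in[n-1]$. Since a tour is connected and spans every vertex, the restriction spans $F_i$ and consists of a single path for $F_0,F_n$, and of two paths for the middle layers.

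It remains to rule out, for $i\in[n-1]$, that one path joins $X_i$ to $Y_i$ (equivalently, $A_i$ to $B_i$). Suppose it did. Then the left part $F_0\cup\dots\cup F_{i-1}$, together with this $X_i$–$Y_i$ path, would already close a cycle. This cycle would miss $F_{i+1},\dots,F_n$, contradicting Hamiltonicity. The analogous argument on the right covers the $A_i$–$B_i$ case. This is the step I expect to need the most care: the bookkeeping of what "restriction to $F_i$" means at the shared interface vertices has to be made consistent with \cref{def:2opt_layer_path}.

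\emph{Invariant $\Rightarrow$ tour.} Let $H$ be 2-regular and satisfy the layer-path invariant. A 2-regular graph is a disjoint union of cycles, so it suffices to show that $H$ is connected. Proceed by induction on $i$, maintaining the following claim: $H_0\cup\dots\cup H_i$ is a single path spanning $F_0,\dots,F_i$ with endpoints on the $A_i$- and $B_i$-interfaces, or equivalently on the reversed $X_{i+1}$- and $Y_{i+1}$-interfaces.

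For the base case, $H_0$ is such a path by part 1 of the invariant. For the inductive step, append the two paths of $H_{i+1}$. Each of these paths runs from one of $X_{i+1},Y_{i+1}$ to one of $A_{i+1},B_{i+1}$, and they occupy distinct interfaces. Gluing them at the shared interface vertices therefore extends the single path, giving a single path with endpoints on $A_{i+1}$ and $B_{i+1}$. At layer $n$, the path $H_n$ joins the $X_n$- and $Y_n$-interfaces and closes the path into one Hamiltonian cycle. Spanning is guaranteed by the spanning conditions in each layer.
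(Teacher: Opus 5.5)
Your proof is correct and follows essentially the paper's argument. The two interface paths between consecutive layers form a separator, which yields one endpoint per interface and rules out an $X_i$--$Y_i$ path by Hamiltonicity; your layer-by-layer gluing induction is the paper's up-and-back traversal written inductively, with 2-regularity ensuring the overlapping interface paths glue consistently. The one step to tighten is that ``each interface path is traversed exactly once'' does not by itself give exactly one exit per interface. You need that the two ends of each interface path attach to opposite sides of the separator. This follows from the same cycle-closing (or crossing-parity) argument you already use to exclude the $X_i$--$Y_i$ pairing.
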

\begin{proof}
    $(\Rightarrow)$: 
    Suppose $H$ forms a tour of $G_n$, then it is clearly a 2-regular subgraph.
    Observe that for every layer $F_i$ for $i \in \{0, \dots, n-1\}$, if we remove the $A_i$- and $B_i$-interfaces, then we disconnect the remainder of $F_0 \cup \dots \cup F_i$ from the remainder of $F_{i+1} \cup \dots \cup F_n$.
    This implies that
    \begin{itemize}
        \item[(a)] $H$ restricted to the graph $F_0 \cup \dots \cup F_i$ is a path between a vertex of the $A_i$-interface and a vertex of the $B_i$-interface, and 
        \item[(b)] $H$ restricted to the graph $F_{i+1} \cup \dots \cup F_n$ is a path between a vertex of the $X_{i+1}$-interface and a vertex of the $Y_{i+1}$-interface.
    \end{itemize}
    Together with \cref{lem:2opt_edges}, these already imply the first two conditions in \cref{def:2opt_layer_path}.
    For the last condition, by (a), we also obtain that for $i \geq 1$, $H$ restricted to $F_0 \cup \dots \cup F_{i-1}$ is a path from a vertex of the $A_{i-1}$-interface and a vertex of the $B_{i-1}$-interface.
    Combining this with (a) and \cref{lem:2opt_edges}, we obtain the last condition of \cref{def:2opt_layer_path}.

    $(\Leftarrow)$: Starting with the unique path in $H_0$, choose one of its endpoints, say the one on the $A_0$-interface, and traverse the path to its endpoint on the $B_0$-interface.
    Consider first the $A_0$-endpoint. 
    Since the $A_0$-interface is identified with the (reverse of) $X_1$-interface, it belongs to one of the two paths of $H_1$. 
    By the layer-path invariant, the other endpoint of this path lies on either the $A_1$- or the $B_1$-interface. 
    Continuing in this way, at every intermediate layer we enter through one of its $X$- or $Y$-interfaces and leave through one of its $A$- or $B$-interfaces. 
    Hence we eventually reach one of the two interfaces $X_n,Y_n$ of $F_n$.

    The unique path $H_n$ joins these two interfaces. 
    After traversing it, we therefore return through the other sequence of paths: in every intermediate layer, the second path joins the remaining $X$- or $Y$-interface to the remaining $A$- or $B$-interface. 
    Eventually, this traversal reaches the $B_0$-interface and hence the other endpoint of $H_0$.

    Thus all paths occurring in every $H_i$ belong to a single connected component of $H$. 
    Since the paths in the layer-path invariant span every layer, this component contains every vertex of $G_n$. 
    As $H$ is $2$-regular, it follows that $H$ is a Hamiltonian cycle.
\end{proof}

\subsection{States of the tour}
We describe a tour of $G_n$ by assigning states to its interfaces, to the first layer, and to the upper and lower gadgets.
More specifically, we associate $C_0$ with the part of $F_0$ between its two interfaces.
For $i\in\{0,\ldots,n-1\}$, we associate the symbols $A_i$ and $B_i$ with the corresponding interfaces.
For $i\in[n-1]$, we associate $D_i$ and $E_i$ with the upper and lower gadgets of $F_i$, respectively.
We use the same notation for these symbols and the corresponding vertices; the meaning will be clear from the context.

Fix a tour $\tau$ of $G_n$.
For a $T$-interface, we denote by $[T]$ the set $\{T,T'\}$.
By \cref{lem:2opt_edges}, both edges of the interface belong to $\tau$, so each vertex in $[T]$ has precisely one neighbor outside of the interface in $\tau$.

\begin{definition}\label{def:2opt_states}
The \defi{states} associated with $\tau$ are defined as follows.
\begin{itemize}
\item For $i\in\{0,\ldots,n-1\}$, the state of $A_i$ is $0$ if $A_i$ is adjacent in $\tau$ to a vertex of $F_i$ outside its interface, and $1$ if $A'_i$ is. The state of $B_i$ is defined analogously. By \cref{lem:2opt_tour_char}, exactly one of the two endpoints has such a neighbor in $F_i$.
\item The state of $C_0$ is $0$ if its neighbor other than $C''_0$ belongs to $[A_0]$, and $1$ if that neighbor belongs to $[B_0]$. The other interface is joined to $C'_0$.
\item For $i\in[n-1]$, the state of $D_i$ is $0$, $1$, or $2$ according as its neighbor other than $D''_i$ is $C'_i$, belongs to $[A_i]$, or belongs to $[X_i]$. The resulting three pairings in the upper gadget are shown in \cref{fig:2opt_state_D}.
\item For $i\in[n-1]$, the state of $E_i$ is determined by its two neighbors in $\tau$ as follows:
\end{itemize}
\begin{center}
\begin{tabular}{@{}cl@{}}
\hline
State & Neighbors of $E_i$ \\
\hline
$0$ & $D'_i$ and $[B_i]$ \\
$1$ & $C_i$ and $[B_i]$ \\
$2$ & $C_i$ and $D'_i$ \\
$3$ & $C_i$ and $[Y_i]$ \\
$4$ & $D'_i$ and $[Y_i]$ \\
\hline
\end{tabular}
\end{center}
The five corresponding pairings in the lower gadget are shown in \cref{fig:circle_states}.
\end{definition}

The states of the interfaces determine which endpoint is used by each gadget. Together with the pairings in \cref{fig:2opt_state_D,fig:circle_states} and the fixed edges from \cref{lem:2opt_edges}, they therefore determine every edge of a reachable tour. All states of $\tau_0$ are $0$.

\begin{figure}[ht!]
    \centering
    \begin{tikzpicture}[
      v/.style={circle,fill=black,inner sep=1.5pt},
      edge/.style={gray!65,line width=0.8pt},
      rededge/.style={red,line width=1.6pt},
      lab/.style={font=\footnotesize\itshape},
      statelab/.style={font=\small}
    ]
    
    \newcommand{\EstatePanel}[3]{%
      \begin{scope}[shift={#1}]
        \coordinate (X)  at (0,1.8);
        \coordinate (A) at (1.8,1.8);
        \coordinate (C)  at (0,0);
        \coordinate (D)  at (1.8,0);
    
        \draw[edge] (X) -- (A) -- (D) -- (C) -- (X) -- (D) (C) -- (A);

        \draw[rededge] #3;
    
        \foreach \p in {X, A, D, C}
          \node[v] at (\p) {};
    
        \node[lab,left]       at (X)  {$[X_i]$};
        \node[lab,right]      at (A) {$[A_i]$};
        \node[lab,right]      at (D)  {$D_i$};
        \node[lab,left]      at (C) {$C'_i$};
    
        \node[statelab] at (0.9,-0.55) {State #2};
      \end{scope}%
    }
    
    \EstatePanel{(0,0)}{0}
      {(C)--(D) (X)--(A)}
    
    \EstatePanel{(5,0)}{1}
      {(A)--(D) (X)--(C)}
    
    \EstatePanel{(10,0)}{2}
      {(X)--(D) (A)--(C)}
    
    \end{tikzpicture}
    \caption{The three states of $D_i$ and the inferred connection in the upper gadget of $F_i$}
    \label{fig:2opt_state_D}
\end{figure}

\begin{figure}[ht!]
    \centering
    \begin{tikzpicture}[
      v/.style={circle,fill=black,inner sep=1.5pt},
      edge/.style={gray!65,line width=0.8pt},
      rededge/.style={red,line width=1.6pt},
      lab/.style={font=\footnotesize\itshape},
      statelab/.style={font=\small}
    ]
    
    
    \newcommand{\EstatePanel}[5]{%
      \begin{scope}[shift={#1}]
        \coordinate (C)  at (0,1.8);
        \coordinate (E)  at (1.5,1.8);
        \coordinate (Dp) at (3,1.8);
        \coordinate (Y)  at (0,0);
        \coordinate (B)  at (3,0);
    
        \draw[edge] (C) -- (E) -- (Dp);
        \draw[edge]              (E) -- (Y);
        \draw[edge]              (E) -- (B);
        \draw[edge]              (B) -- (C) -- (Y) -- (B) -- (Dp);
        \draw[edge]              (Dp) -- (Y);
    
        #3
        #4
        #5
    
        \foreach \p in {C,E,Dp,Y,B}
          \node[v] at (\p) {};
    
        \node[lab,above]       at (C)  {$C_i$};
        \node[lab,above]      at (E)  {$E_i$};
        \node[lab,above]      at (Dp) {$D'_i$};
        \node[lab,below] at (Y)  {$[Y_i]$};
        \node[lab,below]      at (B)  {$[B_i]$};
    
        \node[statelab] at (1.5,-0.55) {State #2};
      \end{scope}%
    }
    
    \EstatePanel{(0,0)}{0}
      {\draw[rededge] (E)--(Dp);}
      {\draw[rededge] (E)--(B);}
      {\draw[rededge] (C)--(Y);}
    
    \EstatePanel{(5,0)}{1}
      {\draw[rededge] (C) -- (E);}
      {\draw[rededge] (E)--(B);}
      {\draw[rededge] (Dp)--(Y);}
    
    \EstatePanel{(10,0)}{2}
      {\draw[rededge] (C) -- (E);}
      {\draw[rededge] (E) -- (Dp);}
      {\draw[rededge] (Y)--(B);}
    
    \EstatePanel{(2.5,-3.5)}{3}
      {\draw[rededge] (C) -- (E);}
      {\draw[rededge] (E)--(Y);}
      {\draw[rededge] (Dp)--(B);}
    
    \EstatePanel{(7.5,-3.5)}{4}
      {\draw[rededge] (E)--(Dp);}
      {\draw[rededge] (E)--(Y);}
      {\draw[rededge] (C)--(B);}
    
    \end{tikzpicture}
    \caption{The five states of $E_i$ and the inferred connection in the lower gadget of $F_i$}
    \label{fig:circle_states}
\end{figure}

The lemma below describes which states encode valid tours and also asserts that a 2-swap corresponds to a change in exactly one state.

\begin{lemma}\label{lem:2opt_valid_states}
An assignment of states determines a tour if and only if, for every $i\in[n-1]$, the following hold:
\begin{itemize}
    \item the pair $(D_i,E_i)$ is not among $(0,2)$, $(1,0)$, $(1,3)$, $(2,1)$, and $(2,4)$;
    \item $D_i=2$ implies $A_i=0$;
    \item if $E_i=2$, then $(B_{i-1},B_i)\neq(1,0)$;
    \item if $E_i\in\{3,4\}$, then $B_i=1$.
\end{itemize}
Further, every $2$-swap between tours changes exactly one state. 
Conversely, if two valid state assignments differ in exactly one state, then the corresponding tours differ by a $2$-swap, except when that state is $E_i$ and its two values are $0$ and $3$, or $1$ and $4$.
\end{lemma}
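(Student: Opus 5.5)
The plan is to reduce everything to \cref{lem:2opt_tour_char}. Fix a state assignment. By \cref{lem:2opt_edges} and the remarks after \cref{def:2opt_states}, it determines a candidate edge set $H$: the fixed interface and middle-path edges, plus, in each gadget, the pairing from \cref{fig:2opt_state_D} or \cref{fig:circle_states}. The interface states say which vertex of each interface a pairing edge actually uses. Here one must track the identifications. For example, $A_{i-1}=0$ means $F_{i-1}$ uses $A_{i-1}\equiv X'_i$. Hence $F_i$ must use $X_i$, since in a tour the other interface vertex must be the one leaving toward $F_i$. The same holds for $B_{i-1}$ versus $Y_i$. By construction every vertex then has degree two. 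So $H$ is a tour if and only if two things hold: every prescribed pairing edge exists in $G_n$, and the layer-path invariant holds. I would check these two conditions separately, layer by layer; $F_0$ and $F_n$ impose no restrictions, which I will verify directly.

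\emph{Edge existence.} I go through the edge lists of the upper and lower gadgets. In the upper gadget, the pairing $[X_i]$--$[A_i]$ and the pairings $D_i$--$[X_i]$ and $C'_i$--$[X_i]$ are always available, since all four $X$--$A$ edges and both edges from each of $X_i,X'_i$ to $C'_i$ and $D_i$ exist. The pairing $D_i$--$[A_i]$ is also always available, via $A_iD_i$ and $A'_iD_i$. The pairing $C'_i$--$[A_i]$ exists only as $A_iC'_i$, which gives ``$D_i=2\Rightarrow A_i=0$''. In the lower gadget, the pairings with $[B_i]$ are as follows. $E_i$--$[B_i]$ exists via both $B_iE_i$ and $B'_iE_i$. $D'_i$--$[B_i]$ and $C_i$--$[B_i]$ exist only via $B'_i$, which gives ``$E_i\in\{3,4\}\Rightarrow B_i=1$''. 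For $[Y_i]$--$[B_i]$, the edge $B_iY'_i$ is missing, and the vertex $Y'_i$ is used exactly when $B_{i-1}=1$. This yields the condition on $E_i=2$. All other pairings (with $[Y_i]$, or internal to the lower gadget) exist for both endpoints.

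\emph{Path structure.} Next I combine the upper pairing with the lower pairing through the fixed paths $C'_iC''_iC_i$ and $D'_iD''_iD_i$. I trace, for each of the $3\cdot5$ pairs $(D_i,E_i)$, the resulting components of $H_i$. The invariant fails exactly when a component is a cycle, or when some path joins $[X_i]$ to $[Y_i]$ or $[A_i]$ to $[B_i]$. For example, $(D_i,E_i)=(0,2)$ closes the cycle $C'_iD_iD'_iE_iC_i$. The pair $(1,0)$ joins $[A_i]$ via $D_i,D'_i,E_i$ to $[B_i]$. I expect this finite check to produce exactly the five listed pairs. This table is the bulk of the work but is routine.

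\emph{Swaps.} For the second part, any two tours in $G_n$ share all fixed edges, so a $2$-swap changes only pairing edges. If two state assignments differ only in an interface state $A_i$ or $B_i$, the tours differ exactly in the edge at that interface in $F_i$ and the one in $F_{i+1}$: two edges. This is a reversal, hence a $2$-swap. If they differ only in $D_i$, the two pairings in \cref{fig:2opt_state_D} share no edge and differ in two edges each; the same holds for the $E_i$ pairs in \cref{fig:circle_states} that share an edge. The two exceptions, $0\leftrightarrow3$ and $1\leftrightarrow4$, change all three edges. Conversely, suppose a $2$-swap changes two pairing edges. I will show that the symmetric difference determines one state. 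The two removed edges lie in one gadget, or straddle one interface, and the changed edges cannot affect two states at once without exceeding two edges. I expect this converse direction, ruling out that a single $2$-swap alters two states simultaneously (e.g.\ $D_i$ together with $A_i$), to be the main obstacle. The plan there is a counting argument: changing any state changes at least two pairing edges, and these edge sets are disjoint for different states.
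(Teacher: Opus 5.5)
Your treatment of the validity characterization and of the final converse matches the paper's proof. Edge existence gives the last three conditions, the $3\times 5$ trace of $(D_i,E_i)$ gives exactly the five forbidden pairs, and the converse is an edge count per state; you only omitted the trivial $C_0$ case.

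The gap is in the step you flagged yourself: that a $2$-swap between tours changes at most one state. Your counting argument relies on the edge sets changed by different states being disjoint, and that is false in exactly the example you mention. The interface states $A_i,B_i$ have no edges of their own; they are encoded by which endpoint the pairing edges of the two adjacent gadgets use. If $\tau$ has $D_i=1$ and $A_i=0$, the edge $D_iA_i$ is one of the two edges destroyed by changing $D_i$, and also one of the two destroyed by changing $A_i$. If $D_{i+1}=0$, the single edge between $[X_{i+1}]$ and $[A_{i+1}]$ is involved in $A_i$, $D_{i+1}$ and $A_{i+1}$ at once. Moreover, ``changing a state changes two edges'' is computed with all other states fixed. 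When several states change simultaneously, the endpoints of the pairing edges move too, so the per-state counts are not additive and you cannot conclude $2+2>2$.

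What you need instead is an argument directly on $\tau\setminus\tau'$. For every state that differs between $\tau$ and $\tau'$, exhibit edges of $\tau$ that cannot survive in $\tau'$:
\begin{itemize}
\item for $C_0$ or $D_i$, both old pairing edges of the gadget, since distinct admissible pairings of the four objects are edge-disjoint;
\item for $E_i$, at least two of the three old lower pairing edges, since distinct $E$-states share at most one edge;
\item for $A_i$ or $B_i$, the old non-interface edge at the interface on each side, one in $F_i$ and one in $F_{i+1}$.
\end{itemize}
Each forced set has at least two elements. By their location, the sets belonging to two distinct states are never equal, although they may intersect. Hence two differing states force $|\tau\setminus\tau'|\ge 3$, a contradiction.

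Alternatively, argue as the paper does. A $2$-swap is a $4$-cycle avoiding all edges at degree-two vertices, so it lies in $F_0$, inside a single upper or lower gadget, or straddles a single interface. In each case only one state can change. Inside a gadget the used endpoint of an interface cannot move, because that would require removing a $\tau$-edge of $F_i$ at both $A_i$ and $A'_i$.
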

\begin{figure}[ht!]
    \centering
\begin{tikzpicture}[scale=0.7,
  v/.style={circle,fill=black,inner sep=1.5pt},
  edge/.style={gray,line width=0.25pt},
  rededge/.style={line width=1.25pt},
  blueedge/.style={blue,line width=1.5pt},
  blackedge/.style={red, line width=1.5pt},
  lab/.style={font=\footnotesize\itshape}
]

\begin{scope}[xshift=0cm]
  \coordinate (A)  at (0,6);
  \coordinate (Ap) at (1.6,6);
  \coordinate (C)  at (-1,3);
  \coordinate (Cp) at (0.6,3);
  \coordinate (B)  at (0,0);
  \coordinate (Bp) at (1.6,0);

  \draw[edge] (C)--(B) (Ap)--(Cp)--(Bp);
  \draw[edge] (A)--(Cp) (C)--(Ap) (C)--(Bp);
  \draw[rededge] (Ap)--(A) (C)--(Cp) (B)--(Bp);
  \draw[blackedge] (A)--(C)(Cp)--(B);
  \draw[blueedge] (A)--(Cp)(C)--(B);
  
  \foreach \p in {A,Ap,C,Cp,B,Bp} \node[v] at (\p) {};
  \node[v] at ($(A)!0.5!(Ap)$) {};
  \node[v] at ($(B)!0.5!(Bp)$) {};
  \node[v] at ($(C)!0.5!(Cp)$) {};
\end{scope}

\begin{scope}[xshift=12cm]
  \coordinate (X)   at (1.6,6);
  \coordinate (Xp)  at (0,6);
  \coordinate (A)   at (6,6);
  \coordinate (Ap)  at (7.6,6);
  \coordinate (Cp)  at (2.2,4);
  \coordinate (D)   at (5.4,4);
  \coordinate (C)   at (1.6,2.5);
  \coordinate (E)   at (3.8,2.5);
  \coordinate (Dp)  at (6,2.5);
  \coordinate (Yp)  at (0,0);
  \coordinate (Y)   at (1.6,0);
  \coordinate (B)   at (6,0);
  \coordinate (Bp)  at (7.6,0);

  \draw[rededge] (A)--(Ap) (C)--(Cp) (D)--(Dp) (B)--(Bp);

  \draw[edge] (X)--(D) (Xp)--(Cp) (Xp)--(D) (X)--(Cp)  (A)--(D) -- (Ap); 
  \draw[blueedge] (A)--(Cp) (X)--(D);
  \draw[blackedge] (Cp)--(D);
  \draw[edge,bend left=45] (Xp) to (Ap);
  \draw[edge,bend left=35] (Xp) to (A);
  \draw[edge,bend left=35] (X) to (Ap);
  \draw[blackedge,bend left=18] (X) to (A);

  \draw[edge,bend left=18] (B) to (Y);
  \draw[edge] (Bp)--(C) (Bp)--(Dp)  (Bp)--(E);
  \draw[edge,bend left=35] (Bp) to (Y);
  \draw[edge,bend left=45] (Bp) to (Yp);
  \draw[edge] (C)--(E);
  \draw[edge]  (C)--(Yp);
  \draw[edge] (Dp)--(Y) (Dp)--(Yp);
  \draw[edge] (E)--(Y)  (E)--(Yp);
  \draw[rededge] (C)--(Y) (Dp)--(E) (B)--(E);

  \foreach \p in {X,Xp,Cp,D,C,E,Dp,Y,Yp}
      \node[v] at (\p) {};
  \node[v] at ($(A)!0.5!(Ap)$) {};
  \node[v] at ($(B)!0.5!(Bp)$) {};
  \node[v] at ($(C)!0.5!(Cp)$) {};
  \node[v] at ($(D)!0.5!(Dp)$) {};
\end{scope}

\begin{scope}[xshift=0cm]
  \coordinate (X)   at (1.6,6);
  \coordinate (Xp)  at (0,6);
  \coordinate (A)   at (6,6);
  \coordinate (Ap)  at (7.6,6);
  \coordinate (Cp)  at (2.2,4);
  \coordinate (D)   at (5.4,4);
  \coordinate (C)   at (1.6,2.5);
  \coordinate (E)   at (3.8,2.5);
  \coordinate (Dp)  at (6,2.5);
  \coordinate (Yp)  at (0,0);
  \coordinate (Y)   at (1.6,0);
  \coordinate (B)   at (6,0);
  \coordinate (Bp)  at (7.6,0);

  \draw[rededge] (A)--(Ap) (C)--(Cp) (D)--(Dp) (B)--(Bp);

  \draw[edge] (X)--(Cp) (X)--(D) (Xp)--(Cp) (Xp)--(D) (A)--(Cp) (A)--(D) (Ap)--(D); 
  \draw[rededge] (Cp)--(D);
  \draw[edge,bend left=45] (Xp) to (Ap);
  \draw[edge,bend left=35] (Xp) to (A);
  \draw[edge,bend left=35] (X) to (Ap);
  \draw[rededge,bend left=18] (X) to (A);

  \draw[edge,bend left=18] (B) to (Y);
  \draw[edge] (Bp)--(C) (Bp)--(Dp)  (Bp)--(E);
  \draw[edge,bend left=35] (Bp) to (Y);
  \draw[edge,bend left=45] (Bp) to (Yp);
  \draw[edge] (C)--(E);
  \draw[edge]  (C)--(Yp);
  \draw[edge] (Dp)--(Y) (Dp)--(Yp);
  \draw[edge] (E)--(Y)  (E)--(Yp);
  \draw[rededge] (C)--(Y) (Dp)--(E);
  \draw[blackedge] (B)--(E);
  \draw[blueedge] (Bp)--(E);

  \foreach \p in {X,Xp,Cp,D,C,E,Dp,Y,Yp}
      \node[v] at (\p) {};
  \node[v] at ($(A)!0.5!(Ap)$) {};
  \node[v] at ($(B)!0.5!(Bp)$) {};
  \node[v] at ($(C)!0.5!(Cp)$) {};
  \node[v] at ($(D)!0.5!(Dp)$) {};

\end{scope}

\begin{scope}[xshift=6cm]
  \coordinate (X)   at (1.6,6);
  \coordinate (Xp)  at (0,6);
  \coordinate (A)   at (6,6);
  \coordinate (Ap)  at (7.6,6);
  \coordinate (Cp)  at (2.2,4);
  \coordinate (D)   at (5.4,4);
  \coordinate (C)   at (1.6,2.5);
  \coordinate (E)   at (3.8,2.5);
  \coordinate (Dp)  at (6,2.5);
  \coordinate (Yp)  at (0,0);
  \coordinate (Y)   at (1.6,0);
  \coordinate (B)   at (6,0);
  \coordinate (Bp)  at (7.6,0);

  \draw[rededge] (A)--(Ap) (C)--(Cp) (D)--(Dp) (B)--(Bp);

  \draw[edge] (X)--(Cp) (X)--(D) (Xp)--(Cp) (Xp)--(D) (A)--(Cp) (A)--(D) (Ap)--(D); 
  \draw[rededge] (Cp)--(D);
  \draw[edge,bend left=45] (Xp) to (Ap);
  \draw[edge,bend left=35] (Xp) to (A);
  \draw[edge,bend left=35] (X) to (Ap);
  \draw[rededge,bend left=18] (X) to (A);

  \draw[edge,bend left=18] (B) to (Y);
  \draw[edge] (Bp)--(C) (Bp)--(Dp)  (Bp)--(E);
  \draw[edge,bend left=35] (Bp) to (Y);
  \draw[edge,bend left=45] (Bp) to (Yp);
  \draw[edge] (C)--(E);
  \draw[edge]  (C)--(Yp);
  \draw[edge] (Dp)--(Y) (Dp)--(Yp);
  \draw[edge] (E)--(Y)  (E)--(Yp);
  \draw[rededge] (Dp)--(E) (B)--(E);
  \draw[blackedge] (C)--(Y);
  \draw[blueedge] (C)--(Yp);
  
  \foreach \p in {X,Xp,Cp,D,C,E,Dp,Y,Yp,A,Ap,B,Bp}
      \node[v] at (\p) {};
  \node[v] at ($(A)!0.5!(Ap)$) {};
  \node[v] at ($(B)!0.5!(Bp)$) {};
  \node[v] at ($(C)!0.5!(Cp)$) {};
  \node[v] at ($(D)!0.5!(Dp)$) {};

\end{scope}

\begin{scope}[xshift=18cm]
  \coordinate (Xp) at (0,6);
  \coordinate (X)  at (1.6,6);
  \coordinate (C)  at (0.8,3);
  \coordinate (Yp) at (0,0);
  \coordinate (Y)  at (1.6,0);

  \draw[rededge] (X)--(C)--(Y);
  \draw[edge] (Xp)--(C)--(Yp);

  \foreach \p in {X,Xp,Y,Yp,C}
      \node[v] at (\p) {};
\end{scope}

\end{tikzpicture}%

    \caption{Instance for $n=4$ with the initial tour (shown in bold black) and 3 possible $2$-swaps. Each is depicted by a $4$-cycle with the edges that can be removed from the tour shown in red and the edges that can be added to the tour shown in blue. The $2$-swaps on the left, middle and right change $C_0$, $B_1$ and $D_3$, respectively.
    }
    \label{fig:swapsintour}
\end{figure}

\begin{proof}
    The edges incident to degree-two vertices and the states determine the pairings within each upper and lower gadget, as shown in \cref{fig:2opt_state_D,fig:circle_states}.
    The five forbidden pairs for $(D_i,E_i)$ are exactly those that either create a cycle within $F_i$ or fail to connect the interfaces according to \cref{def:2opt_layer_path}. 
    Further, the last three conditions in the lemma statement exclude exactly the assignments requiring graph edges that do not exist.
    Conversely, under these conditions, the pairings form two spanning paths in every interior layer, so the layer-path invariant holds. 
    By \cref{lem:2opt_tour_char}, they determine a tour.

    Every $2$-swap corresponds to a $4$-cycle (i.e., a cycle of length four).
    By \cref{lem:2opt_edges}, no edge incident to a degree-two vertex can be removed. 
    Hence, by the structure of $G_n$, a $4$-cycle either lies within $F_0$, within one upper or lower gadget, or contains edges from two gadgets connected through an interface between consecutive layers.
    In the first case, the swap changes only $C_0$; in the second case, it changes only $D_i$ or $E_i$; and in the last case, it changes only $A_i$ or $B_i$.
    Thus every $2$-swap changes exactly one state. \cref{fig:swapsintour} depicts examples for the three kinds of $2$-swaps. 

    Conversely, consider two valid state assignments differing in exactly one state. 
    If this is $A_i$ or $B_i$, the corresponding interface has its two edges to the adjacent layers exchanged.
    If it is $C_0$, the two edges joining $C_0,C'_0$ to the interfaces are exchanged.
    If it is $D_i$, the three states in \cref{fig:2opt_state_D} specify the three possible pairings of four vertices, so changing the state exchanges exactly two edges.
    
    Finally, consider $E_i$.
    Each state specifies three edges, as shown in \cref{fig:circle_states}.
    The edge $E_iC_i$ occurs precisely in states
    $1,2,3$, and $E_iD'_i$ precisely in states
    $0,2,4$. Moreover, the edge from $E_i$ to
    the $B_i$-interface occurs in states $0,1$,
    and the edge from $E_i$ to the $Y_i$-interface
    occurs in states $3,4$.
    These are the only edges shared by different
    states. Thus any two distinct states share
    exactly one of their three edges, except for
    the pairs $\{0,3\}$ and $\{1,4\}$, which
    share none.
    
    Consequently, changing one state exchanges
    exactly two edges, except in the two stated
    cases, where it exchanges three edges.
    Since both state assignments determine tours,
    every exchange of exactly two edges is a
    valid $2$-swap.
\end{proof}

We say a change in a state is \defi{improving} if it corresponds to an improving 2-swap.
For a state $T$, we write $T:r_0\to r_1\to\cdots\to r_m$ if, with all other states fixed, its contribution
to the objective is strictly decreasing in the displayed order.
Thus a change from any earlier state to any later state decreases the objective and hence is improving, as long as the resulting state assignment is valid and the change corresponds to a $2$-swap.
Subject to the validity conditions, the following lemma essentially states that:
\begin{itemize}
    \item the direction of an improving change of $A_i$ for $i < n-1$ depends only on the upper gadget of $F_{i+1}$;
    \item that of $B_i$ for $i \geq 1$ depends only on the lower gadget of $F_i$;
    \item that of $B_0$ depends only on $C_0$;
    \item that of $C_0$ depends only on $A_0$;
    \item that of $D_i$ depends only on $A_i$;
    \item that of $E_i$ depends on $B_{i-1}$ and $B_i$.
\end{itemize}

\begin{lemma}\label{lem:flip_rules}
Suppose a state assignment describes a tour of $G_n$.
Then all the improving changes of all states are exactly as stated in \cref{tab:2opt_flip_rule}, subject to the validity conditions in \cref{lem:2opt_valid_states}.

\begin{table}[ht]
    \centering
    \begin{tabular}{c l c}
        Item & Improving changes & Conditions \\ 
        \hline
        (a1) & $A_{n-1}:0\to1$ & none \\ 
        (a2) & $A_i: 0 \to 1$, $i < n-1$ &  $(A_{i+1},D_{i+1})\in\{(0,2),(1,0)\}$ \\ 
        (a3) & $A_i:1\to0$, $i < n-1$ & 
      $(A_{i+1},D_{i+1}) \in\{(0,0),(0,1),(1,1)\}$ \\
      \hline
      (b1) & $B_0: 0 \to 1$ & $C_0=1$ \\ 
      (b2) & $B_0: 1 \to 0$ & $C_0=0$ \\
      (b3) & $B_i:0\to1$, $i \geq 1$ & $E_i=2$ \\ 
      (b4) & $B_i:1\to0$, $i \geq 1$ & $E_i\in\{0,1\}$ \\
        \hline
        (c1) & $C_0: 0 \to 1$ & $A_0 = 1$ \\
        (c2) & $C_0: 1 \to 0$ & $A_0 = 0$ \\
      \hline
      (d1) & $D_i: 1 \to 2 \to 0$ & $A_i = 0$ \\
      (d2) & $D_i: 0 \to 1$ & $A_i = 1$ \\
      \hline
      (e1) & $E_i: 1 \to 2 \to 0$ & $(B_{i-1},B_i) = (0,0)$ \\
      (e2) & $E_i: 1 \to 3 \to 4 \to 0 \to 2$ & $(B_{i-1},B_i) = (0,1)$ \\
      (e3) & $E_i: 0 \to 1$ & $(B_{i-1},B_i) = (1,0)$ \\
      (e4) & $E_i: 0 \to 1 \to 2 \to 3 \to 4$ & $(B_{i-1},B_i) = (1,1)$
    \end{tabular}
    \caption{List of all improving changes together with the conditions for these changes (in addition to validity conditions in \cref{lem:2opt_valid_states}).}
    \label{tab:2opt_flip_rule}
\end{table}

\end{lemma}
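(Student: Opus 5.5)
By \cref{lem:2opt_valid_states}, every 2-swap between tours changes exactly one state. So it suffices to fix one state $T$, hold all others fixed, and compute how the tour weight depends on $T$. Only edges that depend on $T$ matter, and by the locality in the proof of \cref{lem:2opt_valid_states} these are few. For $C_0$, $D_i$, $E_i$ they are the edges inside $F_0$ or the corresponding gadget, attached to whichever endpoint of each adjacent interface that gadget uses. For an interface state $A_i$ or $B_i$ they are the two edges leaving the interface on both sides. The plan is to write this weight contribution as a function of $T$, for every relevant value of the neighbouring states, read off the strict order, and compare with \cref{tab:2opt_flip_rule}. Ties and non-improving directions must be checked as well, since the table claims to list \emph{all} improving changes.

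\textbf{Interface and $F_0$ states.} For $A_i$ with $i<n-1$, the edge on the $F_i$ side is determined by $D_i$ and is one of $A_iC'_i$, $A'_iD_i$ (with $D_i=2$ forcing $A_i=0$), both of weight zero. So only the $F_{i+1}$ side matters. There $A_i\equiv X'_{i+1}$ and $A'_i\equiv X_{i+1}$, and the partner of the used endpoint is determined by $(A_{i+1},D_{i+1})$.
\begin{itemize}[nosep]
\item For $D_{i+1}=0$ the partner is $[A_{i+1}]$, with weight $XA$ or $XA'$ versus $X'A$ or $X'A'$.
\item For $D_{i+1}=1$ the partner is $C'_{i+1}$.
\item For $D_{i+1}=2$ the partner is $D_{i+1}$.
\end{itemize}
Comparing weights with $a2^{i+1}$ against $2a2^{i+1}-3$ in \cref{tab:circle_weight} gives the strict preferences in (a2)/(a3). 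For example, with $(A_{i+1},D_{i+1})=(0,0)$ we compare $XA=1$ with $X'A=a2^{i+1}-1$, so state $0$ (endpoint $X$) is preferred, giving (a3); the other four cases go the same way. For (a1), $F_n$ has $C_nX_n$ of weight $a2^n$ against $C_nX'_n$ of weight $0$. For $B_i$ with $i\ge1$, the $F_{i+1}$ side (via $Y_{i+1}$) has only weight-zero edges, or is $F_n$ with weight zero, so the direction depends on $E_i$ alone. The edges from $E_i$, $C_i$, $D'_i$ to $[B_i]$ are $B_iE_i$ of weight $0$, $B'_iE_i$ of weight $2^{j_i}-1$, $B'_iC_i$ and $B'_iD'_i$ of weight $0$, and $B'_iY_i$ of weight $0$. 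Going through the five $E_i$ pairings yields (b3)/(b4), and for $E_i\in\{3,4\}$ validity forces $B_i=1$. For $B_0$, $C_0$, $A_0$ I would use the $F_0$ weights $a,4,3,a-1$ together with the $F_1$-side neighbours.

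\textbf{Gadget states.} For $D_i$, the three pairings in \cref{fig:2opt_state_D} cost $C'D$ plus the relevant $X$–$A$ edge, $AD$ plus $XC'$, and $XD$ plus $AC'$, where the endpoints used are fixed by $A_i$ (and by $A_{i-1}$ through $X_i$). I will compute the three sums for $A_i=0$ and $A_i=1$. I expect the $X_i$ endpoint to shift all three by a common term, or to be forced, so that only $A_i$ matters. The result should be $1\to2\to0$ when $A_i=0$ and $0\to1$ when $A_i=1$; state $2$ is invalid when $A_i=1$.

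\textbf{Main obstacle: $E_i$.} Here the table gives chains of length up to five depending on $(B_{i-1},B_i)$, and the weights $BY=3$, $B'Y'=1$, $CY'=2$, $D'Y=4$, $EY=2^{j_i}$, $B'E=2^{j_i}-1$, $CE=1$ interact. I will tabulate the three-edge cost of each of the five states for each of the four endpoint combinations, noting that $B_{i-1}$ selects $Y_i$ or $Y'_i$. Then I sort each column, and remove the invalid states and the non-2-swap transitions $0\leftrightarrow3$ and $1\leftrightarrow4$ from \cref{lem:2opt_valid_states}. For instance, under $(1,1)$ the costs should come out increasing from $4$ down to $0$. The care needed is in confirming that every listed order is strict and matches \cref{tab:2opt_flip_rule} exactly, with states excluded by validity simply dropped from the chain.
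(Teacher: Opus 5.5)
\textbf{Gap at the interface states $A_i$ and $B_i$.} Your overall plan is the paper's: each $2$-swap changes one state, so you compare the weights of the gadgets touching that state. It works for $C_0$ and $E_i$. For $D_i$, note that $A_{i-1}$ does \emph{not} shift the three costs uniformly. With $b_i=a2^i$, the costs of $D_i=0,1,2$ are $(1,\,2b_i-2,\,2b_i-3)$ for $(A_{i-1},A_i)=(0,0)$ and $(b_i-1,\,3b_i-4,\,b_i)$ for $(1,0)$. These merely happen to give the same order, so you must compute all four cases.

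The genuine gap is at the interface states. Your claim that the $F_i$-side edge at $[A_i]$ is one of the weight-zero edges $A_iC'_i$, $A'_iD_i$ is false. For $D_i=0$, the used endpoint of $[A_i]$ is paired with $[X_i]$ via $X_iA_i$, $X_iA'_i$, $X'_iA_i$ or $X'_iA'_i$, of weights $1$, $2b_i-3$, $b_i-1$, $b_i$. For $D_i=1$, $A_i=0$, the edge is $A_iD_i$ of weight $2b_i-3$. Hence the upper gadget of $F_i$ contributes anything in $[4-2b_i,\,2b_i-3]$ to the gain of $A_i:0\to1$. This is the same order as the $F_{i+1}$-side term you computed, $b_{i+1}-3$ or $2-b_{i+1}$. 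The table survives only because $b_{i+1}=2b_i$ makes the total at least $1$, resp.\ at most $-1$. This cancellation estimate is the real content of (a2)/(a3), and it is missing from your argument.

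Likewise, (a1) needs a two-sided check. For $n\ge2$, the $F_{n-1}$ side can cost up to $a2^n-4$, which is still below the $a2^n$ saved at $C_nX_n$. For $A_0$, you must weigh the $F_0$ terms $-4$ or $a$ against $2a-3$ or $2-2a$.

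The mirror-image error occurs for $B_i$ with $1\le i\le n-2$. The $Y_{i+1}$ side is not weight-neutral: it carries $B_{i+1}Y_{i+1}=3$, $B'_{i+1}Y'_{i+1}=1$, $C_{i+1}Y'_{i+1}=2$, $D'_{i+1}Y_{i+1}=4$ and $E_{i+1}Y_{i+1}=2^{j_{i+1}}$. Write $c_k=2^{j_k}$. The lower gadget of $F_{i+1}$ contributes one of $-2,-1,4,c_{i+1}$ to the gain of $B_i:0\to1$; for example, it contributes $c_{i+1}$ when $E_{i+1}\in\{3,4\}$. The lower gadget of $F_i$ contributes $1-c_i$ if $E_i\in\{0,1\}$ and $3$ if $E_i=2$. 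You need $c_i=2c_{i+1}\ge 8$ to get a total gain $\le 1-c_{i+1}<0$, resp.\ $\ge 1>0$. Only for $i=n-1$ is the far side really weight zero.

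Your list of edges at $[B_i]$ also omits $B_iY_i=3$ and $B'_iY'_i=1$, which are exactly the ones used when $E_i=2$. The same two-sided bound decides $B_0$, now with $a=4c_1+3$ weighed against an $F_1$ contribution in $[-2,c_1]$. To close the gap, for every interface state you must bound the gain contribution of the non-dominant adjacent gadget. Then check that the dominant one strictly outweighs it: the upper gadget of $F_{i+1}$ for $A_i$, and the lower gadget of $F_i$ (resp.\ $F_0$) for $B_i$.
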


\begin{proof}
Note that the contribution to the objective by $F_0$ is determined precisely by the states of $A_0, B_0, C_0$.
Likewise, for $i\in [n-1]$, the contribution by the upper (lower) gadget of $F_i$ is determined by the states of $A_{i-1}, A_i, D_i$ (respectively, $B_{i-1}, B_i, E_i$).
From the weight assignment of the edges, we can compute these contributions precisely in \cref{tab:2opt_local_weights}.

\begin{table}[h!]
\centering
\small

\renewcommand{\arraystretch}{1.15}

\begin{tabular}{@{}c | cc@{}}

$(A_0,B_0)$ & $C_0=0$ & $C_0=1$ \\
\hline
$(0,0)$ & $0$   & $a+3$ \\
$(0,1)$ & $a-1$ & $a$   \\
$(1,0)$ & $4$   & $3$   \\
$(1,1)$ & $a+3$ & $0$   \\
\end{tabular}
\qquad
\begin{tabular}{@{}c | ccc@{}}
$(A_{i-1},A_i)$ & $D_i=0$ & $D_i=1$ & $D_i=2$ \\
\hline
$(0,0)$ & $1$       & $2b_i-2$ & $2b_i-3$ \\
$(0,1)$ & $2b_i-3$  & $1$      & --       \\
$(1,0)$ & $b_i-1$   & $3b_i-4$ & $b_i$    \\
$(1,1)$ & $b_i$     & $b_i-1$  & --       \\
\end{tabular}

\medskip

\begin{tabular}{@{}c | ccccc@{}}
$(B_{i-1},B_i)$ & $E_i=0$ & $E_i=1$ & $E_i=2$
                & $E_i=3$ & $E_i=4$ \\
\hline
$(0,0)$ & $0$     & $5$     & $4$ & --      & --    \\
$(0,1)$ & $c_i-1$ & $c_i+4$ & $1$ & $c_i+1$ & $c_i$ \\
$(1,0)$ & $2$     & $1$     & --  & --      & --    \\
$(1,1)$ & $c_i+1$ & $c_i$   & $2$ & $1$     & $0$   \\
\end{tabular}

\caption{Contributions to the tour weight from $F_0$
and the upper and lower gadgets of $F_i$, respectively,
where $i\in[n-1]$, $b_i=a2^i$, and
$c_i=2^{n+1-i}$. A dash indicates that the
corresponding edges do not exist.}
\label{tab:2opt_local_weights}
\end{table}

In the following, whenever we consider a state change, the contribution of a gadget to the gain is defined as its contribution to the tour weight before the change minus its contribution after the change. 
These gain contributions are obtained by subtracting the corresponding entries of \cref{tab:2opt_local_weights}.

Items (c1)--(e4) now follow immediately from \cref{tab:2opt_local_weights}.
(a1) follows immediately from the fact that the weight difference of $a2^n$ between $C_nA'_{n-1}$ and $C_nA_{n-1}$ is larger than any weight in the construction.

For (b1)--(b2), consider changing $B_0$ from $0$ to $1$.
If $n=1$, the claim follows by subtracting the corresponding entries of \cref{tab:2opt_local_weights}. 
Suppose $n\geq 2$.
The contribution from $F_0$ to the gain is $1-a$ if $C_0=0$, and $3$ if $C_0=1$.
The contribution from the lower gadget of $F_1$ to the gain is one of $-2,4,-1,c_1$, whenever the change is valid, and thus lies between $-2$ and $c_1$.
Since $a=4c_1+3$, the total gain is at most $(1-a)+c_1=-3c_1-2<0$ when $C_0=0$, and at least $3-2=1>0$ when $C_0=1$. Hence $B_0:0\to 1$ is improving exactly when $C_0=1$, proving (b1). 
The reverse change gives (b2).

For (a2)--(a3), consider first changing $A_i$ from $0$ to $1$, where $1\le i<n-1$. 
By \cref{tab:2opt_local_weights}, the contribution from $F_i$ to the gain lies between $4-2b_i$ and $2b_i-3$. 
On the other hand, the contribution from $F_{i+1}$ to the gain is $b_{i+1}-3$ if $(A_{i+1},D_{i+1})\in\{(0,2),(1,0)\}$, and $2-b_{i+1}$ otherwise.
Since $b_{i+1}=2b_i$, the total gain is at least $1$ in the former case and at most $-1$ in the latter.
For $i=0$, the contribution from $F_0$ is either $-4$ or $a$. 
Since $b_1=2a$, the same conclusion holds.
The reverse change has the opposite gain.

For (b3)--(b4), consider changing $B_i$ from
$0$ to $1$.
For $1\le i\le n-2$, the contribution from
$F_i$ to the gain is $1-c_i$ if $E_i\in\{0,1\}$,
and $3$ if $E_i=2$, whenever both entries exist.
The contribution from $F_{i+1}$ to the gain is again
one of $-2,4,-1,c_{i+1}$.
Since $c_i=2c_{i+1}\ge8$, the total gain
is negative in the case of $E_i \in \{0,1\}$ and positive
in the case of $E_i = 2$.
For $i=n-1$, the contribution from $F_i$ is as above with $c_{n-1}=4$ and the contribution from $F_n$
to the gain is zero, giving the same
conclusion.
The reverse change has the opposite gain.
\end{proof}

\subsection{Correctness}
We now show that in the construction above, any maximal improving sequence from the initial tour has exponential length.
In particular, we first describe a canonical improving swap sequence.
Then we show that every other maximal improving sequence differs from it only by reordering of some swaps.

\paragraph{The canonical sequence.}
In order to describe the canonical sequence, we first define its two building blocks, the \defi{forward traversal} $\mathbf{P}_i$ and the \defi{return traversal} $\mathbf{Q}_i$ for $i \in [n]$.
A traversal changes the lower layers so that an improving move in the next layer becomes possible.
For convenience, we label a swap by the state it changes: $A_i,B_i,C_0$ denote flips of the corresponding binary states, whereas $D_i^j$ and $E_i^j$ denote changes to state $j$ from state $j-1$ (cyclically).

Let $\mathbf{P_1} = \mathbf{Q_1} = A_0C_0B_0$.
Define $\mathbf{P}_2$ and $\mathbf{Q}_2$ as follows:
\begin{align}
\mathbf{P_2}={}&A_1\,\mathbf{P_1}\,E_1^1\,D_1^1\,
     \mathbf{Q_1}\,E_1^2\,B_1,\label{eq:P-base-state}\\
\mathbf{Q_2}={}&\mathbf{P_1}\,E_1^3\,D_1^0\,
     E_1^4\,\mathbf{Q_1}\,E_1^0\,B_1.\label{eq:Q-base-state}
\end{align}

For $3\leq k\leq n$, define $\mathbf{P_k}$, and for $3\leq k\leq n-1$ also define $\mathbf{Q_k}$, by
\begin{align}
\mathbf{P_k}={}&A_{k-1}\,\mathbf{P_{k-1}}\,
 E_{k-1}^1D_{k-1}^1A_{k-2}D_{k-2}^2\,
 \mathbf{Q_{k-1}}\,E_{k-1}^2B_{k-1},\label{eq:P-recursion}\\
\mathbf{Q_k}={}&\mathbf{P_{k-1}}\,
 E_{k-1}^3D_{k-1}^0A_{k-2}D_{k-2}^2E_{k-1}^4\,
 \mathbf{Q_{k-1}}\,E_{k-1}^0B_{k-1}.\label{eq:Q-recursion}
\end{align}

Then the \emph{canonical sequence} for $G_n$ is exactly $\mathbf{P_n}$.
We now show that this canonical sequence is indeed a valid improving swap sequence for $G_n$.

\begin{lemma}\label{lem:reachable_prefix}
Let $\widehat{\mathbf P}_n$ be obtained from $\mathbf P_n$ by deleting all occurrences of $E_i^4$.
Then $\mathbf P_n$ is improving, and in every improving swap sequence from $\tau_0$, the changes other than $E_i^4$ form a prefix of $\widehat{\mathbf P}_n$.
Further, at each reachable tour, the only improving changes are the next unperformed change of $\widehat{\mathbf P}_n$, if it is applicable, and occurrences of $E_i^4$ whose corresponding $D_i^0$ has occurred but which have not yet been performed.
The next unperformed change is inapplicable only when it is $E_i^0$ and the corresponding $E_i^4$ has not been performed.
\end{lemma}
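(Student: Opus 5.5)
The plan is a simulation argument: I track the full state vector along $\mathbf P_n$ and use \cref{lem:flip_rules} together with the validity conditions of \cref{lem:2opt_valid_states} to list, at every point, all improving changes. Since every $2$-swap changes exactly one state (\cref{lem:2opt_valid_states}), the improving moves at a tour are exactly the improving single-state changes that yield a valid assignment and are realised by a $2$-swap. So the lemma reduces to a finite, local check at each step, organised by induction.

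The induction is on $k$, over the blocks $\mathbf P_k$ and $\mathbf Q_k$. I would first record the ``entry'' and ``exit'' state patterns of layers $0,\dots,k-1$ for both traversals. For example, $\mathbf P_k$ starts with all of these layers in state $0$ and ends with $B_{k-1}=1$ and $A_{k-1}=1$. $\mathbf Q_k$ starts from the configuration left by the preceding moves and returns $B_{k-1}$ to $0$. The statement I would prove is the following. Fix the states of layers $\ge k$ to the values occurring at the moment the block is invoked inside the larger recursion. Then, inside the block, at every reachable tour the only improving changes are the next change of the block with $E^4$ deleted, plus pending $E_i^4$'s whose $D_i^0$ has occurred. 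Moreover, the next change fails to be applicable only if it is $E_i^0$ with $E_i^4$ still pending.

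\emph{Interface insulation.} The key use of \cref{lem:flip_rules} is as follows. A change of $A_{k-1}$ or $B_{k-1}$ is governed only by $(A_k,D_k)$ or $E_{k-1}$, while changes inside layer $k$ depend only on $A_{k-1},A_k$ and $B_{k-1},B_k$. Hence the lower block evolves independently of layer $k$, except through the two interface states $A_{k-1}$ and $B_{k-1}$. Each of these is flipped exactly once per block, as the first swap (for $A$) and the last swap (for $B$).

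\emph{Base case and inductive step.} For the base case $\mathbf P_1=\mathbf Q_1=A_0C_0B_0$, I check rows (a), (b1)/(b2), (c) directly. For the inductive step, I go through each displayed move in \eqref{eq:P-recursion} and \eqref{eq:Q-recursion}. While the inner $\mathbf P_{k-1}$ or $\mathbf Q_{k-1}$ runs, I verify that the conditions in rows (d),(e) for layer $k-1$ admit no improving change of $D_{k-1}$, $E_{k-1}$ for the intermediate values of $(A_{k-2},A_{k-1})$ and $(B_{k-2},B_{k-1})$. Equally, no change of $A_{k-1}$ or $B_{k-1}$ is improving before the designated moment. The extra moves $A_{k-2}D_{k-2}^2$ come from (a2) becoming active once $D_{k-1}$ changes, followed by (d1).

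The main obstacle is the $E^4$ moves. In (e2) the sequence $E:1\to3\to4\to0\to2$ offers $E^4$ as an independent improving option right after $E^3$, and the next $D^0$ changes the validity constraints on $(D,E)$. I must show three things. First, $E_i^4$ is improving exactly from the moment $D_i^0$ has occurred until it is performed. Second, it commutes with the intervening changes: it neither enables nor disables any other move, since $E_i$ influences only $B_i$'s rule via $E_i\in\{0,1\}$ or $E_i=2$, and states $3,4$ lie in neither set. Third, $E_i^0$ is blocked only because $3\to0$ is not a $2$-swap (the exceptional pair in \cref{lem:2opt_valid_states}). Before $D_i^0$, the pair $(1,4)$ is forbidden, which explains the dependence on $D_i^0$. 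I would handle this by checking directly in the table of forbidden $(D_i,E_i)$ pairs that every tour along the sequence stays valid whichever of the two orders is chosen. Finally, maximality of improvement along $\mathbf P_n$ follows, because each listed move is improving by the corresponding row of \cref{tab:2opt_flip_rule}.
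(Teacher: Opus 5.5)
Your skeleton (induction over the blocks $\mathbf P_k,\mathbf Q_k$, insulation of the lower layers through the interface states, and a separate commutation argument for the $E^4$ moves) is the paper's. However, several of the concrete facts you would build the local check on are wrong, and this lemma is essentially nothing but that check.

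\textbf{The structure of $\mathbf Q_k$ and the inductive hypothesis.} For $k\ge2$ the block $\mathbf Q_k$ never changes $A_{k-1}$. Its first swap is $A_{k-2}$ (the start of the inner $\mathbf P_{k-1}$), and $A_{k-1}$ has already been returned to $0$ by the enclosing block's move $A_{k-1}$ immediately before $D_{k-1}^2$. So your claim that each interface state is flipped exactly once per block, with $A_{k-1}$ first, is false for $\mathbf Q_k$. The hypothesis for $\mathbf Q_k$ must record the actual entry state: all states below $F_{k-1}$ are $0$ and $(A_{k-1},B_{k-1},D_{k-1},E_{k-1})=(0,1,2,2)$.

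Nor can you simply freeze the states of layers $\ge k$. Pending moves $E_j^4$ with $j\ge k$ may fire in the middle of a lower block, e.g.\ $E_k^4$ during $\mathbf Q_k$ inside $\mathbf Q_{k+1}$. This is why the paper's $(K_k)$ admits the contexts $(A_k,B_k,D_k,E_k)\in\{(1,0,1,1),(0,1,0,3),(0,1,0,4)\}$, and $(J_k)$ admits $(1,0,0,0)$ and $(0,1,2,2)$. Your commutation observation is what repairs this, but it has to be built into the inductive statement.

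The insulation that actually holds is the following. Among lower changes, only the final change of $B_{k-1}$ enters a rule or validity condition for the states of $F_k$, since the direction of $D_k$ depends on $A_k$ alone. Conversely, the only lower rule that depends on $F_k$ is that of $A_{k-1}$, through $(A_k,D_k)$, and this pair is constant during each block.

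\textbf{The $E^4$ moves.} Right after $E_i^3$ (inside $\mathbf Q_{i+1}$), the inner $\mathbf P_i$ has just set $B_{i-1}=1$. So $(B_{i-1},B_i)=(1,1)$ and the rule in force is (e4), not (e2); under (e2) the change $2\to3$ is not even improving. Moreover, $E_i^4$ is \emph{not} available at that moment: $D_i=2$, and $(D_i,E_i)=(2,4)$ is a forbidden pair. The pair $(1,4)$ you cite is valid, and $D_i\in\{2,0\}$ throughout $\mathbf Q_{i+1}$.

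Only after $D_i^0$ (the change $2\to0$) does $E_i:3\to4$ become valid. It is improving under (e4) and stays improving under (e2) once the inner $\mathbf Q_i$ has reset $B_{i-1}$ to $0$. Under (e2), the other moves from state $3$ that improve by weight must also be excluded: $3\to0$ is not a $2$-swap, and $3\to2$ would produce the forbidden pair $(0,2)$.

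\textbf{A smaller slip.} The move $A_{k-2}$ following $D_{k-1}^1$ (or $D_{k-1}^0$) is a change $1\to0$ governed by (a3), not (a2). For $k\ge3$, the competing change $D_{k-2}:1\to0$ must be excluded because $E_{k-2}=2$ makes $(0,2)$ forbidden.

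With these corrections your plan becomes the paper's proof via $(J_k)$, $(K_k)$ and the table of effects of the final $B_{i-1}$-change.
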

\begin{proof}
For $n=1$, \cref{lem:flip_rules} shows that the only successive
improving changes are $A_0,C_0,B_0$, after which none is improving.
Hence suppose $n\ge2$.
For $k<n$, let $\widehat{\mathbf P}_k$ and
$\widehat{\mathbf Q}_k$ be obtained from $\mathbf P_k$ and
$\mathbf Q_k$ by deleting all occurrences of $E_j^4$.
We call a change \defi{ordinary} if it is not $E^4_j$ for any $j$.
We call a change \defi{lower than $k$} if it changes $C_0$ or a state $A_j,B_j,D_j,E_j$ with $j<k$, whenever defined.

We prove the following statements simultaneously by induction on $k$.
A \defi{rearrangement} of $\mathbf P_k$ or $\mathbf Q_k$ is obtained by moving each occurrence of $E_j^4$ anywhere between its corresponding $D_j^0$ and $E_j^0$, without changing the order of the ordinary changes.
Hence, deleting the changes $E_j^4$ from any prefix of a rearrangement leaves a prefix of $\widehat{\mathbf P}_k$ or $\widehat{\mathbf Q}_k$, respectively.

\begin{itemize}
\item[($J_k$)] Suppose all states in $F_0,\ldots,F_{k-1}$ are $0$.
If $k<n$, suppose also that $(A_k,B_k,D_k,E_k)$ is $(1,0,0,0)$ or $(0,1,2,2)$.
Then every rearrangement of $\mathbf P_k$ is improving.
After any proper prefix of a rearrangement, the only improving ordinary change lower than $k$ is the next unperformed change of $\widehat{\mathbf P}_k$, unless it is $E_j^0$ and $E_j^4$ is still pending, in which case there is none.
An occurrence of $E_j^4$ lower than $k$ is improving precisely when its $D_j^0$ has occurred and it has not yet been performed.
After the entire rearrangement, no change lower than $k$ is improving.
The final states are $(A_0,B_0,C_0)=(1,1,1)$ if $k=1$, and otherwise all states below $F_{k-1}$ are $0$ and $(A_{k-1},B_{k-1},D_{k-1},E_{k-1})=(1,1,1,2)$.

\item[($K_k$)] Suppose $k<n$ and $(A_k,B_k,D_k,E_k)$ is $(1,0,1,1)$, $(0,1,0,3)$, or $(0,1,0,4)$.
Suppose further that $(A_0,B_0,C_0)=(1,1,1)$ if $k=1$, or, if $k\ge2$, that all states below $F_{k-1}$ are $0$ and $(A_{k-1},B_{k-1},D_{k-1},E_{k-1})=(0,1,2,2)$.
Then every rearrangement of $\mathbf{Q}_k$ is improving, and the same assertions about improving changes in $(J_k)$ hold for $\mathbf Q_k$ and $\widehat{\mathbf Q}_k$ in place of $\mathbf P_k$ and $\widehat{\mathbf P}_k$.
After the entire rearrangement, no change lower than $k$ is improving, and all states in $F_0,\ldots,F_{k-1}$ return to $0$.
\end{itemize}

Note that in the case $(A_k,B_k,D_k,E_k)=(0,1,0,3)$ of $(K_k)$, $E_k^4$ may still be pending. 
It is a change outside the lower $k$ layers and can be performed during the traversal or afterwards.
Throughout this proof, whenever we claim whether a change is improving, we use \cref{lem:flip_rules} and exclude changes forbidden by \cref{lem:2opt_valid_states}. 

For $k=1$, both traversals are $A_0C_0B_0$.
The successive states of $(A_0,B_0,C_0)$ are
\[
(0,0,0)\to(1,0,0)\to(1,0,1)\to(1,1,1)
\]
in $(J_1)$, and occur in reverse order in $(K_1)$.
In each case, \cref{lem:flip_rules} shows that only the next change is improving among the changes lower than $1$. 
This proves the base case.

Let $k\ge2$, let $i=k-1$, and suppose the statements hold
for smaller indices.
We first consider how a recursive traversal affects the gadget $F_i$.
Note that $F_i$ is only connected to lower layers via the $A_{i-1}$- and $B_{i-1}$-interfaces.
By \cref{lem:flip_rules}(a1)--(b4), among the changes lower than $k$, none affects whether a change in $A_{i}$ is improving, and only $B_{i-1}$ affects $B_i$ and $E_i$.
While $\mathbf P_i$ is executed, $B_{i-1}$ remains $0$ until its final change to $1$; while $\mathbf Q_i$ is executed, $B_{i-1}$ remains $1$ until its final change to $0$.
These are also the final changes of any rearrangement of the respective
traversals, since by the inductive hypothesis, any $E_j^4$ must precede its corresponding $E_j^0$, which occurs before the final $B_{i-1}$-change.

\begin{table}[ht]
\centering
\begin{tabular}{@{}clll@{}}
Traversal & $(A_i,B_i,D_i,E_i)$ & Final $B_{i-1}$ & Effect on improving changes of $E_i$ \\
\hline
$\mathbf P_i$ & $(1,0,0,0)$ & $0\to1$ & enables $E_i:0\to1$ \\
$\mathbf P_i$ & $(0,1,2,2)$ & $0\to1$ & enables $E_i:2\to3$ \\
$\mathbf Q_i$ & $(1,0,1,1)$ & $1\to0$ & enables $E_i:1\to2$ \\
$\mathbf Q_i$ & $(0,1,0,4)$ & $1\to0$ & enables $E_i:4\to0$ \\
$\mathbf Q_i$ & $(0,1,0,3)$ & $1\to0$ & $E_i:3\to4$ remains improving \\
\end{tabular}
\caption{Effect of the final $B_{i-1}$-change of a recursive traversal.}
\label{tab:2opt_afterB}
\end{table}

Indeed, in the first four rows the indicated change has gain $1$ after the final change of $B_{i-1}$, but is not improving before it.
In the last row, $E_i:3\to4$ has gain $1$ both before and after $B_{i-1}$ changes; once it has occurred, $E_i:4\to0$ has gain $1$ when $B_{i-1}=0$.
A direct comparison of all states of $D_i,E_i$ in \cref{tab:2opt_local_weights}, subject to \cref{lem:2opt_valid_states}, shows that there is no other improving change of these states during the recursive traversal.
For example, $E_i:2\to4$ is forbidden when $D_i=2$, $E_i:1\to0$ is forbidden when $D_i=1$, and $E_i:3\to0$ is not a $2$-swap.
The flip rules also exclude a newly improving change of $A_i$ or $B_i$ while the recursive traversal is running.

We now prove $(J_k)$.
Initially, $A_i:0\to1$ is improving: for $k=n$ this follows from \cref{lem:flip_rules}(a1), and for $k<n$ from (a2) and the specified state of $F_k$.
Further, by \cref{lem:flip_rules}, since all lower layers are 0, no other change is improving.

After this change, by \cref{lem:flip_rules}, the only potential improving change in $F_i$ is $D_i: 0 \to 1$, since $A_i=1$ and $D_i = 0$.
However, by \cref{lem:2opt_valid_states}, as long as $E_i = 0$, this change is not possible.
For the layers lower than $i$, since $(A_i,B_i,D_i,E_i)=(1,0,0,0)$, so $(J_i)$ applies and we traverse a rearrangement of $\mathbf P_i$.
The first row of \cref{tab:2opt_afterB} shows that after this traversal, $E^1_i$ is the only improving change.

After this change $E^1_i$, as noted above the improving change $D_i: 0 \to 1$ is now possible.
Since $B_i = 0$, there is no improving change for it, and by \cref{lem:flip_rules}, there is no improving change newly arising from this change of $E_i$.
Hence, $D_i: 0 \to 1$ is the only improving change.

After this change $D^1_i$, by \cref{lem:flip_rules}, the only state that may have an improving change is $A_{i-1}$.
By $(J_i)$, after the previous traversal $\mathbf{P}_i$, $(A_{i-1}, B_{i-1}, D_{i-1}, E_{i-1})$ is now $(1,1,1,2)$. 
Combined with the fact that $(A_i, D_i)$ is now $(1,1)$, $A_{i-1}: 1 \to 0$ is improving.

 If $k = 2$ (i.e., $i = 1$), then by \cref{lem:flip_rules}(c2) and (b2), the next two improving changes have to be $C_0: 1 \to 0$ and $B_0: 1 \to 0$.
 That is, we have just performed exactly $\mathbf{Q}_1$.

If $k > 2$, then after the change $A_{i-1}: 1 \to 0$, since $D_{i-1} = 1$, and $A_{i-2} = 0$, by \cref{lem:flip_rules}, only state $D_{i-1}$ can potentially have an improving change.
As $E_{i-1} = 2$, by \cref{lem:2opt_valid_states}, $D_{i-1}$ cannot be 0.
Hence, the only improving change is $D_{i-1} : 1 \to 2$.

This change of $D_{i-1}$, by \cref{lem:flip_rules}, does not affect any vertex in $F_i$.
However, we now have $(A_{i}, B_{i}, D_{i}, E_{i}) = (1,0,1,1)$, $(A_{i-1},B_{i-1},D_{i-1},E_{i-1}) = (0,1,2,2)$, and all states below $F_{i-1}$ are 0.
Hence, $(K_i)$ applies and we perform a rearrangement of $\mathbf{Q}_i$.

In both cases (i.e., whether $k = 2$ or $k > 2$), we can use the third row of \cref{tab:2opt_afterB} and obtain that the only improving change after this rearrangement of $\mathbf{Q}_i$ is $E^2_i$. 
By \cref{lem:flip_rules}, this change of $E_i$ makes the change $B_i: 0 \to 1$ improving, and no other state depends on $E_i$.
Hence, this is the only improving change.
We have now exactly all the changes of $\mathbf P_k$.
Note that $(A_{i}, B_{i}, D_{i}, E_{i}) = (1,1,1,2)$, while other states below $F_i$ are 0, as claimed.
Also, checking against \cref{lem:flip_rules} verifies that no change below $k$ is improving and we complete the inductive proof of $(J_k)$.

We use a similar analysis for $(K_k)$, which we briefly go through as follows.
We start with $(A_i,B_i,D_i,E_i)=(0,1,2,2)$.
By $(J_i)$ the improving ordinary changes first follow a rearrangement of $\mathbf P_i$.
The second row of \cref{tab:2opt_afterB} then gives the next change $E_i^3$, which then enables $D_i^0$.
At this point $A_{i-1}:1\to0$ and $E_i^4$ are both improving, and no other change is improving.
For $i\ge2$, changing $A_{i-1}$ enables $D_{i-1}:1\to2$, after which $(K_i)$ applies and we can perform a rearrangement of $\mathbf{Q}_i$.
For $i=1$, $A_0$ is the first change of $\mathbf Q_1$, and the rest of the changes of $\mathbf Q_1$ then follow.
After that, by the last row of \cref{tab:2opt_afterB} and by the argument for the case when $(A_i, B_i, D_i, E_i) = (0,1,0,3)$ before we apply $(K_i)$, either $E^0_i$ is the only improving change if $E^4_i$ has been performed by the end of the rearrangement of $\mathbf{Q}_i$; or there is no new change that becomes improving, and we have to perform $E_i^4$, after which, $E_i^0$ becomes improving.

We claim that we may perform $E_i^4$ at any time before $E_i^0$.
Indeed, after the change $D_i^0$ that enables $E_i^4$ and until $E_i^0$, we have $(A_i,B_i,D_i)=(0,1,0)$.
Since $D_i$ remains 0, by \cref{lem:2opt_valid_states}, it does not forbid the change $E_i^4$.
Further, by \cref{lem:flip_rules}, regardless of whether $B_{i-1}$ is 0 or 1, $E_i^4$ remains improving, and it does not depend on any other state.

Moreover, we cannot omit this change $E_i^4$, as by \cref{lem:flip_rules}, since $D_i$ remains 0, $E_i$ cannot take value 2, and by \cref{lem:2opt_valid_states}, the change of $E_i$ from 3 to 0 does not correspond to a 2-swap.
Hence, $E_i^4$ is the only improving change of $E_i$ and remains so until it is performed.

As argued above, after both $E_i^4$ and a rearrangement of $\mathbf{Q}_i$ are performed, the only improving change is $E_i^0$.
After that, we perform the only subsequent improving change $B_i:1\to0$, and all states lower than $k$ return to $0$.
Again, checking against \cref{lem:flip_rules} verifies that no change below $k$ is improving, and we complete the induction for $(K_k)$.

The induction also shows that after every prefix of a rearrangement, all improving changes lower than $k$ are exactly those stated in $(J_k)$ and $(K_k)$: the next unperformed ordinary change, unless blocked by a pending $E_j^4$, and the available changes $E_j^4$.
In particular, $(J_n)$ shows that $\mathbf P_n$ is improving.
Finally, starting from $\tau_0$, induction on the number of performed changes shows that every improving sequence is a prefix of a rearrangement of $\mathbf P_n$: every improving choice is either the next unperformed ordinary change or a currently available $E_j^4$, and either choice preserves the required order.
This proves the lemma.
\end{proof}

\begin{corollary}\label{cor:2opt_maximal}
The canonical sequence $\mathbf P_n$ is maximal, and every maximal improving swap sequence from $\tau_0$ is obtained from it by moving occurrences of $E_i^4$ between their corresponding $D_i^0$ and $E_i^0$. 
All these sequences have length $3$ if $n=1$, and $9\cdot2^{n-1}-7$ if $n\ge2$.
\end{corollary}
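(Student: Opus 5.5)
The plan is to derive everything from \cref{lem:reachable_prefix}, specifically from the statement $(J_n)$ established in its proof, and then to count lengths using the recursions \eqref{eq:P-base-state}--\eqref{eq:Q-recursion}.

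\emph{Maximality of $\mathbf P_n$.} Every state of the instance ($C_0$; $A_j,B_j$ for $j\le n-1$; $D_j,E_j$ for $j\in[n-1]$) is a change ``lower than $n$'' in the sense of that proof. At $\tau_0$ all states are $0$, so the hypothesis of $(J_n)$ holds; for $k=n$ no condition on layer $n$ is required. Hence $(J_n)$ gives that, after the full sequence $\mathbf P_n$ (which is trivially a rearrangement of itself), no change lower than $n$ is improving. By \cref{lem:2opt_valid_states}, every $2$-swap changes exactly one state, so no improving $2$-swap exists and the final tour is a local optimum. For $n=1$ this is the explicit base case $A_0C_0B_0$.

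\emph{Characterising all maximal sequences.} Take any maximal improving sequence $S$ from $\tau_0$. By \cref{lem:reachable_prefix}, its ordinary changes form a prefix of $\widehat{\mathbf P}_n$, and each $E_i^4$ in $S$ occurs after its $D_i^0$, since before that it is not improving. I will show that this prefix is all of $\widehat{\mathbf P}_n$ and that every $E_i^4$ is present, arguing by contradiction with the final tour being a local optimum. If some $E_i^4$ whose $D_i^0$ has occurred is missing, then it is improving by the lemma. If all such $E_i^4$ have been performed but the prefix is proper, then the next ordinary change is applicable, since the only blocking situation is a pending $E_i^4$, and hence it is improving. Either way $S$ would not be maximal. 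Therefore $S$ contains all of $\widehat{\mathbf P}_n$ plus one $E_i^4$ for each $D_i^0$. Each $E_i^4$ precedes the corresponding $E_i^0$, because $E_i^0$ is inapplicable while $E_i^4$ is pending. This means $S$ is exactly a rearrangement of $\mathbf P_n$ in the sense of the lemma. Conversely, every rearrangement is improving by $(J_n)$ and ends in the same local optimum, so each rearrangement is maximal.

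\emph{Length.} Rearrangements preserve length, so it suffices to compute $|\mathbf P_n|$. Reading off the definitions gives $|\mathbf P_1|=|\mathbf Q_1|=3$ and $|\mathbf P_2|=|\mathbf Q_2|=11$. For $k\ge3$, both \eqref{eq:P-recursion} and \eqref{eq:Q-recursion} contain $\mathbf P_{k-1}$, $\mathbf Q_{k-1}$, and $7$ single changes, so
\[
|\mathbf P_k| = |\mathbf Q_k| = |\mathbf P_{k-1}|+|\mathbf Q_{k-1}|+7 .
\]
The recursion for $|\mathbf P_n|$ uses only $\mathbf Q_{n-1}$, which is defined. Setting $L_k=|\mathbf P_k|$, we get $L_k+7=2(L_{k-1}+7)$ with $L_2+7=18$, and hence $L_k=9\cdot2^{k-1}-7$ for $k\ge2$. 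The main point needing care is the second step, where one must use the precise ``only blocked by pending $E_i^4$'' clause of \cref{lem:reachable_prefix} to exclude premature termination.
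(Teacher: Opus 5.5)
Your proposal is correct and follows the same route as the paper. Both derive the maximality of $\mathbf P_n$ and the characterization of all maximal sequences directly from \cref{lem:reachable_prefix}, and both obtain the length from $|\mathbf P_k|=|\mathbf Q_k|$, $|\mathbf P_2|=11$ and $|\mathbf P_k|=2|\mathbf P_{k-1}|+7$; you simply spell out the no-premature-termination step that the paper treats as immediate.
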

\begin{proof}
The first statement follows immediately from \cref{lem:reachable_prefix}. 
By the recursion of $\mathbf P_n$, we obtain that $|\mathbf{P}_k| = |\mathbf{Q}_k|$ for all applicable $k$, 
$|\mathbf P_1|=3$, $|\mathbf P_2|=11$, and for $k\ge3$, $|\mathbf{P}_k| = 2|\mathbf{P}_{k-1}|+7$. 
Hence, $|\mathbf{P}_n|=9\cdot2^{n-1}-7$ for $n\ge2$.
Since all maximal improving swap sequences have the same length as $\mathbf{P}_n$, the corollary then follows.
\end{proof}

We are now ready to prove the main theorem of this section.

\begin{proof}[Proof of \cref{thm:2opt}]
    Observe that $G_n$ has $O(n)$ vertices and each weight can be encoded by $O(n)$ bits.
    Combining this with \cref{cor:2opt_maximal}, we obtain the all-exp property for TSP with the 2-\opt neighborhood.
\end{proof}


\section{(2,3,2)-MCA has the all-exp property}
\label{sec:232}
This section is dedicated to showing the following.

\thmtwothreetwo*

Our construction is based on the construction to prove the all-exp property for Max-Cut with the flip neighborhood by Michel and Scott~\cite{MS2023}.
We create one binary variable for each vertex in the Michel-Scott construction and one two-variable constraint for each edge, choosing the constraint weights so that the two local-search problems behave identically at every vertex of degree at most three.
The only obstacle is that the Michel-Scott construction has vertices of degree four, namely $v_{i,1}$ and $v_{i,8}$, which would occur in four constraints and violate the bound $q=3$.
We replace each of them by a small gadget of degree-three variables.
The gadget for $v_{i,1}$ consists of two XOR gates; the gadget for $v_{i,8}$ is a chain of copies that broadcasts the value of $v_{i,8}$ to the next layer and prevents one update from overtaking the preceding one.

All gadgets use a common \emph{following} primitive, formalized in Section~\ref{subsec:following}.
A vertex $x$ \emph{follows} a Boolean function $\phi$ determined by its neighbors if, for every assignment $\sigma$ of other variables, $x$ has a strictly preferred value of $\phi(\sigma)$.
Hence, when $\sigma(x)\neq \phi(\sigma)$, flipping $x$ improves the objective.

For brevity, we do not separately recap the Michel-Scott construction. 
Instead, we directly describe our modified construction and establish the local properties needed for the proof. 
The relation between our construction and the original Michel-Scott construction is discussed in Remark~\ref{rem:MS-comparison}.

\subsection{Graph representation of binary MCA}
We use the standard constraint-graph representation of Max 2-CSP~\cite{MR2571908}, in which variables are represented by vertices, constraints by edges, and values by unary scores on vertices and binary scores on edges.
For our construction, it is convenient to use the following weighted graph representation of binary MCA.
Specifically, we represent a $(2,q,2)$-MCA instance by a graph $G = (V,E)$ of maximum degree at most $q$, edge weights $w\colon E\to\mathbb Z$, and vertex weights $b\colon V\to\mathbb Z$, with objective

\[
\Phi(\sigma) = \sum_{uv\in E} w(uv)\,[\sigma(u)\ne\sigma(v)] + \sum_{v\in V} b_v\,\sigma(v) .
\]

For a vertex $x$ we use the signed difference
\[
\Delta_x(\sigma)\ :=\ \Phi(\sigma[x\mapsto 1])-\Phi(\sigma[x\mapsto 0])
 = b_x+\sum_{ux \in E} w(xu)\,(1-2\sigma(u)).
\]

Thus flipping $x$ from $0$ to $1$ is improving iff $\Delta_x(\sigma)>0$, and flipping it from $1$ to $0$ is improving iff $\Delta_x(\sigma)<0$.

The following standard representation allows us to specify our MCA instance entirely by vertex and edge weights.

\begin{lemma}
\label{lem:mca-form}
Let $q\ge1$.
If $G$ has maximum degree at most $q$, then any objective $\Phi$ of the above form defines a $(2,q,2)$-MCA instance with the same objective function.
\end{lemma}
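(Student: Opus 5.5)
The plan is to turn each term of $\Phi$ into a constraint on at most two variables, so that every variable lies in at most $q$ constraints. We have $G=(V,E)$ with maximum degree at most $q$, weights $w$ and $b$. Each edge $uv\in E$ becomes a two-variable constraint $C_{uv}(\sigma(u),\sigma(v)):=w(uv)\,[\sigma(u)\ne\sigma(v)]$. This assigns the integer weight $w(uv)$ to the assignments $01$ and $10$, and weight $0$ to $00$ and $11$.

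The vertex terms $b_v\sigma(v)$ need care. Adding a separate unary constraint per vertex could push a vertex of degree $q$ to $q+1$ constraints. So we absorb each unary term into an existing constraint.

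\begin{itemize}
\item If $v$ has at least one incident edge, fix one such edge $uv$ (say the first in some fixed order). Replace $C_{uv}$ by
\[
C_{uv}(\sigma(u),\sigma(v)) + b_v\,\sigma(v).
\]
This is still a constraint on the two variables $u,v$ with integer weights. Several unary terms may be absorbed into the same edge constraint (for example, both $b_u$ and $b_v$); the result is still a function of $(\sigma(u),\sigma(v))$ and hence one constraint.
\item If $v$ is isolated, create a unary constraint $C_v(\sigma(v)):=b_v\sigma(v)$. Then $v$ lies in exactly one constraint, and $1\le q$ because $q\ge1$.
\end{itemize}

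Finally we check the three parameters. Every constraint has at most two variables, so $p=2$. Each non-isolated variable lies exactly in the constraints of its incident edges, which number at most $\deg(v)\le q$. Each isolated variable lies in one constraint. All domains are $\{0,1\}$, so $r=2$.

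Summing all constraints reproduces every edge term and every vertex term of $\Phi$ exactly once. So the objective coincides with $\Phi$ on every assignment, and in particular flip gains and improving sequences are identical.

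There is no real obstacle. The only point needing attention is the bookkeeping that unary terms must not create extra constraints for vertices of degree $q$, and absorbing them into incident edge constraints handles this.
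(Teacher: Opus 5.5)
Your proof is correct and follows the paper's argument: each edge becomes the binary constraint $w(uv)[\sigma(u)\ne\sigma(v)]$, each unary term $b_v\sigma(v)$ is absorbed into one incident edge constraint, and each isolated vertex gets a single unary constraint. That last case is permitted because $q\ge1$.
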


\begin{proof}
Associate with every edge $uv$ the binary constraint $C_{uv}(\sigma(u),\sigma(v)) = w(uv)[\sigma(u)\ne\sigma(v)].$
The unary term $b_x\sigma(x)$ can be incorporated into any binary constraint incident with $x$ by adding $b_x$ to both entries corresponding to $\sigma(x)=1$. If $x$ is isolated, use the unary constraint $C_x(0)=0$ and $C_x(1)=b_x$. Hence every variable occurs in at most $\deg_G(x)\le q$ binary constraints, or in one unary constraint if it is isolated.
\end{proof}

From this point on, we identify each binary MCA variable with the corresponding vertex in the graph representation and use the terms variable and vertex interchangeably.
By Lemma~\ref{lem:mca-form}, it therefore suffices to construct a graph of maximum degree three with suitable edge and vertex weights.

\subsection{The ``following'' primitive}
\label{subsec:following}
For a vertex $x$, a Boolean function $\phi$ of the values of some neighbors of $x$, and a set $\mathcal S$ of assignments, we say $x$ \emph{follows} $\phi$ on $\mathcal S$, denoted by $x\leftarrow\phi$ on $\mathcal S$, if for every assignment $\sigma \in \mathcal{S}$ we have $\Delta_x(\sigma)>0$ when $\phi(\sigma)=1$ and $\Delta_x(\sigma)<0$ when $\phi(\sigma)=0$. 
Equivalently, $x$ has a unique locally preferred value among the assignments in $\mathcal S$, namely $\phi(\sigma)$, and if $\sigma(x)\ne\phi(\sigma)$ then flipping $x$ is improving. 
If $\mathcal S$ is the set of all assignments, we simply say $x$ follows $\phi$.
The most important case is that $\phi$ is a \emph{literal} $y$ or $\bar y$ of a single neighbor $y$.

For a neighbor $y$ of $x$, define the \emph{residual incident weight}
\[
B_x(y)\ :=\ \sum_{ux \in E,\,u\ne y}|w(xu)|.
\]

The contribution to $\Delta_x$ from all neighbors other than $y$ lies in $[-B_x(y),B_x(y)]$.
Note that when the neighbors' values can be chosen arbitrarily, both endpoints of this range can be attained.

The following two lemmas list several statements that will be useful in our construction.

\begin{lemma}[Following a literal]
\label{lem:following}
    Let $x$ have neighbor $y$ and let $B=B_x(y)$. Then
    \begin{enumerate}[label=\textup{(\alph*)}]
        \item $x\leftarrow y$ holds if and only if $b_x-w(xy)-B>0$ and $b_x+w(xy)+B<0$;
        \item $x\leftarrow \bar y$ holds if and only if $b_x+w(xy)-B>0$ and $b_x-w(xy)+B<0$.
    \end{enumerate}
    In particular, if $b_x=0$, then $x\leftarrow y$ holds if and only if $w(xy)<-B$, and $x\leftarrow\bar y$ holds if and only if $w(xy)>B$.
\end{lemma}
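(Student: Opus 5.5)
The plan is to work directly from the formula $\Delta_x(\sigma)=b_x+\sum_{ux\in E}w(xu)(1-2\sigma(u))$ and split it into the term coming from $y$ and the residual from the other neighbors. Write
\[
\Delta_x(\sigma)=b_x+w(xy)(1-2\sigma(y))+R(\sigma),\qquad R(\sigma):=\sum_{ux\in E,\,u\ne y}w(xu)(1-2\sigma(u)).
\]
Since $R$ involves neither $x$ nor $y$, and each factor $1-2\sigma(u)\in\{1,-1\}$, the set of values of $R$ over all assignments is contained in $[-B,B]$. Both endpoints are attained: choose $\sigma(u)=0$ when $w(xu)\ge0$ and $\sigma(u)=1$ otherwise to get $R=B$, and do the reverse to get $-B$. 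These choices can be made independently of $\sigma(y)$ because $y$ is not among the other neighbors.

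For (a), note that $x\leftarrow y$ means two things: $\Delta_x>0$ whenever $\sigma(y)=1$, and $\Delta_x<0$ whenever $\sigma(y)=0$. When $\sigma(y)=1$ we have $\Delta_x=b_x-w(xy)+R$. This is positive for all assignments if and only if its minimum over $R$ is positive, which gives the condition $b_x-w(xy)-B>0$. When $\sigma(y)=0$ we have $\Delta_x=b_x+w(xy)+R$. This is negative for all assignments if and only if its maximum is negative, which gives $b_x+w(xy)+B<0$. Both directions of each equivalence rest only on the attainability of the extremes.

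Part (b) is identical with the roles of $\sigma(y)=0$ and $\sigma(y)=1$ swapped. Now we need $\Delta_x>0$ when $\sigma(y)=0$, which gives $b_x+w(xy)-B>0$, and $\Delta_x<0$ when $\sigma(y)=1$, which gives $b_x-w(xy)+B<0$.

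For the special case $b_x=0$, the conditions in (a) read $-w(xy)>B$ and $w(xy)<-B$, both of which say $w(xy)<-B$. The conditions in (b) read $w(xy)>B$ and $-w(xy)<-B$, both of which say $w(xy)>B$.

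The only point needing care is the attainability of $\pm B$ jointly with a prescribed value of $\sigma(y)$, which justifies the ``only if'' directions. This holds because the following relation is defined over all assignments and the neighbors other than $y$ are distinct variables from $y$ and $x$. If $x$ had parallel edges to $y$ this would fail, but $G$ is a simple graph.
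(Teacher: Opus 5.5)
Your proposal is correct and follows the paper's proof: you split $\Delta_x$ into the $y$-term plus a residual lying in $[-B,B]$, and reduce each sign condition over all assignments to the appropriate extreme value of that residual. Your explicit check that $\pm B$ is attained independently of $\sigma(y)$ is the remark the paper makes just before the lemma and relies on implicitly for the ``only if'' directions.
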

\begin{proof}
    We have $\Delta_x = b_x + w(xy)(1-2\sigma(y)) + r$ with $r\in[-B,B]$. 
    For (a) we need $\Delta_x>0$ whenever $\sigma(y)=1$, i.e. $b_x-w(xy)+r>0$ for every admissible $r$, which is equivalent to $b_x-w(xy)-B>0$; and we need $\Delta_x<0$ whenever $\sigma(y)=0$, which is equivalent to $b_x+w(xy)+B<0$. 
    Part (b) is the same with the roles of $\sigma(y)=0,1$ exchanged. The statement for $b_x=0$ follows immediately.
\end{proof}

\begin{lemma}
\label{lem:gates}
    Let $D>0$, let $z$ have neighbors $x,y$ and possibly another neighbor $t$.
    Let $r$ be $w(zt)(1 - 2\sigma({t}))$, if $t$ exists, and $0$ otherwise.
    Suppose that $|r|<D$.
    \begin{enumerate}[label=\textup{(\roman*)}]
        \item \textup{(AND-NOT)} If $w(zx)=-D$, $w(zy)=+D$ and $b_z=-D$, then $\Delta_z=2D(\sigma({x})-\sigma({y}))-D+r$, which is $D+r$ if $(\sigma({x}),\sigma({y}))=(1,0)$ and at most $-D+r$ otherwise; hence $z\leftarrow x\wedge\bar y$.
        \item \textup{(OR)} If $w(zx)=w(zy)=-D$ and $b_z=+D$, then $\Delta_z=2D(\sigma({x})+\sigma({y}))-D+r$; hence $z\leftarrow x\vee y$.
        \item \textup{(tie breaker)} 
        Suppose $t$ exists.
        If $w(zx)=w(zy)=-D$ and $b_z=0$, then
        \[
        \Delta_z=2D(\sigma({x})+\sigma({y})-1)+r.
        \]
        Hence, $z\leftarrow x$ on assignments with $\sigma({x})=\sigma({y})$. 
        On assignments with $\sigma({x})\ne\sigma({y})$, if $w(zt) > 0$, $z\leftarrow\bar t$; if $w(zt)  < 0$, $z\leftarrow t$.
    \end{enumerate}
\end{lemma}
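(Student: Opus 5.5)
The plan is to compute $\Delta_z$ directly from its definition in each of the three cases, and then read off the following relation from the sign of $\Delta_z$. Recall that
\[
\Delta_z(\sigma)=b_z+\sum_{uz\in E} w(zu)\,(1-2\sigma(u)).
\]
Since $z$ has neighbors $x$, $y$ and possibly $t$, this equals $b_z+w(zx)(1-2\sigma(x))+w(zy)(1-2\sigma(y))+r$, where $r$ is exactly the quantity defined in the statement. I will substitute the prescribed weights into this one formula in every case and use only the bound $|r|<D$.

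For (i), substituting $w(zx)=-D$, $w(zy)=D$ and $b_z=-D$ gives
\[
\Delta_z=-D-D(1-2\sigma(x))+D(1-2\sigma(y))+r=2D(\sigma(x)-\sigma(y))-D+r.
\]
I then go through the four values of $(\sigma(x),\sigma(y))$.
\begin{itemize}
\item For $(1,0)$ the value is $D+r>0$.
\item For $(0,0)$ and $(1,1)$ it is $-D+r<0$.
\item For $(0,1)$ it is $-3D+r<0$.
\end{itemize}
Hence $\Delta_z$ has the sign of $x\wedge\bar y$.

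For (ii), substituting $w(zx)=w(zy)=-D$ and $b_z=D$ gives
\[
\Delta_z=2D(\sigma(x)+\sigma(y))-D+r.
\]
This is $-D+r<0$ when both values are $0$, and at least $D+r>0$ otherwise. So $z\leftarrow x\vee y$.

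For (iii), the same substitution with $b_z=0$ gives
\[
\Delta_z=2D(\sigma(x)+\sigma(y)-1)+r.
\]
If $\sigma(x)=\sigma(y)$, this is $\pm 2D+r$, with the sign of $2\sigma(x)-1$. Since $|r|<D$, it has the sign dictated by $\sigma(x)$, so $z\leftarrow x$ on these assignments. If $\sigma(x)\neq\sigma(y)$, then $\Delta_z=r=w(zt)(1-2\sigma(t))$. When $w(zt)>0$, this is positive exactly when $\sigma(t)=0$, giving $z\leftarrow\bar t$. When $w(zt)<0$, it is positive exactly when $\sigma(t)=1$, giving $z\leftarrow t$.

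There is no real obstacle here. The only point needing care is in (iii) with $\sigma(x)\neq\sigma(y)$. There the following relation needs $\Delta_z\neq0$, and this holds because $w(zt)\neq 0$ and $1-2\sigma(t)=\pm1$. In this case we do not use the bound $|r|<D$.
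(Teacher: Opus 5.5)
Your proposal is correct and follows the paper's proof. Both substitute the weights directly into $\Delta_z$, use $|r|<D$ for the sign check in (i), (ii) and the agreeing case of (iii), and use $\Delta_z=r=w(zt)(1-2\sigma(t))$ in the disagreeing case; your explicit four-case enumeration in (i) simply spells out what the paper leaves implicit.
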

\begin{proof}
    Each formula follows by substituting the weights into the definition of $\Delta_z$.
    The first two items use $|r|<D$.
    For the last item, when $\sigma(x) = \sigma(y)$, since $|r| < D < 2D$, we have $z \leftarrow x$ (or equivalently $z \leftarrow y$).
    When $\sigma(x) \ne \sigma(y)$, we have $\Delta_z = r = w(zt)(1 - 2\sigma(t))$.
    This implies that on these assignments, $z \leftarrow \bar{t}$ when $w(zt) > 0$ and $z \leftarrow t$ when $w(zt) < 0$.
\end{proof}

\subsection{The construction}
\label{subsec:construction}

Fix $n\ge1$. 
Let $\lambda_i:=2^{13i+4}$ for $0\le i\le n+1.$
Thus $\lambda_{i+1}=2^{13}\lambda_i$. 
The instance $\mathcal I_n=(G,w,b)$ has the following vertices and edges; all vertex weights are $0$ unless stated otherwise. 

\paragraph{Paths.}
First, we add two new vertices $v_{0,1}$ and $v_{0,8}$ and an edge between them with weight $-7 \lambda_0$.
Next, for $i \in \{1, \dots, n\}$, we add a path of eight new vertices $v_{i,1}, \dots, v_{i,8}$ in that order, with edge weights $-7\lambda_i,\ -5\lambda_i,\ -5\lambda_i,\ -3\lambda_i,\ -3\lambda_i,\ -\lambda_i,\ -\lambda_i$ from $v_{i,1}$ to $v_{i,8}$.
Last, we add three further vertices $v'_0$, $v'_1$, and $v'_2$ with edges $v_{n,8} v'_0$, $v_{n,1}v'_1$, and $v'_1 v'_2$ with weights $-\lambda_n$, $\lambda_{n+1}$ and $-2\lambda_{n+1}$.
Additionally, $v'_0$ has vertex weight $-2\lambda_n$.

\paragraph{Gadget for $v_{i-1,1}$ ($i\in[n]$).} 
We want the locally preferred value of $v_{i-1,1}$, after the auxiliary variables have settled, to be
$v_{i,2}\oplus v_{i,4}\oplus v_{i,6}$, via the gadget below.

We add nine new vertices $f_1,f_2,f_3,f_4,f_z,g_1,g_2,g_3,g_4$ (all indexed by $i$) and edges as shown in  \cref{fig:v1-gadget}. 
\cref{tab:232_v1_gadget} specifies each new edge and gives the relation enforced at the vertex in the first column; the correctness of the following relations is shown in \cref{lem:local}.

\begin{table}[h!]
\caption{Data for the gadget to connect $v_{i-1,1}$ with $v_{i,2}$, $v_{i,4}$ and $v_{i,6}$. 
The $f$ and $g$ variables are additionally indexed by $i$.
Each edge is listed exactly once, in the row where it is introduced.
}
\small
\label{tab:232_v1_gadget}
\centering
\begin{tabular}{llll}
\hline
Variable & Follows & Edges (Weights) & Vertex weight\\ \hline
$f_1$ & $v_{i,2}$ & $f_1v_{i,2}\ (-\lambda_{i})$ & $0$\\
$f_2$ & $v_{i,4}$ & $f_2v_{i,4}\ (-\lambda_{i})$ & $0$\\
$g_2$ & $v_{i,6}$ & $g_2v_{i,6}\ (-\lambda_{i})$ & $0$\\
$f_3$ & $f_1\wedge\bar f_2$ & $f_3f_1\ (-\lambda_{i}/2^2)$, $f_3f_2\ (+\lambda_{i}/2^2)$ & $-\lambda_{i}/2^2$\\
$f_4$ & $\bar f_1\wedge f_2$ & $f_4f_1\ (+\lambda_{i}/2^2)$, $f_4f_2\ (-\lambda_{i}/2^2)$ & $-\lambda_{i}/2^2$\\
$f_z$ & $f_3\vee f_4$ & $f_zf_3\ (-\lambda_{i}/2^4)$, $f_zf_4\ (-\lambda_{i}/2^4)$ & $+\lambda_{i}/2^4$\\
$g_1$ & $f_z$ & $g_1f_z\ (-\lambda_{i}/2^6)$ & $0$\\
$g_3$ & $g_1\wedge\bar g_2$ & $g_3g_1\ (-\lambda_{i}/2^8)$, $g_3g_2\ (+\lambda_{i}/2^8)$ & $-\lambda_{i}/2^8$\\
$g_4$ & $\bar g_1\wedge g_2$ & $g_4g_1\ (+\lambda_{i}/2^8)$, $g_4g_2\ (-\lambda_{i}/2^8)$ & $-\lambda_{i}/2^8$\\
$v_{i-1,1}$ & $g_3\vee g_4$ & $v_{i-1,1}g_3\ (-\lambda_{i}/2^{10})$, $v_{i-1,1}g_4\ (-\lambda_{i}/2^{10})$ & $+\lambda_{i}/2^{10}$\\
\hline
\end{tabular}
\end{table}

\begin{figure}[ht!]
\centering
\begin{tikzpicture}[x=1.15cm,y=.95cm,>=latex,
 n/.style={circle,draw,minimum size=7mm,inner sep=1pt,font=\small},
 inp/.style={circle,draw=blue,dashed,minimum size=7mm,inner sep=1pt,font=\small},
 ]
\node[inp] (x1) at (0,1.6) {$v_{i,2}$};
\node[inp] (x2) at (0,0) {$v_{i,4}$};
\node[inp] (x3) at (0,-1.6) {$v_{i,6}$};
\node[n] (f1) at (1.4,1.6) {$f_1$};
\node[n] (f2) at (1.4,0) {$f_2$};
\node[n] (fz) at (3.9,1.6) {$f_z$};
\node[n] (f3) at (2.65,1.6) {$f_3$};
\node[n] (f4) at (2.65,0) {$f_4$};
\node[n] (g1) at (3.9,0) {$g_1$};
\node[n] (g2) at (3.9,-1.6) {$g_2$};
\node[n] (g3) at (5.2,0) {$g_3$};
\node[n] (g4) at (5.2,-1.6) {$g_4$};
\node[inp] (next) at (6.6,-0.8) {$v_{i-1,1}$};
\draw (x1)--(f1); \draw (x2)--(f2); \draw (x3)--(g2);
\foreach \u in {f1,f2,fz}{\draw (\u)--(f3);\draw (\u)--(f4);}
\draw (fz)--(g1);
\foreach \u in {g1,g2,next}{\draw (\u)--(g3);\draw (\u)--(g4);}
\end{tikzpicture}
\caption{Gadget for $v_{i-1,1}$. The $f$ and $g$ variables are additionally indexed by $i$. Once its auxiliary variables are locally stable, $v_{i-1,1}$ has preferred value $v_{i,2}\oplus v_{i,4}\oplus v_{i,6}$.}
\label{fig:v1-gadget}
\end{figure}

\paragraph{Gadget for $v_{i-1,8}$ ($i\in [n]$).} 
We add five new vertices $u_0,u_1,u_2,u_3,q$ (indexed by $i$) and edges as shown in \cref{fig:v8-gadget}. 

\begin{table}
\centering
\caption{Data for the gadget to connect $v_{i-1,8}$ with $v_{i,3}$, $v_{i,5}$ and $v_{i,7}$.
Each edge is listed exactly once, in the row where it is introduced.
All vertex weights are zero.
}
\label{tab:232_v8_gadget}
\small
\begin{tabular}{lll}
\hline
Variable & Follows & Edges (Weights)\\
\hline
$u_0$ & common value of $v_{i-1,8},u_1$; otherwise $\bar q$ & $u_0v_{i-1,8}\ (-\lambda_{i-1})$, $u_0u_1\ (-\lambda_{i-1})$, $u_0q\ (+1)$\\
$u_1$ & $u_0$ & $u_1u_2\ (-8)$, $u_1v_{i,7}\ (-1)$\\
$u_2$ & $u_1$ & $u_2u_3\ (-6)$, $u_2v_{i,5}\ (+1)$\\
$u_3$ & $u_2$ & $u_3q\ (-4)$, $u_3v_{i,3}\ (-1)$\\
$q$   & $u_3$ & $qv_{i,8}\ (-1)$\\
\hline
\end{tabular}
\end{table}

\begin{figure}[ht!]
\centering
\begin{tikzpicture}[x=1.25cm,y=1cm,>=latex,
 n/.style={circle,draw,minimum size=7mm,inner sep=1pt,font=\small},
 ext/.style={circle,draw=red,dashed,minimum size=7mm,inner sep=1pt,font=\small},
 ]
\node[ext] (b4) at (1.25,0) {$v_{i-1,8}$};
\node[n] (p) at (2.5,0) {$u_{0}$};
\node[n] (O2) at (3.75,0) {$u_{1}$};
\node[n] (O4) at (5,0) {$u_{2}$};
\node[n] (q) at (6.25,0) {$u_{3}$};
\node[n] (qo) at (7.5,0) {$q$};
\draw (b4)--(p)--(O2)--(O4)--(q)--(qo);
\draw[bend left=35] (p) to node[above,font=\scriptsize] {} (qo);
\node[ext] (a4) at (3.75,-1.5) {$v_{i,7}$};
\node[ext] (a3) at (5,-1.5) {$v_{i,5}$};
\node[ext] (a2) at (6.25,-1.5) {$v_{i,3}$};
\draw (O2)--(a4); \draw (O4)--(a3); \draw (q)--(a2);
\node[ext] (pd) at (7.5,-1.5) {$v_{i,8}$};
\draw (qo)--(pd);
\end{tikzpicture}
\caption{Gadget for $v_{i-1,8}$. The $u$ and $q$ variables are additionally indexed by $i$.}
\label{fig:v8-gadget}
\end{figure}

The edges $u_1v_{i,7}$, $u_2v_{i,5}$, $u_3v_{i,3}$ of weights $-1,+1,-1$ correspond to the incident edges of $v_{i-1,8}$ in the Michel-Scott construction, now attached to copies of $v_{i-1,8}$.
The purpose of the other edges is to make sure that the flip of $v_{i-1,8}$ gets propagated correctly to these copies.

\paragraph{Initial assignment $\sigma_0$.} We use the all-zero assignment (i.e., $\sigma_0(x)=0$ for all $x$).

\paragraph{Layer.}
For $0 \le i \le n$, we call the set of all vertices indexed by $i$ the \emph{layer} $F_i$ of the construction.
See \cref{fig:level} for an illustration of $F_i$ together with some following relations between variables.
Note that $F_i$ contains all vertices of the gadget for $v_{i-1,1}$ and the gadget for $v_{i-1,8}$,  except for $v_{i-1,1}$ and $v_{i-1,8}$ themselves.

\begin{figure}[ht!]
\centering
\begin{tikzpicture}[x=0.84cm,y=0.95cm,>=latex,
 ms/.style={circle,draw,thick,minimum size=7mm,inner sep=1pt,font=\small},
 n/.style={circle,draw,minimum size=6.5mm,inner sep=1pt,font=\small},
 ext/.style={circle,draw,dashed,fill=gray!20,minimum size=7mm,inner sep=1pt,font=\small},
 extlow/.style={circle,draw,dashed,minimum size=7mm,inner sep=1pt,font=\scriptsize},
 sync/.style={circle,draw,line width=1.1pt,minimum size=9.2mm,inner sep=0pt},
 follow/.style={-{latex[length=2.2mm,width=1.6mm]}},
 doublefollow/.style={<->},
 tie/.style={-{Latex[open,length=2.2mm,width=1.6mm]}},
 neg/.style={midway,fill=white,inner sep=0.5pt,font=\scriptsize}
]

\node[ms] (v1_new) at (-0.2,3.5) {$v_{i,1}$};
\node[ms] (P2) at (1.5,3.5) {$v_{i,2}$};
\node[ms] (P3) at (3.2,3.5) {$v_{i,3}$};
\node[ms] (P4) at (4.9,3.5) {$v_{i,4}$};
\node[ms] (P5) at (6.6,3.5) {$v_{i,5}$};
\node[ms] (P6) at (8.3,3.5) {$v_{i,6}$};
\node[ms] (P7) at (10.0,3.5) {$v_{i,7}$};
\node[ms] (P8) at (12.0,3.5) {$v_{i,8}$};

\node[sync] (SP3) at (P3.center) {};
\node[sync] (SP5) at (P5.center) {};
\node[sync] (SP7) at (P7.center) {};
\node[sync] (SP8) at (P8.center) {};

\draw[follow]
  (v1_new)--node[neg,above] {}(P2);

\draw[follow]
  (P2)--node[neg,above] {}(SP3);

\draw[doublefollow]
  (SP3)--node[neg,above] {}(P4);

\draw[follow]
  (P4)--node[neg,above] {}(SP5);

\draw[doublefollow]
  (SP5)--node[neg,above] {}(P6);

\draw[follow]
  (P6)--node[neg,above] {}(SP7);

\draw[doublefollow]
  (SP7)--node[neg,above] {}(SP8);

\node[n] (f1) at (1.5,2.0) {$f_1$};
\node[n] (f2) at (4.9,2.0) {$f_2$};
\node[n] (g2) at (8.3,2.0) {$g_2$};

\node[n] (f3) at (2.5,0.7) {$f_3$};
\node[n] (f4) at (3.9,0.7) {$f_4$};
\node[n] (fz) at (3.2,-0.55) {$f_z$};

\node[n] (g1) at (5.0,-0.55) {$g_1$};
\node[n] (g3) at (6.0,0.7) {$g_3$};
\node[n] (g4) at (7.4,0.7) {$g_4$};

\node[ext] (v1) at (6.7,-0.55) {$v_{i-1,1}$};

\draw (P2)--(f1);
\draw (P4)--(f2);
\draw (P6)--(g2);

\draw (f1)--(f3);
\draw (f1)--(f4);
\draw (f2)--(f3);
\draw (f2)--(f4);

\draw (f3)--(fz);
\draw (f4)--(fz);
\draw (fz)--(g1);

\draw (g1)--(g3);
\draw (g1)--(g4);
\draw (g2)--(g3);
\draw (g2)--(g4);

\draw (g3)--(v1);
\draw (g4)--(v1);

\node[ext] (Q8) at (14,6.5) {$v_{i-1,8}$};

\node[n] (u0) at (14.0,5.0) {$u_0$};
\node[n] (u1) at (10.0,5.0) {$u_1$};
\node[n] (u2) at (6.6,5.0) {$u_2$};
\node[n] (u3) at (3.2,5.0) {$u_3$};
\node[n] (q)  at (12.0,6.5) {$q$};

\node[sync] (Su0) at (u0.center) {};

\draw[follow] (Q8)--(Su0);

\draw[doublefollow] (Su0)--(u1);

\draw[follow] (u1)--(u2);
\draw[follow] (u2)--(u3);
\draw[follow] (u3)--(q);

\draw[tie]
  (q)--node[neg,above right] {$\neg$}(Su0);

\draw[tie] (u1) -- (SP7);
\draw[tie] (u2) --node[neg,right] {$\neg$} (SP5);
\draw[tie] (u3) -- (SP3);
\draw[tie] (q)--(SP8);

\node[ext] (Pu0) at (14,3.5) {$u_{i+1,0}$};
\node[ext] (Pg3) at (-1,5) {$g_{i+1,3}$};
\node[ext] (Pg4) at (0.6,5) {$g_{i+1,4}$};

\draw[follow] (Pu0) -- (SP8);
\draw (Pg3) -- (v1_new) -- (Pg4);

\end{tikzpicture}

\caption{A layer $F_i$ of $\mathcal I_n$ for $1 \leq i < n$ with the incident vertices of the adjacent layers (gray circles with dashed borders). 
Directed edges indicate following
relations, and $\neg$ denotes complementation. 
At a ringed vertex, filled arrowheads indicate the two main inputs and the open arrowhead indicates the tie-breaker used when they disagree. 
Unoriented edges record adjacency only; dependencies inside the gadget for $v_{i,1}$ and $v_{i-1,1}$ are omitted.
For $i = n \ge 1$, the vertex $u_{i+1,0}$ is replaced by $v'_0$, and $v_{i,1}$ is incident to $v'_1$ instead of $g_{i+1,3}$ and $g_{i+1,4}$.
}
\label{fig:level}
\end{figure}

\begin{remark}
\label{rem:MS-comparison}
    Readers familiar with the construction of Michel and Scott~\cite{MS2023} for Max-Cut with the flip neighborhood will recognize much of its structure in ours, particularly the paths and the recursive pattern of flips along them.
    In fact, as we show in the next section, the projection of our canonical flip sequence onto the path vertices is
    \[
    R_0=(v_{0,1},v_{0,8}),\qquad
    R_n=(v_{n,1},v_{n,2},R_{n-1},v_{n,3},v_{n,4},R_{n-1},
    v_{n,5},v_{n,6},R_{n-1},v_{n,7},v_{n,8}),
    \]
    which is exactly the recursive flip sequence in their construction.
    
    There are two changes in the presentation.
    First, we use an all-zero initial assignment: relative to the Michel-Scott construction, the odd-indexed path variables and $v'_1$ are complemented.
    Second, we use larger scales $\lambda_i=2^{13i+4}$ at each layer instead of the scales $\lambda_i=8^{i}$ appearing in the original construction.
    Neither change affects the recursive flip sequence on the path vertices.
    
    The substantive modification is at the two degree-four interfaces.
    We replace the interaction at $v_{i-1,1}$ and $v_{i-1,8}$ by the respective gadgets, so that every variable has degree at most three.
    The additional variable $v'_0$ plays the same role for $v_{n,8}$ as $u_{i+1,0}$ does for $v_{i,8}$ at $i<n$.
    This allows us to treat all layers uniformly in the correctness proof.
\end{remark}

\subsection{Correctness}
The correctness proof has two stages.
First, we verify locally that every auxiliary variable has the intended preferred value; these properties hold for every assignment and are summarized in \cref{lem:local} below.
We then characterize all the maximal improving flip sequences by first defining a ``canonical'' sequence $\Omega_n$ and then showing that every maximal improving flip sequence is simply a rearrangement of some flips in $\Omega_n$.
More formally, we define a dependency graph $D_n$ that encodes a partial order of the flips in $\Omega_n$, and the rearrangement above is then simply a linear extension of this dependency graph.

\begin{lemma}
\label{lem:local}
    In $\mathcal I_n$, the following hold.
    \begin{enumerate}[label=\textup{(\alph*)}]
        \item \textup{(Gadget for $v_{i-1,1}$)} The column ``Follows'' of \cref{tab:232_v1_gadget} is correct. Specifically, $f_1\leftarrow v_{i,2}$, $f_2\leftarrow v_{i,4}$, $g_2\leftarrow v_{i,6}$, $f_3\leftarrow f_1\wedge\bar f_2$, $f_4\leftarrow\bar f_1\wedge f_2$, $f_z\leftarrow f_3\vee f_4$, $g_1\leftarrow f_z$, $g_3\leftarrow g_1\wedge\bar g_2$, $g_4\leftarrow\bar g_1\wedge g_2$, and $v_{i-1,1}\leftarrow{g_3\vee g_4}$.
        Further, this implies that in an assignment $\sigma$ such that none of the variables above can be flipped to improve the objective, we have $\sigma(v_{i-1,1}) = \sigma(v_{i,2})\oplus \sigma(v_{i,4})\oplus \sigma(v_{i,6})$.
        \item \textup{(Gadget for $v_{i-1,8}$)} The column ``Follows'' of \cref{tab:232_v8_gadget} is correct. Specifically, $u_1\leftarrow u_0$, $u_2\leftarrow u_1$, $u_3\leftarrow u_2$, $q\leftarrow u_3$; moreover $u_0\leftarrow v_{i-1,8}$ when $\sigma({v_{i-1,8}})=\sigma({u_1})$, while $u_0\leftarrow\bar q$ when $\sigma({v_{i-1,8}})\ne\sigma({u_1})$.
        \item \textup{(Even path vertices)} For $1\le i\le n$ and $j\in\{2,4,6\}$, $v_{i,j} \leftarrow {v_{i,j-1}}$.
        \item \textup{(Odd path vertices)} For $1\le i\le n$ and $j\in\{3,5,7\}$, 
        when $\sigma({v_{i,j-1}})=\sigma({v_{i,j+1}})$, $v_{i,j}\leftarrow v_{i,j-1}$.
        When $\sigma({v_{i,j-1}})\ne\sigma({v_{i,j+1}})$, $v_{i,3}\leftarrow {u}_{i,3}$, $v_{i,5}\leftarrow \bar{u}_{i,2}$, and $v_{i,7}\leftarrow {u}_{i,1}$.
        \item $v_{n,1}\leftarrow \bar v'_1$; $v'_1\leftarrow v'_2$; $v'_2\leftarrow v'_1$; $v_{0,8}\leftarrow v_{0,1}$, and $v'_0$ strictly prefers the value 0 in every assignment.
        \item 
        For $1\le i<n$, if
        $\sigma(v_{i,7})=\sigma(u_{i+1,0})$, then $v_{i,8} \leftarrow v_{i,7}$; otherwise $v_{i,8}\leftarrow q_i$.
        For $i=n$, the same statement holds with $v'_0$ in place of $u_{n+1,0}$.
    \end{enumerate}
\end{lemma}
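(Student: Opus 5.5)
Each item is a local statement about a single vertex, so the plan is to verify them one by one. For each vertex we list its incident edges and its vertex weight, and then apply \cref{lem:following} or \cref{lem:gates}. The key quantity throughout is the residual incident weight $B_x(y)$. Since the scales grow geometrically ($\lambda_{i+1}=2^{13}\lambda_i$, and inside the gadgets the weights decrease by factors $2^2$), the dominant edge always beats the sum of all other incident weights.

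For (a), we check each row of \cref{tab:232_v1_gadget} in turn.
\begin{itemize}
\item $f_1$ has edge $-\lambda_i$ to $v_{i,2}$ and residual weight $2\cdot\lambda_i/2^2=\lambda_i/2$ from $f_3,f_4$. Since $b=0$, \cref{lem:following} gives $f_1\leftarrow v_{i,2}$. The vertices $f_2$ and $g_2$ are handled the same way; for $g_2$ the residual is $2\lambda_i/2^8$.
\item $f_3$ and $f_4$ are AND-NOT gates with $D=\lambda_i/2^2$ and third neighbour $f_z$, where $|r|=\lambda_i/2^4<D$. This is \cref{lem:gates}(i).
\item $f_z$ is an OR gate with $D=\lambda_i/2^4$ and $|r|=\lambda_i/2^6$. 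This is \cref{lem:gates}(ii).
\item $g_1$ has edge $\lambda_i/2^6$ against a residual of $2\lambda_i/2^8$, so $g_1\leftarrow f_z$.
\item $g_3,g_4$ are AND-NOT gates and $v_{i-1,1}$ is an OR gate, with the same verification. For $v_{i-1,1}$ the third neighbour is the path edge $v_{i-1,1}v_{i-1,2}$ of weight $7\lambda_{i-1}=7\lambda_i/2^{13}<\lambda_i/2^{10}$. For $i=1$ it is the edge $v_{0,1}v_{0,8}$, of weight $7\lambda_0$, with the same bound.
\end{itemize}
The XOR consequence then follows by composing the relations at a point where none of these vertices is improving: $f_z=f_1\oplus f_2$ and $v_{i-1,1}=g_1\oplus g_2$.

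For (b), the edge $u_1u_0$ of weight $-\lambda_{i-1}$ dominates the residual $8+1$ at $u_1$, so $u_1\leftarrow u_0$. Similarly $6+1<8$ at $u_2$, $4+1<6$ at $u_3$, and $1<4$ at $q$ with the edge to $v_{i,8}$. The vertex $u_0$ is exactly the tie breaker of \cref{lem:gates}(iii) with $D=\lambda_{i-1}$, $t=q$ and $w(u_0q)=+1$, giving $u_0\leftarrow\bar q$ on disagreement.

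For (c) and (d), the path weights decrease in pairs, so each path vertex $v_{i,j}$ has incident weights determined by its position.
\begin{itemize}
\item Even $j$: $v_{i,j}$ sees $-(8-j+1)\lambda_i$ toward $v_{i,j-1}$, $-(8-j-1)\lambda_i$ toward $v_{i,j+1}$, and $\lambda_i$ toward $f_1$, $f_2$ or $g_2$. The residual is $(8-j)\lambda_i$, which is strictly smaller, so \cref{lem:following} applies.
\item Odd $j\in\{3,5,7\}$: the two path edges have equal weight $-D$, with $D=5,3,1$ times $\lambda_i$. The third neighbour is $u_{i,3}$, $u_{i,2}$ or $u_{i,1}$, with weight $-1,+1,-1$ and $|r|=1<D$. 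This is the tie breaker of \cref{lem:gates}(iii), and the sign of $w$ yields the stated literal.
\end{itemize}

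For (e), consider first $v'_1$, which has $w(v'_1v'_2)=-2\lambda_{n+1}$ against a residual $\lambda_{n+1}$. Next, $v_{n,1}$ has edge $+\lambda_{n+1}$ against a residual of $7\lambda_n$ plus the $v_{n-1}$-gadget contributions, which are all at most $\lambda_n$; by \cref{lem:following}(b) this gives $v_{n,1}\leftarrow\bar v'_1$. The vertex $v'_2$ has a single edge, so $v'_2\leftarrow v'_1$. The vertex $v_{0,8}$ has edge $-7\lambda_0$ against the gadget edge $\lambda_0$ to $u_{1,0}$. Finally, $v'_0$ has $b=-2\lambda_n$ and one edge of weight $\lambda_n$, so $\Delta\le-\lambda_n<0$.

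For (f), $v_{i,8}$ has the two edges $-\lambda_i$ (to $v_{i,7}$, and to $u_{i+1,0}$ or $v'_0$) and the edge $-1$ to $q_i$. This is again the tie breaker.

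The main obstacle is purely bookkeeping. For each vertex we must get the complete neighbour list right, including the cross-layer edges. The delicate cases are $v_{i-1,1}$, $v_{i,8}$ and $v_{n,1}$, where we need the inequality $7\lambda_{i-1}<\lambda_i/2^{10}$, which holds since $7<2^3$.
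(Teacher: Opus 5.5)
Your proposal is correct and follows the paper's proof essentially line by line: each vertex is checked individually via \cref{lem:following} (dominant edge against the residual incident weight) or \cref{lem:gates} (AND-NOT, OR, tie breaker), and the XOR consequence is obtained by composing the gate relations. There are two harmless bookkeeping slips, neither of which affects any conclusion: at $q$ the residual is $|w(qu_0)|+|w(qv_{i,8})|=2$, which is still $<4$; and $v_{n,1}$ has no gadget neighbours at all (its only neighbours are $v_{n,2}$ and $v'_1$), so its residual is exactly $7\lambda_n<\lambda_{n+1}$.
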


\begin{proof}
    (a) We have
    \begin{align*}
        w(f_1v_{i,2}) &= -\lambda_i < -(\lambda_i/2^2 + \lambda_i/2^2) = -(|w(f_1f_3)| + |w(f_1f_4)|) = -B_{f_1}(v_{i,2}) \\
        w(f_2v_{i,4}) &= -\lambda_i < -(\lambda_i/2^2 + \lambda_i/2^2) = -(|w(f_2f_3)| + |w(f_2f_4)|) = -B_{f_2}(v_{i,4}) \\
        w(g_2v_{i,6}) &= -\lambda_i < -(\lambda_i/2^8 + \lambda_i/2^8) = -(|w(g_2g_3)| + |w(g_2g_4)|) = -B_{g_2}(v_{i,6}) \\
        w(g_1f_z) &= -\lambda_i/2^6 < -(\lambda_i/2^8 + \lambda_i/2^8) = -(|w(g_1g_3)| + |w(g_1g_4)|) = -B_{g_1}(f_z) \\
    \end{align*}
    Hence, by Lemma~\ref{lem:following}, we have $f_1 \leftarrow v_{i,2}$, $f_2 \leftarrow v_{i,4}$, $g_2 \leftarrow v_{i,6}$, and $g_1 \leftarrow f_z$. 
    For the remaining items in \cref{tab:232_v1_gadget}, we apply the following:
    \begin{itemize}
        \item Lemma~\ref{lem:gates}(i) to $f_3,f_4$ with $D=\lambda_{i}/2^2$ and residual incident weight $\lambda_{i}/2^4$; 
        \item Lemma~\ref{lem:gates}(ii) to $f_z$ with $D=\lambda_{i}/2^4$ and residual incident weight $\lambda_{i}/2^6$; 
        \item Lemma~\ref{lem:gates}(i) to $g_3,g_4$ with $D=\lambda_{i}/2^8$ and residual incident weight $\lambda_{i}/2^{10}$; and 
        \item Lemma~\ref{lem:gates}(ii) to $v_{i-1,1}$ with $D=\lambda_{i}/2^{10}=2^3\lambda_{i-1}>7\lambda_{i-1},$ where $7\lambda_{i-1}$ is the absolute weight of $v_{i-1,1}v_{i-1,2}$ for $i > 1$ and of $v_{i-1,1} v_{i-1,8}$ for $i = 1$.
    \end{itemize} 

    This proves every following relation in the table.
    If all variables appearing there cannot be flipped to improve the objective, these relations give
    \begin{align*}
    \sigma(f_{i,z})
     &=\sigma(f_{i,3})\vee\sigma(f_{i,4})
      =\sigma(f_{i,1})\oplus\sigma(f_{i,2})
      =\sigma(v_{i,2})\oplus\sigma(v_{i,4}),\\
    \sigma(v_{i-1,1})
     &=\sigma(g_{i,3})\vee\sigma(g_{i,4})
      =\sigma(g_{i,1})\oplus\sigma(g_{i,2}) =\sigma(v_{i,2})\oplus\sigma(v_{i,4})\oplus\sigma(v_{i,6}),
    \end{align*}
    as claimed.

    (b) We have
    \begin{align*}
        w(u_1u_0) &= -\lambda_{i-1} \le -2^4 < -(1 + 8) = -\left(|w(u_1u_2)| + |w(u_1v_{i,7})|\right) = -B_{u_1}(u_0),\\
        w(u_2u_1) &= -8 < -(6+1) = -\left(|w(u_2u_3)| + |w(u_2v_{i,5})|\right) = -B_{u_2}(u_1),\\
        w(u_3u_2) &= -6 < -(4+1) = -\left(|w(u_3q)| + |w(u_3v_{i,3})|\right) = -B_{u_3}(u_2),\\
        w(qu_3) &= -4 < -(1+1) = -\left(|w(qu_0)| + |w(qv_{i,8})|\right) = -B_{q}(u_3).
    \end{align*}
    Hence, by \cref{lem:following}, we have $u_1\leftarrow u_0$, $u_2\leftarrow u_1$, $u_3\leftarrow u_2$, $q\leftarrow u_3$, as claimed.
    For $u_0$, applying Lemma~\ref{lem:gates}(iii) with $D=\lambda_{i-1}$ yields the desired relations.

    (c) We have
    \begin{align*}
        w(v_{i,2}v_{i,1}) &= -7\lambda_i < -(5\lambda_i + \lambda_i) = -(|w(v_{i,2}v_{i,3})| + |w(v_{i,2}f_{i,1})|) = -B_{v_{i,2}}(v_{i,1}), \\
        w(v_{i,4}v_{i,3}) &= -5\lambda_i < -(3\lambda_i + \lambda_i) = -(|w(v_{i,4}v_{i,5})| + |w(v_{i,4}f_{i,2})|) = -B_{v_{i,4}}(v_{i,3}), \\
        w(v_{i,6}v_{i,5}) &= -3\lambda_i < -(\lambda_i + \lambda_i) = -(|w(v_{i,6}v_{i,7})| + |w(v_{i,6}g_{i,2})|) = -B_{v_{i,6}}(v_{i,5}).
    \end{align*}
    By \cref{lem:following}, we obtain $v_{i,2} \leftarrow {v}_{i,1}$, $v_{i,4} \leftarrow {v}_{i,3}$, and $v_{i,6} \leftarrow {v}_{i,5}$, as claimed.

    (d) The two path edges incident with $v_{i,3}$ have weight $-5\lambda_i$, those incident with $v_{i,5}$ have weight $-3\lambda_i$, and those incident with $v_{i,7}$ have weight $-\lambda_i$.
    The remaining incident edge has absolute weight $1$.
    Thus \cref{lem:gates}(iii) applies at each of these vertices.
    When the path neighbors agree, the vertex follows their common value.
    When they disagree, the sign of the unit edge gives that $v_{i,3}\leftarrow {u}_{i,3}$, $v_{i,5}\leftarrow \bar{u}_{i,2}$, and $v_{i,7}\leftarrow {u}_{i,1}$, as claimed.

    (e) These cases follow from \cref{lem:following} and the following observations.
    \begin{align*}
        w(v_{n,1}v'_1) &= \lambda_{n+1} > 7\lambda_n = |w(v_{n,1}v_{n,2})| = B_{v_{n,1}}(v'_1), \\
        w(v'_1v'_2) &= -2\lambda_{n+1} < -\lambda_{n+1} = -w(v'_1v_{n,1}) = -B_{v'_1}(v'_2), \\
        w(v'_2v'_1) &= -2\lambda_{n+1} < 0 = -B_{v'_2}(v'_1), \\
        w(v_{0,8}v_{0,1}) &= -7\lambda_0 < -\lambda_{0} = -|w(v_{0,8}u_{1,0})| = -B_{v_{0,8}}(v_{0,1}).
    \end{align*}
    
    Finally, $\Delta_{v'_0}=-2\lambda_n-\lambda_n(1-2\sigma(v_{n,8}))
\in\{-3\lambda_n,-\lambda_n\}$.
    So $v'_0$ always strictly prefers $0$.

    (f) When $i<n$, the two main edges at $v_{i,8}$, namely $v_{i,8}v_{i,7}$ and $v_{i,8}u_{i+1,0}$, both have weight $-\lambda_i$, while $v_{i,8}q_i$ has weight $-1$.
    Hence the statement follows from \cref{lem:gates}(iii).
    The case $i=n$ is identical because $v_{n,8}v'_0$ has weight $-\lambda_n$.
\end{proof}

\paragraph{The canonical flip sequence.} 
Recall the sequence $R_n$ defined in \cref{rem:MS-comparison}.
For $i \in [n]$, we define 
\begin{align*}
\rho^{(1)}_i&=f_{i,1}\,f_{i,3}\,f_{i,z}\,g_{i,1}\,g_{i,3}, & \rho^{(2)}_i&=f_{i,2}\,f_{i,3}\,f_{i,z}\,g_{i,1}\,g_{i,3}, & \rho^{(3)}_i&=g_{i,2}\,g_{i,4},\\
\rho^{(4)}_i&=f_{i,1}\,f_{i,4}\,f_{i,z}\,g_{i,1}\,g_{i,4}, & \rho^{(5)}_i&=f_{i,2}\,f_{i,4}\,f_{i,z}\,g_{i,1}\,g_{i,4}, & \rho^{(0)}_i&=g_{i,2}\,g_{i,4},
\end{align*}
\[
\alpha_i=(u_{i,0},u_{i,1},u_{i,2},u_{i,3},q_i).
\]
We obtain $\Omega_n$ from $R_n$ as follows: before the $m$-th occurrence of $v_{i-1,1}$, insert $\rho_i^{(m\bmod6)}$, and after every occurrence of $v_{i-1,8}$, insert $\alpha_i$.
Define $\bar\Omega_n$ in the same way, except that the string inserted before the $m$-th occurrence of $v_{i-1,1}$ is $\rho_i^{((m+3)\bmod6)}$.

Then it is easy to obtain the following explicit recursions.

\begin{observation}
\label{obs:232_omega_recursions}
    We have $\Omega_0 = \bar{\Omega}_0 = R_0$.
    Further, for $n \ge 1$,
    \begin{align*}
        \Omega_n &= (v_{n,1}, v_{n,2}, \rho^{(1)}_n, \Omega_{n-1}, \alpha_n, v_{n,3}, v_{n,4}, \rho^{(2)}_n, \bar{\Omega}_{n-1}, \alpha_n, v_{n,5}, v_{n,6}, \rho^{(3)}_n, \Omega_{n-1}, \alpha_n, v_{n,7}, v_{n,8}) \\
        \bar{\Omega}_n &= (v_{n,1}, v_{n,2}, \rho^{(4)}_n, \bar{\Omega}_{n-1}, \alpha_n, v_{n,3}, v_{n,4}, \rho^{(5)}_n, \Omega_{n-1}, \alpha_n, v_{n,5}, v_{n,6}, \rho^{(0)}_n, \bar{\Omega}_{n-1}, \alpha_n, v_{n,7}, v_{n,8})
    \end{align*}
\end{observation}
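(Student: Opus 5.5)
The plan is to unfold the definition of $\Omega_n$ and $\bar\Omega_n$ once at the top level. By the recursive definition of $R_n$ in \cref{rem:MS-comparison}, $R_n$ is the concatenation
\[
(v_{n,1},v_{n,2}),\ R_{n-1},\ (v_{n,3},v_{n,4}),\ R_{n-1},\ (v_{n,5},v_{n,6}),\ R_{n-1},\ (v_{n,7},v_{n,8}).
\]
The insertion rules act on different layers. Strings indexed by $i$ are inserted only next to occurrences of $v_{i-1,1}$ or $v_{i-1,8}$. The only insertions that involve top-level symbols of layer $n$ are therefore those attached to $v_{n-1,1}$ and $v_{n-1,8}$, and these symbols occur only inside the three copies of $R_{n-1}$. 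For $n=0$ there is nothing to insert, so $\Omega_0=\bar\Omega_0=R_0$.

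For the insertions with index $i\le n-1$, I would check that inserting into $R_n$ agrees, copy by copy, with inserting into each $R_{n-1}$ separately. The key point is the mod-$6$ counter. The variable $v_{i-1,1}$ occurs in $R_{n-1}$ exactly $3^{\,n-1-(i-1)}$ times, which is a power of $3$. I need to count its occurrences in the $j$-th copy as $m$ shifts through the copies: the offset is $0$, $3^{n-i}$, or $2\cdot3^{n-i}$ modulo $6$. Since $3^{k}\equiv 3 \pmod 6$ for $k\ge1$, the offsets are $0,3,0$ modulo $6$ when $n-i\ge1$. Hence the first and third copies become $\Omega_{n-1}$, and the middle copy becomes $\Omega_{n-1}$ with every index shifted by $3$, which is $\bar\Omega_{n-1}$.

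There is one subtlety I must check. The shift is uniform only if all layers $i<n$ simultaneously see offset $\equiv3$ in the middle copy. This holds because every exponent $n-i$ is at least $1$. The $\alpha_i$ insertions carry no counter, so they are unaffected.

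For layer $n$, $v_{n-1,1}$ is the first symbol of each copy of $R_{n-1}$ and $v_{n-1,8}$ is the last. So $\rho_n^{(m)}$ for $m=1,2,3$ goes immediately before the three copies, and $\alpha_n$ goes immediately after each copy. This yields exactly the stated formula for $\Omega_n$.

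For $\bar\Omega_n$, the same argument applies with the counter shifted by $3$. The layer-$n$ strings become $\rho_n^{(4)},\rho_n^{(5)},\rho_n^{(0)}$. The inner copies get offsets $3,0,3$ modulo $6$, which gives $\bar\Omega_{n-1},\Omega_{n-1},\bar\Omega_{n-1}$.

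The main obstacle is the occurrence count, which rests on the claim that $v_{i-1,1}$ appears $3^{n-i+1}$ times in $R_n$. I would prove this by a one-line induction, using that $R_{n}$ contains three copies of $R_{n-1}$ and that $v_{n,1}$ appears once.
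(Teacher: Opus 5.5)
Your proposal is correct. It is the direct unfolding of the definition that the paper leaves to the reader: the paper gives no proof and only says the recursions are easy to obtain. Your key step is right: the layer-$n$ insertions sit at the boundaries of the three copies of $R_{n-1}$, and $v_{i-1,1}$ occurs $3^{n-i}$ times per copy with $3^{k}\equiv 3$ and $2\cdot 3^{k}\equiv 0 \pmod 6$ for $k\ge1$, so the counters are shifted by $0,3,0$ (respectively $3,0,3$), giving $\Omega_{n-1},\bar\Omega_{n-1},\Omega_{n-1}$ (respectively $\bar\Omega_{n-1},\Omega_{n-1},\bar\Omega_{n-1}$).
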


Let $T_n=|\Omega_n|=|\bar\Omega_n|$.
The recursions above give $T_0=2$ and $T_n=3T_{n-1}+35$.
Hence, we obtain the following.

\begin{observation}
For $n \geq 0$, $T_n = \frac{39\cdot 3^{n}-35}{2}$.
\end{observation}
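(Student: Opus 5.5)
The plan is to solve the linear recurrence $T_n = 3T_{n-1}+35$ with $T_0=2$ by induction on $n$. First I will justify the recurrence itself from \cref{obs:232_omega_recursions}. For $n=0$ we have $\Omega_0=\bar\Omega_0=R_0=(v_{0,1},v_{0,8})$, so $T_0=2$. For $n\ge1$, the displayed recursion for $\Omega_n$ contains the following pieces:
\begin{itemize}
\item three recursive blocks, namely $\Omega_{n-1}$, $\bar\Omega_{n-1}$ and $\Omega_{n-1}$;
\item the eight path vertices $v_{n,1},\dots,v_{n,8}$;
\item three copies of $\alpha_n$, each of length $5$;
\item the strings $\rho^{(1)}_n,\rho^{(2)}_n,\rho^{(3)}_n$, of lengths $5,5,2$.
\end{itemize}
The non-recursive part therefore contributes $8+15+12=35$ flips.

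The same count works for $\bar\Omega_n$. Its three recursive blocks are $\bar\Omega_{n-1},\Omega_{n-1},\bar\Omega_{n-1}$, and it uses $\rho^{(4)}_n,\rho^{(5)}_n,\rho^{(0)}_n$, whose lengths are again $5,5,2$. Hence $|\Omega_n|=|\bar\Omega_n|$ holds by simultaneous induction, which justifies writing $T_n$ for both. The recurrence $T_n=3T_{n-1}+35$ follows.

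It remains to verify the closed form by induction. The base case is $T_0=(39-35)/2=2$. For the inductive step,
\[
3\cdot\frac{39\cdot3^{n-1}-35}{2}+35=\frac{39\cdot3^n-105+70}{2}=\frac{39\cdot3^n-35}{2}.
\]
Equivalently, $T_n+\tfrac{35}{2}$ is multiplied by $3$ at each step, starting from $\tfrac{39}{2}$.

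There is no real obstacle here. The only point needing care is the bookkeeping of the $\rho$ lengths: the strings inserted before the $m$-th occurrence of $v_{n-1,1}$ cycle through indices $m \bmod 6$. I will check that within each of $\Omega_n$ and $\bar\Omega_n$ the indices are exactly the ones shown, so that each contributes total $\rho$-length $12$.
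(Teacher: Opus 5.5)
Your proposal is correct and follows the paper's argument: the paper reads $T_0=2$ and $T_n=3T_{n-1}+35$ off the explicit recursions for $\Omega_n$ and $\bar\Omega_n$ and then solves the recurrence, exactly as you do. Your tally $8+15+12=35$ and the simultaneous induction for $|\Omega_n|=|\bar\Omega_n|$ simply spell out what the paper leaves implicit.
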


We next describe the two phases used in the induction.
Let $\tau_0$ be the assignment on $F_0$ that gives value $1$ to both of its variables.
For $i\ge1$, let $\tau_i$ be the assignment on $F_i$ that gives value $1$ to
\[
\{v_{i,1},\dots,v_{i,8},f_{i,1},f_{i,2},g_{i,2},g_{i,4},u_{i,0},u_{i,1},u_{i,2},u_{i,3},q_i\}
\]
and value $0$ to the remaining variables of $F_i$.
We say $F_0,\dots,F_m$ are in \emph{phase $0$} if all their variables are $0$, and in \emph{phase $1$} if each $F_i$ is assigned according to $\tau_i$.

\begin{lemma}
\label{obs:phase-parity}
For every $0 \le m\le n$, $\Omega_m$ and $\bar\Omega_m$ have the same flip parity on every variable.  
Starting in either phase on $F_0,\dots,F_m$, applying either complete string therefore changes these layers to the other phase.
\end{lemma}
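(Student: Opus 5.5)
The plan is an induction on $m$, using the explicit recursions of \cref{obs:232_omega_recursions}. I would prove a slightly stronger statement $(P_m)$: for every variable $x$ in $F_0\cup\dots\cup F_m$, the number of occurrences of $x$ in $\Omega_m$ and in $\bar\Omega_m$ are both odd if $x$ has value $1$ under $\tau_i$ (where $F_i\ni x$), and both even otherwise.

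\textbf{Reduction of the second claim.} The second sentence of the lemma follows from $(P_m)$ at once. Applying a string of flips changes the value of each variable $x$ exactly when $x$ occurs an odd number of times in it. Phase $1$ on $F_0,\dots,F_m$ is obtained from phase $0$ by complementing precisely the variables in the support of $\tau_0,\dots,\tau_m$. Hence applying either complete string, viewed as XOR with the same set, exchanges the two phases in either direction. Here we only count flips; whether they are improving plays no role.

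\textbf{Base case and lower layers.} For $m=0$ we have $\Omega_0=\bar\Omega_0=R_0=(v_{0,1},v_{0,8})$. Each variable of $F_0$ occurs once, and $\tau_0$ assigns $1$ to both, so $(P_0)$ holds. For $m\ge1$, consider a variable in $F_0\cup\dots\cup F_{m-1}$. By \cref{obs:232_omega_recursions}, $\Omega_m$ contains two copies of $\Omega_{m-1}$ and one of $\bar\Omega_{m-1}$, while $\bar\Omega_m$ contains two copies of $\bar\Omega_{m-1}$ and one of $\Omega_{m-1}$. No other symbol in these recursions belongs to a lower layer: the $\rho^{(\cdot)}_m$ and $\alpha_m$ consist of variables of $F_m$, and $v_{m-1,1},v_{m-1,8}$ appear only inside the recursive strings. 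By $(P_{m-1})$, all three copies have the same parity on such a variable. The total parity is therefore three times that common parity, which equals the common parity and matches $\tau$.

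\textbf{Layer $F_m$.} This is a direct count from the recursions, and it is the step requiring care, though it is routine bookkeeping.
\begin{itemize}
\item Each of $v_{m,1},\dots,v_{m,8}$ occurs exactly once in each string, so all are odd.
\item $\alpha_m$ occurs three times, so $u_{m,0},\dots,u_{m,3},q_m$ are odd.
\item In $\Omega_m$, using $\rho^{(1)}_m,\rho^{(2)}_m,\rho^{(3)}_m$: $f_{m,1},f_{m,2},g_{m,2},g_{m,4}$ each occur once. $f_{m,3},f_{m,z},g_{m,1},g_{m,3}$ each occur twice. $f_{m,4}$ does not occur.
\item In $\bar\Omega_m$, using $\rho^{(4)}_m,\rho^{(5)}_m,\rho^{(0)}_m$: $f_{m,1},f_{m,2},g_{m,2}$ each occur once. $g_{m,4}$ occurs three times. $f_{m,4},f_{m,z},g_{m,1}$ each occur twice. $f_{m,3}$ and $g_{m,3}$ do not occur.
\end{itemize}
In both strings the odd set is $\{v_{m,1},\dots,v_{m,8},f_{m,1},f_{m,2},g_{m,2},g_{m,4},u_{m,0},\dots,u_{m,3},q_m\}$, which is exactly the support of $\tau_m$. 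This establishes $(P_m)$ and completes the induction.

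The main risk is a miscount in the $\rho$-strings, in particular $g_{m,4}$, which occurs three times in $\bar\Omega_m$. I would double-check each against the definitions.
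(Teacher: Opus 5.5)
Your proposal is correct and follows the paper's proof: the paper also inducts on $m$ via the recursions in the Observation, counts occurrences in the top layer ($v_{m,j}$ once, $\alpha_m$ three times, and the $\rho$-blocks giving odd parity exactly on $f_{m,1},f_{m,2},g_{m,2},g_{m,4}$), and uses that three lower-layer copies of equal parity have that same parity. Your strengthened hypothesis, that the odd set equals the support of $\tau_0,\dots,\tau_m$, spells out the step the paper leaves implicit when it deduces the phase change, and your counts (including the three occurrences of $g_{m,4}$ in $\bar\Omega_m$) check out.
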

\begin{proof}
We prove by induction.
The lemma statement is immediate for $m=0$.
For $m\ge1$, each path variable $v_{m,j}$ occurs once and each variable of $\alpha_m$ occurs three times.
In both $\Omega_m$ and $\bar\Omega_m$, the three $\rho_m$-blocks flip $f_{m,1},f_{m,2},g_{m,2},g_{m,4}$ an odd number of times and every other $f$- or $g$-variable an even number of times.
The three recursive substrings have, by induction, the same parity on lower layers.
Hence, the net effect of these three is the same as the parity of one substring.
The lemma then follows from all of the above together with the inductive hypothesis.
\end{proof}

\paragraph{The dependency graph.}
We now characterize all (not necessarily maximal) improving flip sequences of $\mathcal{I}_n$ via a directed graph $D_n$.
Intuitively, the vertices of $D_n$ represent occurrences of flips in $\Omega_n$, and not the variables, since a variable such as $q_n$ and $u_{n,j}$ may occur several times in $\Omega_n$.
An edge $(r,s)$ in $D_n$ then represents that the flip at position $r$ must happen before the flip at position $s$.
In other words, $D_n$ encodes a partial order of the flips.

More formally, let $\Omega_n = (\omega_1, \dots, \omega_{T_n})$, and $\bar{\Omega}_n = (\bar{\omega}_1, \dots, \bar{\omega}_{T_n})$.
The vertices of $D_n$ are $[T_n]$, which we interpret as positions in either $\Omega_n$ or $\bar{\Omega}_n$.
For $n=0$, let $D_0$ consist of the edge $(1,2)$.
For $n \ge 1$, we construct $D_n$ on $[T_n]$ as follows, with $t=T_{n-1}$.
\begin{itemize}
    \item For $2 \le s \le T_n$, if $\omega_{s-1} \in F_n \setminus \{q_n\}$, we add the edge $(s-1,s)$.
    \item For $j \in \{7, t + 19, 2t + 28\}$, we add a copy of $D_{n-1}$ shifted by $j$ (i.e., for every edge $(r,s)$ of $D_{n-1}$, we add the edge $(r+j,s+j)$).
    \item We add the edges $(t+7, t+8)$, $(2t+19, 2t+20)$, and $(3t+28, 3t+29)$, which correspond to the three cases from $v_{n-1,8}$ to $u_{n,0}$ in the recursion.
    \item We add the edge $(r,s)$ for the following $r,s$:
    \[ \begin{array}{rcll} 
    r & s & \omega_r & \omega_s \\
    \hline
    t+11 & t+13 & u_{n,3}^{(1)} & v_{n,3}\\
    t+12 & 2t+20 & q_n^{(1)} & u_{n,0}^{(2)}\\ 
    2t+22 & 2t+25 & u_{n,2}^{(2)} & v_{n,5}\\ 
    2t+24 & 3t+29 & q_n^{(2)} & u_{n,0}^{(3)}\\
    3t+30 & 3t+34 & u_{n,1}^{(3)} & v_{n,7} \\
    T_n-2 & T_n & q_n^{(3)} & v_{n,8}
    \end{array} \]
    Here the superscript indicates the first, second, or third occurrence in $\Omega_n$ (which is the same as in $\bar{\Omega}_n$).
\end{itemize}

See \cref{fig:232_dependency_graph} for an illustration.

\begin{figure}[t]
\centering

\begin{tikzpicture}[
    >=Latex,
    vertex/.style={
        circle,
        draw,
        minimum size=6.5mm,
        inner sep=.5pt,
        font=\small
    },
    request/.style={
        draw,
        rounded corners=2pt,
        minimum width=12mm,
        minimum height=6.5mm,
        inner sep=1pt,
        font=\small
    },
    blob/.style={
        ellipse,
        draw,
        minimum width=20mm,
        minimum height=9.5mm,
        inner sep=1pt,
        font=\small
    },
    port/.style={
        circle,
        fill,
        inner sep=1.1pt
    },
    dep/.style={
        ->,
        thin
    },
    extra/.style={
        ->,
        thin,
        dashed
    },
    idx/.style={
        font=\scriptsize
    }
]


\node[vertex] (v1) at (0,0) {$v_1$};
\node[idx,above=1.5mm of v1] {$1$};

\node[vertex] (v2) at (1.05,0) {$v_2$};
\node[idx,above=1.5mm of v2] {$2$};

\node[request] (R1) at (2.35,0) {$R_1$};
\node[idx,above=1.5mm of R1] {$3\text{--}7$};

\node[blob] (B1) at (4.35,0) {$D_{n-1}$};
\node[idx,above=1.5mm of B1]
    {$8\text{--}(t+7)$};

\node[port] (s1) at ($(B1.west)+(1.4mm,0)$) {};
\node[port] (z1) at ($(B1.east)+(-1.4mm,0)$) {};

\node[vertex] (u01) at (6.25,0) {$u_0$};
\node[vertex] (u11) at (7.25,0) {$u_1$};
\node[vertex] (u21) at (8.25,0) {$u_2$};
\node[vertex] (u31) at (9.25,0) {$u_3$};
\node[vertex] (q1)  at (10.25,0) {$q$};

\node[idx] at ($(u01)!0.5!(q1)+(0,0.72)$)
    {$(t+8)\text{--}(t+12)$};

\node[vertex] (v3) at (11.45,0) {$v_3$};
\node[idx,above=1.5mm of v3] {$t+13$};

\node[vertex] (v4) at (12.50,0) {$v_4$};
\node[idx,above=1.5mm of v4] {$t+14$};

\draw[dep] (v1) -- (v2);
\draw[dep] (v2) -- (R1);
\draw[dep] (R1) -- (s1);
\draw[dep] (z1) -- (u01);

\draw[dep] (u01) -- (u11);
\draw[dep] (u11) -- (u21);
\draw[dep] (u21) -- (u31);
\draw[dep] (u31) -- (q1);

\draw[dep]
    (u31.north east)
    to[out=25,in=155]
    (v3.north west);

\draw[dep] (v3) -- (v4);


\node[request] (R2) at (12.50,-2.5) {$R_2$};
\node[idx,align=center,below=1.3mm of R2]
    {$(t+15)$\\--\\$(t+19)$};

\node[blob] (B2) at (10.50,-2.5) {$D_{n-1}$};
\node[idx,below=1.2mm of B2,xshift=-0mm]
    {$(t+20)\text{--}(2t+19)$};

\node[port] (s2) at ($(B2.east)+(-1.4mm,0)$) {};
\node[port] (z2) at ($(B2.west)+(1.4mm,0)$) {};

\node[vertex] (u02) at (8.60,-2.5) {$u_0$};
\node[vertex] (u12) at (7.60,-2.5) {$u_1$};
\node[vertex] (u22) at (6.60,-2.5) {$u_2$};
\node[vertex] (u32) at (5.60,-2.5) {$u_3$};
\node[vertex] (q2)  at (4.60,-2.5) {$q$};

\node[idx] at ($(u02)!0.5!(q2)+(0,-0.72)$)
    {$(2t+20)\text{--}(2t+24)$};

\node[vertex] (v5) at (3.40,-2.5) {$v_5$};
\node[idx,below=1.5mm of v5] {$2t+25$};

\node[vertex] (v6) at (2.35,-2.5) {$v_6$};
\node[idx,left=1.5mm of v6] {$2t+26$};

\draw[dep]
    (v4.south) -- (R2.north);

\draw[dep] (R2) -- (s2);
\draw[dep] (z2) -- (u02);

\draw[dep] (u02) -- (u12);
\draw[dep] (u12) -- (u22);
\draw[dep] (u22) -- (u32);
\draw[dep] (u32) -- (q2);

\draw[dep] (v5) -- (v6);

\draw[dep]
    (q1.south)
    to[out=-70,in=70]
    (u02.north);

\draw[dep]
    (u22.north west)
    to[out=155,in=25]
    (v5.north east);


\node[request] (R3) at (2.35,-5.0) {$R_3$};
\node[idx,align=center,left=1.5mm of R3]
    {$(2t+27)$\\--\\$(2t+28)$};

\node[blob] (B3) at (4.35,-5.0) {$D_{n-1}$};
\node[idx,below=1.5mm of B3]
    {$(2t+29)\text{--}(3t+28)$};

\node[port] (s3) at ($(B3.west)+(1.4mm,0)$) {};
\node[port] (z3) at ($(B3.east)+(-1.4mm,0)$) {};

\node[vertex] (u03) at (6.25,-5.0) {$u_0$};
\node[vertex] (u13) at (7.25,-5.0) {$u_1$};
\node[vertex] (u23) at (8.25,-5.0) {$u_2$};
\node[vertex] (u33) at (9.25,-5.0) {$u_3$};
\node[vertex] (q3)  at (10.25,-5.0) {$q$};

\node[idx] at ($(u03)!0.5!(q3)+(0,-0.72)$)
    {$(3t+29)\text{--}(3t+33)$};

\node[vertex] (v7) at (11.45,-5.0) {$v_7$};
\node[idx,below=1.5mm of v7] {$3t+34$};

\node[vertex] (v8) at (12.50,-5.0) {$v_8$};
\node[idx,below=1.5mm of v8] {$T_n$};

\draw[dep]
    (v6.south) -- (R3.north);

\draw[dep] (R3) -- (s3);
\draw[dep] (z3) -- (u03);

\draw[dep] (u03) -- (u13);
\draw[dep] (u13) -- (u23);
\draw[dep] (u23) -- (u33);
\draw[dep] (u33) -- (q3);

\draw[dep] (v7) -- (v8);

\draw[dep]
    (q2.south)
    to[out=-70,in=110]
    (u03.north);

\draw[dep]
    (u13.north east)
    to[out=35,in=145]
    (v7.north west);

\draw[dep]
    (q3.south east)
    to[out=-25,in=-155]
    (v8.south west);

\end{tikzpicture}

\caption{
Recursive construction of $D_n$, where $t=T_{n-1}$.
The vertices of $D_n$ are indices in $[T_n]$; the symbols inside the circles are the corresponding variables in $\Omega_n$, with the subscript $n$ suppressed.
Each ellipse represents a shifted copy of $D_{n-1}$, with only its unique source and sink displayed.
The boxes $R_1,R_2,R_3$ are the paths corresponding to $\rho_n^{(1)},\rho_n^{(2)},\rho_n^{(3)}$ for $\Omega_n$, and to
$\rho_n^{(4)},\rho_n^{(5)},\rho_n^{(0)}$ for $\bar\Omega_n$.
}
\label{fig:232_dependency_graph}
\end{figure}

\begin{lemma}    
\label{lem:232_D_obs}
    The graph $D_n$ is acyclic, has unique source $1$, and every vertex can reach $T_n$.
\end{lemma}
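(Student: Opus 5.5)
The proof will be by induction on $n$, proving all three properties together. For $n=0$, $D_0$ is the single edge $(1,2)$ on vertex set $[T_0]=\{1,2\}$, so it is acyclic, has unique source $1$, and both vertices reach $T_0=2$. For $n\ge1$ with $t=T_{n-1}$, I will first fix the positions of the blocks of $\Omega_n$ from \cref{obs:232_omega_recursions}:
\begin{itemize}
\item $v_{n,1},v_{n,2}$ occupy positions $1,2$, and $\rho^{(1)}_n$ occupies $3$--$7$;
\item the first copy of $\Omega_{n-1}$ occupies $8$--$(t+7)$, and $\alpha_n$ occupies $(t+8)$--$(t+12)$;
\item $v_{n,3},v_{n,4}$ occupy $t+13,t+14$, and $\rho^{(2)}_n$ occupies $(t+15)$--$(t+19)$;
\item the second recursive copy occupies $(t+20)$--$(2t+19)$, and $\alpha_n$ occupies $(2t+20)$--$(2t+24)$;
\item $v_{n,5},v_{n,6}$ occupy $2t+25,2t+26$, and $\rho^{(3)}_n$ occupies $2t+27,2t+28$;
\item the third recursive copy occupies $(2t+29)$--$(3t+28)$, and $\alpha_n$ occupies $(3t+29)$--$(3t+33)$;
\item $v_{n,7},v_{n,8}$ occupy $3t+34$ and $T_n=3t+35$.
\end{itemize}
This is consistent with $T_n=3t+35$ and with the shifts $7$, $t+19$, $2t+28$ used for the copies of $D_{n-1}$.

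For acyclicity, I will check that every edge $(r,s)$ of $D_n$ satisfies $r<s$. Consecutive edges $(s-1,s)$ do. The shifted copies of $D_{n-1}$ do by induction; more precisely, I will strengthen the inductive statement to say that all edges go forward in position, which holds trivially for $D_0$. The three interface edges $(t+7,t+8)$, $(2t+19,2t+20)$, $(3t+28,3t+29)$ and the six extra edges in the table go forward by inspection. Hence the identity ordering is a topological order, and $D_n$ is acyclic.

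For the source, I will show every vertex $s\ne1$ has an in-edge. Consider the vertices of $F_n$ and of the recursive blocks separately.
\begin{itemize}
\item If $s$ is inside a recursive block but is not its first position, it has an in-edge from its copy of $D_{n-1}$ by induction.
\item The first positions $8$, $t+20$, $2t+29$ of the blocks are preceded by positions $7$, $t+19$, $2t+28$, which are the last flips of $\rho$-blocks. These flips are $g_{n,3}$ or $g_{n,4}$, which lie in $F_n\setminus\{q_n\}$, so a consecutive edge exists.
\item Every position in $F_n$ whose predecessor lies in $F_n\setminus\{q_n\}$ gets a consecutive edge.
\item The remaining cases are $u_{n,0}$ occurrences, whose predecessor is the last flip of a recursive block ($v_{n-1,8}$), and $v_{n,3}$, $v_{n,5}$, $v_{n,7}$, $v_{n,8}$. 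The $u_{n,0}$ occurrences receive the interface edges. Of the others, only $v_{n,3}$ (position $t+13$) follows $q_n$; it receives the table edge from $t+11$. Positions $2t+25$ and $3t+34$ follow $q_n$ as well and receive the table edges. Position $T_n$ follows $v_{n,7}\in F_n\setminus\{q_n\}$, so it also gets a consecutive edge.
\end{itemize}
Vertex $1$ has no in-edge, so it is the unique source. This needs only that each $\rho$-block ends in a variable of $F_n$ other than $q_n$, which is clear.

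For reachability of $T_n$, I will show every vertex has an out-edge unless it is $T_n$. Since the graph is finite and acyclic, every maximal forward walk then ends at $T_n$. Consider again the positions of $F_n$ and of the recursive blocks.
\begin{itemize}
\item Positions in $F_n\setminus\{q_n\}$ other than $T_n$ have a consecutive out-edge.
\item The three $q_n$ occurrences $t+12$, $2t+24$, $3t+33=T_n-2$ have table out-edges to $2t+20$, $3t+29$, $T_n$.
\item Inside a recursive copy, every vertex reaches that copy's sink by induction. The sinks are $t+7$, $2t+19$, $3t+28$, each of which has an interface out-edge.
\end{itemize}
The main bookkeeping risk is the index arithmetic matching the table entries to the correct variables, for example that position $t+11$ is $u_{n,3}$ and $2t+22$ is $u_{n,2}$. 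I will verify this directly from the listed block offsets.
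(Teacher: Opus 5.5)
Your proof is correct and takes the paper's route: the paper also notes that every edge goes from a smaller to a larger index and gets the source and reachability claims by induction over the recursive construction, which you spell out position by position. Your index bookkeeping for the blocks and the six table edges checks out; the one blemish is the phrase ``only $v_{n,3}$ follows $q_n$'', since $v_{n,5}$ and $v_{n,7}$ do too, as your very next sentence says.
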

\begin{proof}
All edges go from a smaller to a larger index, so $D_n$ is acyclic.
The remainder of the statement follows by an easy induction on $n$ from the recursive construction.
\end{proof}

\paragraph{All improving sequences.}
For $\varepsilon\in\{0,1\}$, define
\[
\Xi_m^\varepsilon=
\begin{cases}
\Omega_m,&\varepsilon=0,\\
\bar\Omega_m,&\varepsilon=1.
\end{cases}
\]
We use the following statement, which is formulated so that it applies both to the entire construction and to a lower-layer recursive block.

\begin{lemma}
\label{lem:activation}
Fix $0 \le m\le n$ and $\varepsilon\in\{0,1\}$.
Suppose $F_0,\dots,F_m$ are in phase $\varepsilon$, and among the variables in these layers only $v_{m,1}$ is improving.
Suppose further that the values of all neighbors of $F_0\cup\cdots\cup F_m$ outside these layers remain fixed until $v_{m,8}$ flips.
If $m\ge1$, also suppose that the external neighbor of $v_{m,8}$ (i.e., $u_{m+1,0}$ when $m\ne n$, and $v'_0$ when $m=n$) has value $\varepsilon$ and remains fixed until $v_{m,8}$ flips.
Then all maximal improving flip sequences restricted to $F_0,\dots,F_m$ are exactly the strings obtained by labeling the linear extensions of $D_m$ with the entries of $\Xi_m^\varepsilon$.
After all $|\Xi_m^\varepsilon| = T_m$ flips have occurred, the layers $F_0,\dots,F_m$ are in phase $1-\varepsilon$.
\end{lemma}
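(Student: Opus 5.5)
We proceed by induction on $m$, mirroring the recursion of \cref{obs:232_omega_recursions} and the recursive definition of $D_m$. For $m=0$ the layers consist of $v_{0,1},v_{0,8}$ only. By hypothesis only $v_{0,1}$ is improving. After it flips, \cref{lem:local}(e) ($v_{0,8}\leftarrow v_{0,1}$) makes $v_{0,8}$ the unique improving variable in $F_0$. The external neighbors (the $g_{1,\cdot}$ and $u_{1,0}$) are fixed by assumption, so no other flip is possible. This gives exactly the linear extension $(1,2)$ of $D_0$ and moves $F_0$ to phase $1-\varepsilon$.

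For the inductive step, fix $m\ge1$ and $\varepsilon$, and treat $\Xi_m^\varepsilon$ in three segments, separated by the three recursive blocks. In each segment we use \cref{lem:local} to list, at every intermediate assignment, exactly which variables of $F_m$ (and of $F_{m-1}$'s interface variables $v_{m-1,1},v_{m-1,8}$) are improving. First, $v_{m,1}$ flips, and then $v_{m,2}\leftarrow v_{m,1}$ by \cref{lem:local}(c). Next the $\rho$-chain propagates through the XOR gadget: by \cref{lem:local}(a), each variable flips in turn as a chain, and only the next one in the chain is out of sync. At the end of the chain, $v_{m-1,1}$ becomes improving, since its preferred value is $v_{m,2}\oplus v_{m,4}\oplus v_{m,6}$. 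Here we must also check that $v_{m,3}$ is not yet improving. It is not: $v_{m,2}\ne v_{m,4}$, so $v_{m,3}$ follows $u_{m,3}$, which still agrees with it. At this point we invoke the induction hypothesis for layers $F_0,\dots,F_{m-1}$ in the appropriate phase, namely $\varepsilon$ or $1-\varepsilon$ according to $\Omega$ versus $\bar\Omega$. The hypothesis applies because the external neighbors of those layers lie in $F_m$ and are frozen. The $g$-variables are stable once the chain finishes. $v_{m,3}$ cannot flip until $u_{m,3}$ changes. $u_{m,0}$ has value matching the required $\varepsilon$ and cannot flip before $v_{m-1,8}$ does: by \cref{lem:local}(b) it follows the common value of $v_{m-1,8}$ and $u_{m,1}$. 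So the recursive block contributes exactly the linear extensions of the shifted copy of $D_{m-1}$, and it leaves those layers in the opposite phase. Then the $\alpha_m$ chain $u_{m,0},u_{m,1},u_{m,2},u_{m,3},q_m$ runs, each element following its predecessor. The chain releases $v_{m,3}$ after $u_{m,3}$ (the edge $(t+11,t+13)$), after which $v_{m,4}$ follows $v_{m,3}$. The flip of $q_m$ is the only one not ordered with $v_{m,3},v_{m,4},\dots$, which explains the absence of the edge $(t+12,t+13)$. It is a prerequisite for the next $u_{m,0}$ flip, through the tie-breaker $u_0\leftarrow\bar q$, giving the edge $(t+12,2t+20)$. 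The second and third segments are handled the same way, with the roles of $u_{m,2}$ and $u_{m,1}$, and the complemented edge at $v_{m,5}$, matching \cref{lem:local}(d). The final $v_{m,8}$ waits for $q_m^{(3)}$ by \cref{lem:local}(f), using that the external neighbor has value $\varepsilon$.

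The main obstacle is exactness, i.e.\ showing that no other flip is improving at any reachable assignment: not a premature path flip, not a flip in a gadget, and not a flip in a lower layer. We would organize this as an invariant. At each reachable assignment, a variable is improving if and only if its unique occurrence in the current block is a minimal unperformed vertex of $D_m$. We verify the invariant by case analysis over which predecessors are done, reading off \cref{lem:local}. The delicate points are that $q_m$ may lag arbitrarily far behind, even into the next recursive block, without enabling anything else, and that $u_{m,0}$ receives both of its inputs correctly at the tie-break. The phase claim then follows from \cref{obs:phase-parity}.
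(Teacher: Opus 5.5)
Your proposal is correct and follows essentially the same route as the paper's proof. Both argue by induction on $m$ through the three recursive blocks, with the forced $v_{m,j}$/$\rho$ and $\alpha_m$ chains, the XOR computation making $v_{m-1,1}$ the only improving lower-layer variable, and the frozen external neighbours $g_{m,3},g_{m,4},u_{m,0}$ justifying each use of the inductive hypothesis. The tie-breakers at $u_{m,0}$ and $v_{m,8}$ then produce the lagging $q_m$ branches and the six extra edges of $D_m$, and \cref{obs:phase-parity} gives the final phase.

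The only difference is presentational: you package exactness as a single invariant (a variable is improving iff its next unperformed occurrence is a minimal vertex of what remains of $D_m$), whereas the paper checks it stage by stage. One small imprecision: the first flip of $u_{m,0}$ in each $\alpha_m$ is triggered through $u_{m,0}\leftarrow\bar q_m$, not by $u_{m,0}$ directly following $v_{m-1,8}$.
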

\begin{proof}
    We prove the statement by induction on $m$.
    For $m=0$, $v_{0,1}$ is the only improving variable by assumption.
    After it flips, \cref{lem:local}(e) makes $v_{0,8}$ improving, and no variable of $F_0$ can improve again.
    Thus the unique improving sequence is $(v_{0,1},v_{0,8})$, which is the unique linear extension of $D_0$.
    The phase changes by \cref{obs:phase-parity}.
    
    Now let $m\ge1$ and assume the statement for $m-1$.
    We follow the three recursive blocks in the definition of $\Xi_m^\varepsilon$.
    At three points in the proof, two groups of flips can proceed independently.
    We call these groups \emph{branches}: their flips may be interleaved, subject to the dependencies within each branch, but both branches must finish before the next flip outside them can occur.
    The flips within each branch must still respect the order specified below.
    
    \subparagraph{First branching stage.}
    Initially, only $v_{m,1}$ is improving.
    By \cref{lem:local}(a),(c), the flips $v_{m,1},\;v_{m,2},\;\rho_m^{(1+3\varepsilon)}$ are forced (i.e., one must happen right after another as there is no other improving flip).
    Note that no variable in $F_0\cup\cdots\cup F_{m-1}$ has flipped during this flip sequence.
    Moreover, among the variables in these lower layers, only $v_{m-1,1}$ is adjacent to a variable whose value has changed: its preference is controlled by the outputs of the gadget for $v_{m-1,1}$ in $F_m$.
    The external neighbor $u_{m,0}$ of $v_{m-1,8}$ has not changed, so every other variable in the lower layers retains its previous preference and hence remains non-improving.
    By \cref{lem:local}(a),
    \[
    v_{m-1,1}\leftarrow
    v_{m,2}\oplus v_{m,4}\oplus v_{m,6}
    =(1-\varepsilon)\oplus\varepsilon\oplus\varepsilon
    =1-\varepsilon.
    \]
    Since the lower layers are still in phase $\varepsilon$, we have $v_{m-1,1}=\varepsilon$.
    Thus $v_{m-1,1}$ is improving, and it is the only improving variable in $F_0\cup\cdots\cup F_{m-1}$.
    The external neighbor of $v_{m-1,8}$ is $u_{m,0}$, which still has value $\varepsilon$ and does not flip before $v_{m-1,8}$.
    Hence, the inductive hypothesis applies and yields exactly a linear extension of $D_{m-1}$ labeled by $\Xi_{m-1}^{\varepsilon}$.
    No variable in $F_m$ becomes improving before the final flip of $v_{m-1,8}$.
    Indeed, the only vertices of $F_m$ that are adjacent to the lower layers are $u_{m,0}$, $g_{m,3}$, and $g_{m,4}$.
    By \cref{lem:local}(a), the last two variables have their preferred values independent of the lower layers.
    Hence, the only possible new improving move in $F_m$ caused by flips in the lower layers is $u_{m,0}$, and the change required at that vertex cannot occur before $v_{m-1,8}$ flips.
    
    After the first recursive block, \cref{lem:local}(b) forces $u_{m,0},\;u_{m,1},\;u_{m,2},\;u_{m,3}$.
    At this point exactly $q_m$ and $v_{m,3}$ are improving: $q_m$ always prefers the value for $u_{m,3}$, while for $v_{m,3}$, since its two neighbors $v_{m,2}$ and $v_{m,4}$ disagree, $v_{m,3} \leftarrow u_{m,3}$.
    At this point, the sequence splits into two branches.
    The flip of $q_m$ affects only the preferences of $u_{m,0}$ and $v_{m,8}$, and in the present configuration it makes neither of them improving.
    On the other hand, starting with $v_{m,3}$ forces $v_{m,3},\;v_{m,4},\;\rho_m^{(2+3\varepsilon)}$.
    At that point the lower layers are in phase $1-\varepsilon$, the external neighbor $u_{m,0}$ of $v_{m-1,8}$ also has value $1-\varepsilon$, and the same XOR argument as above makes $v_{m-1,1}$ the only improving variable in the lower layers.
    Hence, the inductive hypothesis gives a linear extension of $D_{m-1}$ labeled by $\Xi_{m-1}^{1-\varepsilon}$.
    The variable $q_m$ is adjacent to none of the variables in this branch, so its flip may be interleaved arbitrarily with it.
    When the second recursive block is complete, the next occurrence of $u_{m,0}$ is improving if and only if $q_m$ has already flipped; if not, $q_m$ is the only improving variable and must flip first.
    Thus, the possible interleavings are exactly those prescribed by the subgraph of $D_m$ on $\{T_{m-1}+12, \dots, 2T_{m-1}+20\}$ (i.e., the vertices on the path between the first occurrence of $q$ and the second occurrence of $u_0$ in \cref{fig:232_dependency_graph}).
    
    \subparagraph{Second branching stage.}
    After the second occurrence of $u_{m,0}$, \cref{lem:local}(b) forces $u_{m,1}$ and then $u_{m,2}$.
    Now exactly $u_{m,3}$ and $v_{m,5}$ are improving: $u_{m,3}$, because $u_{m,3} \leftarrow u_{m,2}$ by \cref{lem:local}(b); and $v_{m,5}$, because its two neighbors $v_{m,4}$ and $v_{m,6}$ now disagree and hence $v_{m,5} \leftarrow \bar{u}_{m,2}$ by \cref{lem:local}(d).
    Following a similar argument as before, we obtain two resulting branches that start with $(u_{m,3},q_m)$
    and $(v_{m,5},v_{m,6},\rho_m^{(3+3\varepsilon)})$.
    After the flips $v_{m,5}, v_{m,6}, \rho_m^{(3+3\varepsilon)}$ have occurred, the lower layers are in phase $\varepsilon$ and the external neighbor $u_{m,0}$ of $v_{m-1,8}$ also has value $\varepsilon$.
    By the same argument as in the first recursive block, $v_{m-1,1}$ is then the only improving variable in $F_0\cup\cdots\cup F_{m-1}$.
    Hence, the inductive hypothesis applies and continues this branch with a linear extension of $D_{m-1}$ labeled by $\Xi_{m-1}^{\varepsilon}$.
    The two branches are independent and hence may be interleaved arbitrarily, subject only to the displayed order within each branch and to the internal order of the recursive block.
    After the third recursive block, the next occurrence of $u_{m,0}$ is improving exactly when the branch $(u_{m,3},q_m)$ has been completed; otherwise its unfinished suffix is forced first.
    Hence, both branches must be completed before the next flip of $u_{m,0}$, as encoded in $D_m$.
    
    \subparagraph{Final branching stage.}
    After the third occurrence of $u_{m,0}$, the flip of $u_{m,1}$ is forced.
    Then exactly $u_{m,2}$ and $v_{m,7}$ are improving.
    The resulting independent branches are $(u_{m,2},u_{m,3},q_m)$ and $(v_{m,7})$.
    They may again be interleaved arbitrarily, respecting the order in the first branch.
    By \cref{lem:local}(f), $v_{m,8}$ becomes improving exactly after both branches have completed, and it is then the only improving variable.
    After $v_{m,8}$ flips, no variable in $F_0\cup\cdots\cup F_m$ is improving.
    
    The forced portions of the sequence give the consecutive top-layer dependencies in $D_m$, while the three applications of the inductive hypothesis give its three copies of $D_{m-1}$.
    Finally, the dependencies between branches account for exactly the six additional edges listed in the definition of $D_m$.
    Hence, the possible improving flip sequences are precisely the strings obtained by labeling the linear extensions of $D_m$ with $\Xi_m^\varepsilon$.
    Finally, \cref{obs:phase-parity} shows that after all $T_m$ flips, the layers $F_0,\dots,F_m$ are in phase $1-\varepsilon$.
\end{proof}

\begin{lemma}
\label{lem:232-exp}
    The map $(s_1, \dots, s_{T_n}) \longmapsto (\omega_{s_1}, \dots, \omega_{s_{T_n}})$ is a bijection between all linear extensions of $D_n$ and all maximal improving flip sequences of $\mathcal{I}_n$ starting from the initial solution $\sigma_0$.
\end{lemma}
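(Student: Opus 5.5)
The plan is to apply \cref{lem:activation} with $m=n$ and $\varepsilon=0$, and to check that, starting from $\sigma_0$, the variables outside $F_0,\dots,F_n$ never become improving. Those variables are $v'_0,v'_1,v'_2$.

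First I verify the hypotheses at $\sigma_0$. All layers are in phase $0$. For the improving variables in $F_0\cup\dots\cup F_n$, I use \cref{lem:local}. Every gadget variable in the $f$, $g$, $u$ and $q$ families follows a function that evaluates to $0$ on the all-zero assignment. In particular, $u_{i,0}$ follows the common value $0$ of $v_{i-1,8}$ and $u_{i,1}$. The following-relations for $v_{i,j}$ with $j\ge2$ also give $0$: for odd $j$ the two path neighbours agree at $0$. Likewise $v_{i,8}$ follows $0$, because $v_{i,7}=u_{i+1,0}=0$, and for $i=n$ because $v_{n,7}=v'_0=0$. The variables $v_{i,1}$ with $i<n$ follow $g_{i+1,3}\vee g_{i+1,4}=0$, and $v_{0,8}\leftarrow v_{0,1}=0$. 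Finally, $v_{n,1}\leftarrow\bar v'_1=1$, so $v_{n,1}$ is improving, and it is the only improving variable in these layers.

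Next come the outside variables. By \cref{lem:local}(e), $v'_0$ always strictly prefers $0$, so it never flips. The external neighbour of $v_{n,8}$ therefore stays at value $0=\varepsilon$ throughout, as \cref{lem:activation} requires. Also by \cref{lem:local}(e), $v'_1\leftarrow v'_2$ and $v'_2\leftarrow v'_1$ hold for every assignment. Since both start at $0$, neither is ever improving: whichever flips first would have to disagree with the other's current value, and they are equal. So $v'_1$ and $v'_2$ remain $0$ forever. The outside neighbours of the layers are exactly $v'_0$ and $v'_1$, and both stay fixed for the whole run, not merely until $v_{n,8}$ flips.

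Now \cref{lem:activation} says that the maximal improving flip sequences restricted to $F_0,\dots,F_n$ are exactly the labelings of linear extensions of $D_n$ by $\Omega_n$. No outside flips exist, so these are the full sequences. At their end, no variable in the layers is improving (the lemma's maximality, with no further move possible in these layers), and $v'_0,v'_1,v'_2$ are not improving either. So each such sequence is a genuinely maximal improving sequence of $\mathcal I_n$. Conversely, every maximal improving sequence from $\sigma_0$ flips only layer variables, so it is one of these.

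The map sends a linear extension $(s_1,\dots,s_{T_n})$ to its labeling $(\omega_{s_1},\dots,\omega_{s_{T_n}})$, and the previous paragraph shows it is surjective. The main obstacle is injectivity. Repeated variables such as $u_{n,j}$, $q_n$ and the lower-layer copies could in principle make two different linear extensions yield the same string.

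I would argue injectivity by reconstruction. Given a flip string, process it left to right. At each flip of a variable $x$, assign the position of the earliest still-unused occurrence of $x$ in $\Omega_n$. This is forced, because the occurrences of a single variable form a chain in $D_n$. To see that, note that consecutive occurrences of the same variable are linked by directed paths in $D_n$. This follows from the recursive structure: the three copies of $D_{n-1}$ are chained through $u_{n,0}$; the $q_n\to u_{n,0}$ edges order the three $\alpha_n$ blocks; and the $\rho$ blocks lie on the main forced path. By induction, the same holds inside each copy of $D_{n-1}$. Hence the $k$-th flip of $x$ in any linear extension must be its $k$-th occurrence, the inverse is well defined, and the map is a bijection.
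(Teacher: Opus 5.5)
Your proposal is correct and follows the paper's proof. You check via \cref{lem:local} that $v_{n,1}$ is the only improving variable at $\sigma_0$, apply \cref{lem:activation} with $m=n$ and $\varepsilon=0$, and use \cref{lem:local}(e) to show that $v'_0,v'_1,v'_2$ stay $0$ forever. Your extra injectivity argument fills in a step the paper leaves implicit: the occurrences of each variable form a chain in $D_n$, so a labeled flip string determines its linear extension.
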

\begin{proof}
    At $\sigma_0$, by \cref{lem:local}, it is easy to check that only $v_{n,1}$ is improving.
    Moreover, the external second input of $v_{n,8}$ is $v'_0=0$ and, by \cref{lem:local}(e), it remains $0$ forever.
    Hence, the hypotheses of \cref{lem:activation} hold with $m=n$ and $\varepsilon=0$.
    It follows that all maximal improving flip sequences restricted to $F_0, \dots, F_n$ are exactly the linear extensions of $D_n$ labeled by the entries of $\Omega_n$.

    By \cref{lem:local}(e), $v'_1$ and $v'_2$ follow each other, while $v'_0$ strictly prefers $0$.
    Hence, they remain $0$ forever.
    As a result, the maximal improving flip sequences restricted to $F_0, \dots, F_n$ are exactly the maximal improving flip sequences for $\mathcal{I}_n$.
    The lemma then follows.
\end{proof}

\begin{corollary}
\label{cor:232-all-exp}
    All maximal improving flip sequences of $\mathcal{I}_n$ from $\sigma_0$ have length $T_n = \frac{39\cdot3^n-35}{2}$ and end at the same local optimum, namely the assignment in which every layer $F_i$ is in phase $1$ and $v'_0=v'_1=v'_2=0$.
\end{corollary}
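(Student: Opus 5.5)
The corollary follows from \cref{lem:232-exp} together with the phase bookkeeping of \cref{lem:activation} and \cref{obs:phase-parity}. By \cref{lem:232-exp}, the maximal improving flip sequences from $\sigma_0$ are in bijection with the linear extensions of $D_n$, each labeled by $\Omega_n$. A linear extension of $D_n$ is an ordering of all vertices $[T_n]$, so every maximal improving sequence has length exactly $T_n$. The closed form $T_n=\frac{39\cdot 3^n-35}{2}$ is the preceding observation, obtained by solving $T_0=2$, $T_n=3T_{n-1}+35$.

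For the common endpoint, I would use the multiset of flipped variables rather than their order. Each linear extension, relabeled by $\Omega_n$, uses every entry $\omega_1,\dots,\omega_{T_n}$ exactly once. Hence every maximal sequence flips each variable the same number of times, and the final assignment depends only on these flip counts and on $\sigma_0$. Equivalently, I would quote the last sentence of \cref{lem:activation} with $m=n$ and $\varepsilon=0$ (its hypotheses were verified in the proof of \cref{lem:232-exp}). This says that after all $T_n$ flips, $F_0,\dots,F_n$ are in phase $1$, so each $F_i$ is assigned according to $\tau_i$.

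For the three extra variables, the proof of \cref{lem:232-exp} shows, via \cref{lem:local}(e), that $v'_0,v'_1,v'_2$ are never improving and so remain $0$. The final assignment is therefore the stated one, and it is a local optimum because the sequences are maximal.

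The only delicate point is the claim that the external variables never flip. Since $v'_1$ and $v'_2$ follow each other, they stay at their common value $0$ no matter how $v_{n,1}$ changes. This is the step to state carefully, but it is already argued in \cref{lem:232-exp}, so I expect no real obstacle. Exponentiality in the sense of the definition would be checked separately for \cref{thm:232}, using the $O(n)$ vertices and $O(n)$-bit weights.
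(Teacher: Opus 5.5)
Your proposal is correct and follows the same derivation the paper intends; the paper gives no separate proof and reads the corollary off \cref{lem:232-exp}, \cref{lem:activation} with $m=n$, $\varepsilon=0$, and \cref{lem:local}(e). Your argument matches this: the length $T_n$ comes from the linear extensions of $D_n$, the final phase $1$ comes from \cref{lem:activation} (or equivalently from the flip-parity argument of \cref{obs:phase-parity}), and the extra variables stay at $v'_0=v'_1=v'_2=0$.
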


Consequently, we obtain the desired all-exp property.

\thmtwothreetwo*
\begin{proof}
    The graph for $\mathcal{I}_n$ has maximum degree three by construction.
    Hence, by \cref{lem:mca-form}, the constructed instance is a $(2,3,2)$-MCA instance.
    Note that the instance has $O(n)$ variables and each weight can be encoded by $O(n)$ bits.
    By \cref{cor:232-all-exp}, every maximal improving sequence has length $\Theta(3^n)$.
    The theorem then follows.
\end{proof}

\section*{AI methodology}
We obtained our three main results, \cref{thm:2opt,thm:322,thm:232}, without the use of AI tools. We used ChatGPT 5.6 Sol to generate Python code for optimizing the weights and sizes of the constructions in our proofs in Sections~\ref{sec:322} and~\ref{sec:2opt}; this yielded slightly simpler constructions but did not affect the statements of the results. 
We also used Claude Fable 5.1 to optimize the construction in Section~\ref{sec:232}.
All proofs were carried out by the authors, who take full responsibility for their correctness.

\bibliographystyle{alphaurl}
\bibliography{refs}

@article{Ben1992,
  author = {Jon Jouis Bentley},
  title = {Fast Algorithms for Geometric Traveling Salesman Problems},
  journal = {ORSA Journal on Computing},
  volume = {4},
  number = {4},
  pages = {387--411},
  year = {1992},
  doi = {10.1287/ijoc.4.4.387}
}

@article{CKT1999,
  author = {Chandra, Barun and Karloff, Howard and Tovey, Craig},
  title = {New Results on the Old $k$-opt Algorithm for the Traveling Salesman Problem},
  journal = {SIAM Journal on Computing},
  volume = {28},
  number = {6},
  pages = {1998--2029},
  year = {1999},
  doi = {10.1137/S0097539793251244}
}

@article{Cro1958,
  author = {G. A. Croes},
  title = {A Method for Solving Traveling-Salesman Problems},
  journal = {Operations Research},
  volume = {6},
  number = {6},
  pages = {791--812},
  year = {1958},
  doi = {10.1287/opre.6.6.791}
}

@article{DM2013,
  author  = {Dominic Dumrauf and Burkhard Monien},
  title   = {On the {$\cal PLS$}-complexity of maximum constraint assignment},
  journal = {Theoretical Computer Science},
  volume  = {469},
  pages   = {24--52},
  year    = {2013},
  doi     = {10.1016/j.tcs.2012.10.044}
}

@techreport{HH2023,
  author = {Hoang, Hung P. and Hougardy, Stefan},
  title = {On the {PLS}-completeness of {TSP}/$k$-{O}pt},
  institution = {Research Institute for Discrete Mathematics, University of Bonn},
  year = {2023},
  number = {No: 231275}
}

@inproceedings{HHH2024,
  author={Sophia Heimann and Hung P. Hoang and Stefan Hougardy},
     TITLE = {The $k$-{O}pt algorithm for the {T}raveling {S}alesman {P}roblem
              has exponential running time for {$k\geq5$}},
 BOOKTITLE = {51st {I}nternational {C}olloquium on {A}utomata, {L}anguages,
              and {P}rogramming (ICALP 2024)},
    series = {Leibniz International Proceedings in Informatics (LIPIcs)},
    VOLUME = {297},
     PAGES = {84:1--84:18},
      YEAR = {2024},
       DOI = {10.4230/lipics.icalp.2024.84}
}

@article{HHH2026b,
  author={Sophia Heimann and Hung P. Hoang and Stefan Hougardy},
  title = {Exponential Local Search in Bounded-Occurrence {MAX-CSP}}, 
  journal = {arXiv:2610.16898v1 [cs.DS]},
  year = {2026},
  doi = {10.48550/arXiv.2610.16898}
}

@inproceedings{HHH2026,
  author = {Sophia Heimann and Hung P. Hoang and Stefan Hougardy},
  title = {A Near-Complete Resolution of the Exponential-Time Complexity of $k$-opt for the Traveling Salesman Problem},
  booktitle = {Proceedings of the 2026 Annual ACM-SIAM Symposium on Discrete Algorithms (SODA 2026)},
  pages = {5861--5885},
  year = {2026},
  doi = {10.1137/1.9781611978971.208},
}

@incollection {JM1997,
    AUTHOR = {Johnson, David S. and McGeoch, Lyle A.},
     TITLE = {The traveling salesman problem: a case study},
 BOOKTITLE = {Local Search in Combinatorial Optimization},
    EDITOR = {Emile Aarts and Jan Karel Lenstra},
     PAGES = {215--310},
   chapter = {8},
 PUBLISHER = {Wiley},
      YEAR = {1997}
}

@incollection{JM2002,
  title = {Experimental analysis of heuristics for the {STSP}},
  author = {David S. Johnson  and Lyle A. McGeoch},
  booktitle = {The traveling salesman problem and its variations},
  pages = {369--443},
  chapter = {9},
  year = {2002},
  editor = {Gregory Gutin and Abraham P. Punnen},
  publisher = {Kluwer Academic Publishers}
}

@article{JPY1988,
  title = {How easy is local search?},
  author = {Johnson, David S. and Papadimitriou, Christos H. and Yannakakis, Mihalis},
  journal = {Journal of Computer and System Sciences},
  volume = {37},
  number = {1},
  pages = {79--100},
  year = {1988},
  publisher = {Academic Press},
  doi = {10.1016/0022-0000(88)90046-3}
}

@phdthesis{Kaz2020,
    author = {Artem Kaznatcheev},
    title = {Algorithmic Biology of Evolution and Ecology},
    school = {Oriel College, University of Oxford},
    year = {2020}
}

@article{KCJ2020,
  author  = {Artem Kaznatcheev and David A. Cohen and Peter G. Jeavons},
  title   = {Representing Fitness Landscapes by Valued Constraints
             to Understand the Complexity of Local Search},
  journal = {Journal of Artificial Intelligence Research},
  volume  = {69},
  pages   = {1077--1102},
  year    = {2020},
  doi     = {10.1613/jair.1.12156}
}

@inproceedings{Kre1989,
  title = {Structure in locally optimal solutions},
  author = {Mark W. Krentel},
  booktitle = {30th Annual Symposium on Foundations of Computer Science (FOCS 1989)},
  pages = {216--221},
  year = {1989},
  organization = {IEEE Computer Society}
}

@inproceedings{MR4967954,
    AUTHOR = {Kaznatcheev, Artem and Vazquez Alferez, Sofia},
     TITLE = {Greed is slow on sparse graphs of oriented valued constraints},
 BOOKTITLE = {31st {I}nternational {C}onference on {P}rinciples and
              {P}ractice of {C}onstraint {P}rogramming (CP 2025)},
    series = {Leibniz International Proceedings in Informatics (LIPIcs)},
    VOLUME = {340},
     pages =	{18:1--18:13},
    editor = {de la Banda, Maria Garcia},
      YEAR = {2025},
       DOI = {10.4230/lipics.cp.2025.18},
}

@unpublished{Lue1975,
    author = {George Lueker},
    title = {Notes on two-changes},
    year = {1975},
    note = {Unpublished manuscript, Princeton University}
}

@article {MS2023,
  author = {Michel, Lukas and Scott, Alex},
  title = {Superpolynomial smoothed complexity of 3-{FLIP} in {L}ocal {M}ax-{C}ut}, 
  JOURNAL = {arXiv:2310.19594v3 [cs.DS]},
  year = {2023},
  eprint = {2310.19594}
}

@InProceedings {MT2010,
    AUTHOR = {Monien, Burkhard and Tscheuschner, Tobias},
    editor = {Calamoneri, Tiziana and Diaz, Josep},
     TITLE = {On the power of nodes of degree four in the local max-cut problem},
 BOOKTITLE = {Algorithms and Complexity},
    SERIES = {Lecture Notes in Comput. Sci.},
    VOLUME = {6078},
     PAGES = {264--275},
 PUBLISHER = {Springer Berlin Heidelberg},
      YEAR = {2010},
       DOI = {10.1007/978-3-642-13073-1\_24},
}

@article {MR2571908,
    AUTHOR = {Scott, Alexander D. and Sorkin, Gregory B.},
     TITLE = {Polynomial constraint satisfaction problems, graph bisection,
              and the {I}sing partition function},
   JOURNAL = {ACM Trans. Algorithms},
  FJOURNAL = {ACM Transactions on Algorithms},
    VOLUME = {5},
      YEAR = {2009},
    NUMBER = {4},
     PAGES = {Art. 45, 27},
       DOI = {10.1145/1597036.1597049}
}

@article{SY1991,
  author={Sch{\"a}ffer, Alejandro A. and Yannakakis, Mihalis},
  title={Simple local search problems that are hard to solve},
  journal={SIAM Journal on Computing},
  volume={20},
  number={1},
  pages={56--87},
  year={1991},
  publisher={SIAM},
  doi={10.1137/0220004}
}

@article{HZZ2020,
  author = {Stefan Hougardy and Fabian Zaiser and Xianghui Zhong},
  title = {The approximation ratio of the 2-{O}pt {H}euristic for the metric {T}raveling {S}alesman {P}roblem},
  journal = {Operations Research Letters},
  volume = {48},
  number = {4},
  pages = {401--404},
  year = {2020},
  doi = {10.1016/j.orl.2020.05.007}
}

@inproceedings{Zho2020,
  title={On the Approximation Ratio of the $k$-Opt and {L}in-{K}ernighan Algorithm for Metric and Graph {TSP}},
  author={Zhong, Xianghui},
  booktitle={28th Annual European Symposium on Algorithms (ESA 2020)},
  pages={83:1--83:13},
  year={2020},
  editor={Fabrizio Grandoni and Grzegorz Herman and Peter Sanders},
  doi={10.4230/LIPIcs.ESA.2020.83},
  organization={Schloss Dagstuhl--Leibniz-Zentrum f{\"u}r Informatik}
}
\end{document}